\definecolor{DarkBlue}{rgb}{0.1,0.1,0.5}
\definecolor{DarkGreen}{rgb}{0.1,0.5,0.1}
\newcommand{\extra}[1]{}
\newtheorem{theorem}{Theorem}
\newtheorem{lemma}[theorem]{Lemma}
\newtheorem{claim}{Claim}
\newtheorem*{remark2}{Remark}
\def\squareforqed{\hbox{\rlap{$\sqcap$}$\sqcup$}}
\def\qed{\ifmmode\squareforqed\else{\unskip\nobreak\hfil
\penalty50\hskip1em\null\nobreak\hfil\squareforqed
\parfillskip=0pt\finalhyphendemerits=0\endgraf}\fi}
\def\endenv{\ifmmode\;\else{\unskip\nobreak\hfil
\penalty50\hskip1em\null\nobreak\hfil\;
\parfillskip=0pt\finalhyphendemerits=0\endgraf}\fi}
\renewenvironment{proof}{\noindent \textbf{{Proof~} }}{\qed\medskip}
\newenvironment{proof+}[1]{\noindent \textbf{{Proof #1~} }}{\qed\medskip}
\mathchardef\ordinarycolon\mathcode`\:
\def\vcentcolon{\mathrel{\mathop\ordinarycolon}}
\newcommand{\nc}{\newcommand}
\DeclareMathOperator*{\argmin}{arg\,min}
\DeclareMathOperator*{\argmax}{arg\,max}
\nc{\barA}{\overline{A}}
\nc{\barB}{\overline{B}}
\nc{\barC}{\overline{C}}
\nc{\barD}{\overline{D}}
\nc{\barR}{\overline{R}}
\nc{\barX}{\overline{X}}
\nc{\barY}{\overline{Y}}
\nc{\barU}{\overline{U}}
\newcommand{\EF}{\mathrm{EF}}
\newcommand{\NSW}{\mathrm{NSW}}
\newcommand{\SW}{\mathrm{SW}}
\newcommand{\NP}{\rm{NP}}
\newcommand{\APX}{\rm{APX}}
\newcommand{\ALG}{\textsc{Alg}}
\newcommand{\JISP}{\textsc{JISP}}
\newcommand{\I}{\mathcal{I}}
\newcommand{\C}{\mathcal{C}}
\begin{document}

\title{{\bfseries Fair and Efficient Cake Division with Connected Pieces}}
\author{Eshwar Ram Arunachaleswaran\thanks{Indian Institute of Science. {\tt eshwarram.arunachaleswaran@gmail.com}} \quad Siddharth Barman\thanks{Indian Institute of Science. {\tt barman@iisc.ac.in}} \quad Rachitesh Kumar\thanks{Indian Institute of Science. {\tt rachiteshkumar@gmail.com}} \quad Nidhi Rathi\thanks{Indian Institute of Science. {\tt nidhirathi@iisc.ac.in}}}

\date{}
\maketitle
\begin{abstract}
The classic cake-cutting problem provides a model for addressing fair and efficient allocation of a divisible, heterogeneous resource (metaphorically, the cake) among agents with distinct preferences. Focusing on a standard formulation of cake cutting, in which each agent must receive a contiguous piece of the cake, this work establishes algorithmic and hardness results for multiple fairness/efficiency measures. 

First, we consider the well-studied notion of envy-freeness and develop an efficient algorithm that finds a cake division (with connected pieces) wherein the envy is multiplicatively within a factor of $2 + o(1)$. The same algorithm in fact achieves an approximation ratio of $3 + o(1)$ for the problem of finding cake divisions with as large a Nash social welfare ($\NSW$) as possible. $\NSW$ is another standard measure of fairness and this work also establishes a connection between envy-freeness and $\NSW$: approximately envy-free cake divisions (with connected pieces) always have near-optimal Nash social welfare. Furthermore, we  develop an approximation algorithm for maximizing the $\rho$-mean welfare--this unifying objective, with different values of $\rho$, interpolates between notions of fairness ($\NSW$) and efficiency (average social welfare). Finally, we complement these algorithmic results by proving that maximizing $\NSW$ (and, in general, the $\rho$-mean welfare) is $\APX$-hard in the cake-division context. 
\end{abstract}

\section{Introduction}
Cake cutting is a fundamental problem in the fair-division literature. It models the task of allocating a divisible, heterogeneous resource among agents with distinct preferences, but equal entitlements. Indeed, the classic work of Steinhaus, Banach, and Knaster \cite{S48problem}---which lays the mathematical foundations of fair division---addresses cake cutting. Over the years, this problem has not only inspired the development of many interesting mathematical connections and algorithms (see, e.g.,~\cite{robertson1998cake}), but has also been found relevant in real-world settings, such as border negotiations and divorce settlements \cite{brams1996fair}. 
Implementations of cake-division methods on platforms (such as Adjusted Winner~\cite{AdjustedWinner}) further substantiate the practical relevance of this framework.  

Here, the cake is represented by the segment $[0,1]$ and the cardinal preferences of the agents are specified via valuation functions over (the intervals of) the cake. We will throughout focus on the setting wherein the cake needs to be partitioned into exactly $n$ connected pieces (intervals) and each of the $n$ agents receives one of these intervals. This is a well-studied formulation of cake cutting and is motivated by applications wherein connectivity (across each allocated part of the resource) is a crucial requirement~\cite{brams1996fair}; consider, e.g., land division, spectrum allocation, and non-preemptive interval scheduling.\footnote{The other variant of the problem, wherein agents can receive disconnected pieces, has also been studied in prior work; see, e.g., \cite{procaccia2015cake} and references therein. Related results on this variant are discussed at the end of this section.}

Achieving fairness and efficiency are two pivotal goals in this resource-allocation context~\cite{brams1996fair, robertson1998cake}. The current work contributes to these objectives, with a focus on computational aspects of cake cutting. The fairness and efficiency objectives addressed in this work are detailed next. 

A quintessential notion of fairness is envy-freeness: a division is said to be envy-free iff, under it, every agent prefers its own piece over that of any other agent~\cite{foley1967resource}. The well-known result of Su~\cite{edward1999rental} (see also \cite{stromquist1980cut} and \cite{simmons1980private}) shows that, under mild assumptions on agents' valuations, envy-free cake divisions with connected pieces always exist. However, this existential result stands without an algorithmic counterpart;  Stromquist~\cite{stromquist2008envy} has shown that such an envy-free solution cannot be computed in bounded time, if the valuations are specified by an adaptive adversary.\footnote{Notably, for the complementary problem of finding envy-free solutions with \emph{noncontiguous pieces}, the work of Aziz and Mackinzie \cite{aziz2016discrete} provides a hyper-exponential time algorithm.} This negative result leads one to study relaxations/approximation guarantees, such as the ones considered in this work.   

It is relevant to note that, while envy-freeness provides fairness guarantees on an individual level, in and of itself, this notion is not concerned with overall efficiency. By contrast, the concept of social welfare quantifies efficiency achieved by the agents as a whole. Social (utilitarian) welfare is defined as the sum of the values that the agents have for their own pieces. For this welfare objective, we establish (multiplicative) approximation guarantees in the cake-cutting setup.\footnote{Here, without loss of generality and to connect social welfare with the other objectives studied in this work, we equate social welfare with the average (arithmetic mean) of the values obtained by the agents.} %The work of Aumann et al.~\cite{} studies this welfare objective in the cake-cutting context.  

A balance between the  utilitarian (Benthamite) objective and the egalitarian (Rawlsian/max-min) welfare is achieved through Nash social welfare ($\NSW$), which is defined as the geometric mean of the agents' values \cite{nash1950bargaining, kaneko1979nash}. This welfare function has traditionally been studied for homogeneous (divisible) goods, where it is known to possess strong fairness (envy-freeness) and (Pareto) efficiency properties \cite{varian1973equity}. The appeal of Nash social welfare continues to hold in the case of indivisible goods: Caragiannis et al. \cite{caragiannis2016unreasonable} have shown that (under additive valuations) Nash-optimal allocations of discrete goods satisfy a natural relaxation of envy-freeness and are Pareto efficient. The relevance of Nash social welfare, as a measure of fairness, motivates its study in the cake-cutting setup as well. Towards this end, this work develops algorithmic and hardness results for the problem of finding cake divisions that maximize Nash social welfare. 

Generalized (H\"{o}lder) means provide a unified framework to address fairness and efficiency objectives. Specifically, with exponent parameter $\rho$, the $\rho$-mean welfare,  $\mathrm{M}_\rho(\cdot)$, is defined as $\left( \frac{1}{n} \sum_i v_i^\rho \right)^{1/\rho}$; here $v_i$s denote the valuations obtained by the agents in an allocation. We address $\rho$-mean welfare for $\rho \in (0,1]$. In particular, this parameter range captures both Nash social welfare and social welfare: $\rho=1$ gives us the arithmetic mean (average social welfare) and, as $\rho$ tends to zero, the limit of $\mathrm{M}_\rho$ is equal to the geometric mean (the Nash social welfare).\footnote{Note that, for each $\rho$, the $\rho$-mean welfare is ordinally equivalent to CES (constant elasticity of substitution) welfare functions that have the form $\left(\sum_i v_i^\rho\right)^{1/\rho}$.} 

This paper addresses all of the above-mentioned notions of fairness and efficiency. In particular, we develop approximation algorithms for finding cake divisions, with connected pieces, under the following objectives: (i) multiplicatively bounding envy, (ii) maximizing Nash social welfare, and (iii) maximizing $\rho$-mean welfare, for $\rho \in (0,1]$. We complement these approximation guarantees by establishing hardness results for Nash social welfare and $\rho$-mean welfare maximization. Our contributions are summarized in the following list.
%\footnote{Our algorithm for envy-freeness finds allocations in which every agent $a$ values its own bundle at least $1/3$ times $a$'s value for any other agent's bundle.}
\begin{itemize}
\item {\bf Envy-Freeness:} We develop an efficient algorithm that finds a cake division (i.e., a partition of the cake into $n$ connected pieces along with a one-to-one assignment of these pieces among the $n$ agents) such that for every agent $a$ the value of its piece is at least $1/(2 + o(1))$ times $a$'s value for any other agent's piece (Theorem~\ref{theorem:mod2ef}). Our algorithm for finding an approximately envy-free allocation is rather direct (see Section \ref{section:ef-nsw} for a description). The explainablity/simplicity of this algorithm is a notable feature, since it makes the developed method amenable for realistic implementations, such as the ones found on websites like Spliddit~\cite{goldman2014spliddit}.

\item {\bf Nash Social Welfare:} Our algorithm for finding approximately envy-free divisions also provides a polynomial-time $\left(3 + o(1)\right)$-approximation algorithm for the Nash social welfare maximization problem (Theorem~\ref{theorem:nsw}). 

We further show that approximately envy-free cake divisions (with connected pieces) always have near-optimal Nash social welfare: if in a cake division the envy is (multiplicatively) bounded within a factor of $\alpha$, then the Nash social welfare of the division is at least $\frac{1}{2 \alpha}$ times the optimal (Theorem~\ref{theorem:ef-nsw}).\footnote{In comparison to this generic connection between envy-freeness and Nash social welfare, the cake divisions computed specifically by our algorithm admit a stronger guarantee--they essentially achieve an approximation bound of three for both envy and Nash social welfare.} 

Connections between envy-freeness and Nash social welfare have been established in other fair-division settings: addressing fair allocation of homogeneous,\footnote{Hence, such goods do not correspond to a heterogeneous cake.} divisible goods under additive valuations, the work of Varian \cite{varian1973equity} shows that there always exists an allocation which is both envy-free and Nash optimal.\footnote{In fact, in the homogenous-goods case, such an allocation can be efficiently computed by solving the convex program of Eisenberg and Gale~\cite{eisenberg1959consensus}. By contrast, finding a Nash optimal allocation is the cake-division  setting is computationally hard.} Also, Caragiannis et al.~\cite{caragiannis2016unreasonable} have established that, when dividing indivisible goods, allocations that maximize Nash social welfare satisfy relaxations of envy-freeness. Our result (Theorem~\ref{theorem:ef-nsw}) shows that analogous connections hold in the cake-division framework as well.   

We complement the algorithmic result for Nash social welfare by showing that it is {\rm APX}-hard to find a Nash optimal cake division with connected pieces (Theorem \ref{theorem:nswhardness}). This hardness result implies, in particular, that the problem of maximizing Nash social welfare does not admit a polynomial-time approximation scheme (PTAS), unless {\rm P} $=$ {\rm NP}. %; here, the best one can hope for is a constant-factor approximation. 

\item {\bf Generalized-Mean Welfare:} As mentioned previously, generalized means---$\mathrm{M}_\rho(\cdot)$ with exponent parameter $\rho \in (0,1]$---is a family of functions which captures both Nash social welfare and (average) social welfare. For this unified objective, we develop a $\left(2 + o(1) \right)^{1/ \rho}$-approximation algorithm that runs in time $n^{\mathcal{O}(1/\rho)}$; here $n$ is the number of agents (Theorem~\ref{theorem:interval-approx}).  Hence, for average social welfare (i.e., the $\rho=1$ case) we obtain a polynomial-time $\left(2 + o(1) \right)$-approximation algorithm. We note that this instantiation improves upon the $8$-approximation guarantee obtained specifically for social welfare in the work of Aumann et al. \cite{aumann2013computing}. 

Our algorithm, for maximizing $\rho$-mean welfare, relies on ``discretizing'' the given cake-division instance to obtain an interval-scheduling problem, called the Job Interval Selection Problem ($\JISP$). Then, we invoke the $2$-approximation algorithm of Bar-Noy et al. \cite{bar2001approximating} for $\JISP$ to obtain the stated approximation guarantee (Lemma~\ref{lemma:discretization} and Theorem~\ref{theorem:interval-approx}). 

We also establish that, for any fixed $\rho \in (0,1]$, finding cake divisions that maximize $\rho$-mean welfare is $\APX$-hard. This general result, though, holds for cake-division instances wherein the valuations are not necessarily normalized.\footnote{Agents' valuations are said to be normalized iff, for every agent, the value of the entire cake is equal to one.} For the social welfare case (i.e., the $\rho = 1$ setting), our techniques can be adopted to establish $\APX$-hardness even under normalized valuations. Hence, we can rule out a PTAS for the social welfare maximization problem. This strengthens the inapproximability result of Aumann et al.~\cite{aumann2013computing}, which showed that efficient (in the social-welfare sense) cake cutting does not admit a fully polynomial-time approximation scheme (FPTAS).%\footnote{$\APX$-hardness specifically for average social welfare (i.e., the $\rho =1$ case) can be obtained by instantiating the inapproximability result of Spieksma~\cite{}. The present work encompasses this result and establishes hardness for all $\rho \in (0,1]$.} 

Prior work has also studied the impact of envy-freeness on social welfare in the cake-cutting context. Specifically, Caragiannis et al.~\cite{caragiannis2009efficiency} along with Aumann and Dombb~\cite{aumann2010efficiency} establish bounds for price of envy-freeness, which is defined as the ratio between the social welfare of an optimal allocation and the social welfare of the best envy-free allocation. We extend this framework to $\rho$-mean welfare and show that any (approximately) envy-free allocation provides an $\mathcal{O}(2^{\frac{1}{\rho}} \ n^{\frac{\rho}{\rho+1}})$-approximate solution to maximizing $\rho$-mean welfare, for $\rho \in (0,1]$ (Theorem~\ref{theorem:priceEF} in Section~\ref{section:priceEF}). We note that our upper bound on the price of envy-freeness for the $\rho=1$ instantiation (i.e., for social welfare) is essentially tight. This follows from considering the result of Aumann and Dombb~\cite{aumann2010efficiency}, which establishes a ${\Theta}(\sqrt{n})$ bound on the price of envy-freeness, in the social-welfare context.

\end{itemize}
\noindent % For instance, the algorithm of Dubins and Spanier efficiently finds a proportional allocation
\subsection{Additional Related Work} Another standard notion of fairness is proportionality. This criterion requires that every agent $a$ receives a piece of value at least $1/n$ times $a$'s value for the entire cake; here $n$ is the total number of agents participating in the cake-cutting exercise. In contrast to envy-freeness, proportionality is an algorithmically tractable solution concept; see~\cite{procaccia2015cake} and references therein. Though, given that an (approximately) envy-free allocation is also (approximately) proportional,\footnote{We conform to the standard assumption that the valuations of the agents over the cake are sigma additive.} approximation guarantees for envy-freeness (such as the ones developed in this work) give us matching bounds for proportionality as well.  

With respect to maximizing social welfare, the result closest to ours is that of Aumann et al.~\cite{aumann2013computing}. We reiterate that the current work improves upon the algorithmic and hardness bounds obtained in~\cite{aumann2013computing}. Bei et al. \cite{bei2012optimal} develop approximation results for maximizing social welfare with proportionality as a constraint. By contrast, we focus on social welfare by itself.  

Deng et al.~\cite{deng2012algorithmic} present an algorithm that finds an additive approximation to an envy-free cake division with connected pieces. This algorithm, however, runs in exponential (in the number of agents) time. 

If disconnected pieces can be assigned to  each agent, then an additive approximation to envy-free divisions can be computed efficiently,~see, e.g., \cite{lipton2004approximately} and the reentrant version of the last diminisher protocol in \cite{brams1996fair}. Also, for the disconnected-pieces variant and under specific valuations types, Aziz and Ye \cite{aziz2014cake} present an efficient algorithm for maximizing Nash social welfare. The results of Kurokawa et al.~\cite{kurokawa2013cut} and Cohler et al.~\cite{cohler2011optimal} address the noncontiguous-pieces setup as well. In particular, for a class of valuations, Cohler et al. \cite{cohler2011optimal} develop an algorithm for maximizing social welfare subject to the envy-freeness constraint. Our results are incomparable with all of these prior works, since we solely focus on allocation of connected pieces. 

A relaxation of envy-free division entails free disposal. The idea here is to achieve envy-freeness at the cost of discarding some parts of the cake. Aziz and Mackenzie~\cite{aziz2016discrete} (see also~\cite{segal2015waste}) develop an \emph{exponential-time} algorithm that finds envy-free divisions, wherein each agent $a$ receives a connected piece of value at least $\frac{1}{2n}$ times $a$'s value for the entire cake. Here, the proximity to the proportional share is used to quantify the loss incurred due to the disposal. Our cake-cutting algorithm ($\ALG$ in Section~\ref{section:ef-nsw}) can be used to efficiently find, with disposal, an additively-approximate envy-free division, wherein each agent receives a connected piece of value $\left(\frac{1}{2n} - o(1)\right)$ times its value for the cake.\footnote{This observation is a direct consequence of Lemma~\ref{lemma:partial-ef}.} We note that such a division is obtained as an intermediate solution in our algorithm; at the end, the algorithm allocates the entire cake.  

\subsection{Subsequent Work}

The recent result of Goldberg et al. \cite{goldberg2020contiguous} provides an efficient algorithm for computing
allocations in which the envy between any two agents is at most $1/3$. That is, they obtain an additive-approximate guarantees for contiguous envy-free cake cutting (where every agent values the entire cake at $1$). We note that our algorithm developed in Section~\ref{section:2factorEF}--- that finds a cake division (with connected pieces) wherein the envy is multiplicatively within a factor of $2 + o(1)$; see Theorem~\ref{theorem:mod2ef}---essentially matches the additive-approximation guarantee obtained in~\cite{goldberg2020contiguous}. 

The work of Goldberg et al. \cite{goldberg2020contiguous} also prove interesting hardness results in the cake cutting context; in particular, they show that the decision problem of whether there exists a contiguous envy-free allocation satisfying the following constraints is NP-hard: (i) a certain agent must be allocated the leftmost piece; (ii) the ordering of the agents is fixed; or (iii) one of the cuts must fall at a given position.

%\footnote{Similar to our setting, every agent values the entire cake at $1$ in \cite{goldberg2020contiguous}}
%ÊBerliant, M.; Thomson, W.; Dunz, K. (1992). "On the fair division of a heterogeneous commodity".ÊJournal of Mathematical Economics.ÊEvery EF is PE with connected pieces. Weller's theorem

%Hardness of EF interesting open question
%Egal-Halevi -- redividing cake

%Pareto efficiency -- Weller's theorem
%Edith Line Graph

%Cake Cutting Algorithms for Piecewise Constant and Piecewise Uniform Valuations Aziz and Ye
%Bertsimas, D., farias, V.f., and trichakis, n. the price of fairness. Operations Research 59, 1 (2011), 17Ð31. 

\section{Notation and Preliminaries}

We consider the problem of dividing a cake (which metaphorically represents a divisible, heterogenous good) among $n$ agents. In this setup, the cake is modeled as the segment $[0,1]$ and the (possibly) distinct cardinal preferences of the agents are expressed as valuation functions, $\{ v_a \}_{a \in [n]}$, over the intervals contained in $[0,1]$ (i.e., over the pieces of the cake). Specifically, for each agent $a \in [n]$ and interval $I =[x,y] \subset [0,1]$, with $0 \leq x \leq y \leq 1$, the function $v_a$ maps $I$ to agent $a$'s value for it, $v_a (I) \in \mathbb{R}_+$. 

Conforming to standard assumptions, this work addresses valuations $\{v_a\}_{a \in [n]}$ that are (i) nonnegative, (ii) normalized: the value of the entire cake is equal to one, $v_a([0,1]) =1$, (iii) divisible: for every interval $I = [x,y]$ and parameter $ \lambda \in [0,1]$, there exists a $z \in [x,y]$ with the property that $v_a([x,z]) = \lambda v_a([x,y])$, and (iv) sigma additive: $v_a(I \cup J) = v_a(I) + v_a(J)$, for all disjoint intervals $I, J \subset [0,1]$.  

This divisibility property ensures that the valuations are non-atomic, i.e., $v_a([x,x]) = 0$ for all $a \in [n]$ and $x \in [0,1]$. Furthermore, this property allows us, as a convention, to regard two intervals to be disjoint even if they intersect exactly at an endpoint.  	

Our results hold as long as the valuations satisfy the above-mentioned properties and only require oracle access to the valuations. That is, our algorithms can be efficiently executed in the Robertson-Webb model~\cite{robertson1998cake}, which supports oracle access to the valuations in the form of evaluation queries (which, given an agent $a$ and an interval $I$, return $v_a(I)$) and cut queries (which, given an agent $a$, an initial point $x \in [0,1]$, and value $\tau$, return the leftmost point $y \in [x,1]$ such that $v_a([x,y]) = \tau$). 

However, for ease of presentation, instead of the Robertson-Webb model, we will restrict attention to a well-studied setting in which the valuations of the agents can be explicitly given as input. In particular, we will consider valuations that are induced by density functions: given a piecewise-constant density function $\nu_a: [0,1] \mapsto \mathbb{R}_+$ for an agent $a \in [n]$, the valuation of any interval $I$ is set to be $v_a(I) \coloneqq \int_I \! \nu_a(x) \, \mathrm{d}x$. Valuations obtained by integrating piecewise-constant densities are said to be \emph{piecewise-constant}. Indeed, such valuations can be given as input, say, in terms of the underlying density functions. \\

\noindent
{\bf Problem Instances:} A \emph{cake-division instance}, with piecewise-constant valuations, is a tuple $\langle [n], \{v_a \}_{a \in [n] } \rangle$ where $[n]=\{1,2, \ldots, n\}$ denotes the set of $n$ agents and $v_a$s specify the piecewise-constant valuations of the agents over the cake $[0,1]$. \\

\noindent
{\bf Allocations:} As mentioned above, the goal here is to partition the cake into disjoint intervals and allocate them among the $n$ agents. We will focus solely on assigning to each agent a single interval, i.e., we will require that the piece assigned to each agent is connected.%\footnote{By contrast, some of the prior literature on cake cutting considers assigning finite unions of intervals to agents. Such a solution framework, however, is not applicable in settings wherein connectedness is an essential requirement; consider, for example, division of broadcast spectrum or non-preemptive interval scheduling.}  

For a cake-division instance with $n$ agents, an \emph{allocation} is defined to be a collection of $n$ pairwise-disjoint intervals, $\mathcal{I} = \{I_1, I_2, \ldots, I_n \}$, where interval $I_a$ is assigned to agent $a \in [n]$ and $\cup_{a \in [n]} \ I_a = [0,1]$.\footnote{Note that the intervals are not indexed based on how their endpoints are ordered, rather the subscript of each interval in an allocation identifies the unique agent that owns this interval.} We will use the term \emph{partial allocation} to refer to collection of pairwise-disjoint intervals, $\mathcal{J} = \{J_1, J_2, \ldots, J_n \}$, that do not necessarily cover the entire cake, $\cup_a J_a \subsetneq [0,1]$. 

The overarching objective of the current work is to find fair and efficient allocations. Relevant notions of fairness and efficiency are defined next. \\

\noindent
{\bf Envy-Freeness:} For a cake-division instance $\langle [n], \{v_a\}_{a \in [n]} \rangle$, an (partial) allocation $\mathcal{I} = \{I_1, \ldots, I_n \}$ is said to be \emph{envy free} ($\EF$) iff each agent prefers its own interval over that of any other agent, $v_a(I_a) \geq v_a(I_b)$ for all agents $a, b \in [n]$. 

We will address a natural relaxation of envy-freeness; specifically, we study allocations in which the envy between the agents is multiplicatively bounded. Given $\alpha \geq 1$, an allocation $\mathcal{I}=\{I_1, \ldots, I_n\}$ is said to be \emph{$\alpha$-approximately envy free} ($\alpha$-$\EF$) iff $v_a( I_a) \geq \frac{1}{\alpha} \ v_a (I_b)$, for every pair of agents $a, b \in [n]$.

A $1$-$\EF$ allocation is envy free and, the smaller the value of $\alpha$, the stronger is the envy-freeness guarantee.\footnote{Also, note that $\alpha$ cannot be strictly less than one--the definition of an $\alpha$-$\EF$ allocation requires $v_a( I_a) \geq \frac{1}{\alpha} \ v_a (I_b)$, even for $b=a$.} \\

\noindent
{\bf Nash Social Welfare:} For an allocation $\mathcal{I}=\{I_1, \ldots, I_n\}$, the \emph{Nash social welfare} is defined to be the geometric mean of the agents' valuations, $\NSW(\mathcal{I}) \coloneqq \left( \prod_{a=1}^n v_a(I_a) \right)^{1/n}$. In a cake-division instance, an allocation $\mathcal{I}^*$ is said to be a \emph{Nash optimal allocation} iff $\mathcal{I}^* \in \argmax_{ \mathcal{I} \in \mathbb{I} }  \NSW(\mathcal{I})$; here $\mathbb{I}$ denotes the set of all allocations. \\

\noindent 
{\bf Social Welfare and Generalized Mean:} Social welfare is a standard measure of efficiency in the context of resource allocation. For an allocation $\mathcal{I} = \{I_1, \ldots, I_n\}$, we define \emph{social welfare} to be the arithmetic mean\footnote{Since the work develops multiplicative approximation guarantees, we can consider the average valuation, instead of the sum of valuations, as a utilitarian objective.} of the valuations, $\SW(\mathcal{I}) \coloneqq \frac{1}{n} \sum_{a=1}^n v_a(I_a)$.

%While social welfare (arithmetic mean) provides a measure of efficiency, Nash social welfare (geometric mean) achieves a balance between fairness and efficiency.
Generalized (H\"{o}lder) means, $\mathrm{M}_\rho$, provide a family of functions which interpolate between fairness and efficiency objectives. The \emph{$\rho$-mean welfare} of an allocation $\mathcal{I} = \{I_1, \ldots, I_n \}$ is defined as
\begin{align*}
\mathrm{M}_\rho(\mathcal{I}) \coloneqq \left( \frac{1}{n} \sum_{a=1}^n [v_a(I_a)]^\rho \right)^{1/\rho}
\end{align*} 

We will develop algorithmic and hardness results for maximizing the $\rho$-mean welfare, with exponent $\rho \in (0,1]$. This parameter range, in particular, captures both $\NSW$ and $\SW$: $\rho=1$ gives us the arithmetic mean (social welfare) and, as $\rho$ tends to zero, the limit of $\mathrm{M}_\rho$ is equal to the geometric mean (the Nash social welfare).%\footnote{Note that, for each $\rho$, the $\rho$-mean welfare is ordinally equivalent to CES (constant elasticity of substitution) welfare functions that have the form $\left(\sum_i v_i^\rho\right)^{1/\rho}$.} \\

Overall, this paper is concerned with finding allocations (i.e., finding cake divisions with connected pieces) under the following objectives (i) bounding envy, (ii) maximizing Nash social welfare, and (iii) maximizing $\rho$-mean welfare, for $\rho \in (0,1]$.

\section{Finding Envy-Free and Nash Optimal Allocations}
\label{section:ef-nsw}

In this section, first we will develop an efficient algorithm for finding $\left(3 +o(1) \right)$-$\EF$ allocations and, in tandem, obtain a polynomial-time $\left(3 +o(1)\right)$-approximation algorithm for the Nash social welfare maximization problem. Next, Theorem~\ref{theorem:mod2ef} strengthens the approximation guarantees obtained in Theorem~\ref{theorem:2ef} and we develop an efficient algorithm that outputs $\left(2 +o(1) \right)$-$\EF$ allocations. Subsequently, we will establish a generic connection between envy-freeness and Nash social welfare in the cake-cutting context: any $\alpha$-approximately envy-free allocation provides a $2 \alpha$-approximation to Nash social welfare.

Our algorithm, $\ALG$, for finding approximately envy-free allocations starts by assigning an empty interval to each agent--it starts with the partial allocation consisting of empty sets. Then, the algorithm proceeds to assign successively higher valued pieces to the agents, i.e., it iteratively moves from one partial allocation to the next. Note that the initial partial allocation (consisting of empty intervals) is envy free. In fact, all the partial allocations, $\mathcal{P} = \{P_1, \ldots, P_n \}$, computed during $\ALG$'s execution, satisfy the following additive relaxation of envy-freeness, for a fixed constant $\varepsilon \in (0,1/3]$:
 \begin{align}
 v_a(P_a) & \geq v_a(P_b) - \frac{\varepsilon}{n^2} \quad \text{for all } a, b \in [n] \label{eq:invariant}
 \end{align}  
$\ALG$ updates a partial allocation $\mathcal{P} = \{P_1, \ldots, P_n \}$ by considering the unassigned pieces of the cake. Specifically, given a partial allocation $\mathcal{P}$, write $\mathcal{U}_\mathcal{P} =\{U_1, U_2, \ldots, U_m \}$ to denote the minimum-cardinality collection of disjoint intervals that satisfy $\cup_i U_i = [0,1] \setminus \cup_a P_a$. In other words, $\mathcal{U}_\mathcal{P}$ consists of the intervals that remain after the assigned intervals in $\mathcal{P}$ (i.e., $P_a$s) are removed from $[0,1]$. Since there are $n$ intervals in $\mathcal{P}$, there can be at most $n+1$ intervals in $\mathcal{U}_{\mathcal{P}}$. %$\ALG$ will maintain a slightly stronger bound: each partial allocation computed in the algorithm $\mathcal{P}$  will induce at most $n$ unassigned intervals, $|\mathcal{U}_\mathcal{P}|  \leq n$. 

$\ALG$ keeps iterating as long as there exists an unassigned interval $\widehat{U} \in \mathcal{U}_\mathcal{P}$ of high enough value for any agent. Then, part of $\widehat{U}$ is assigned to a judiciously-chosen agent $\widehat{a}$ who relinquishes the previous interval assigned to it, but now accrues a higher valuation. The criterion for selecting $\widehat{a}$ ensures that the above-mentioned invariant is maintained; this selection can be viewed as a \emph{moving-knife procedure} applied within $\widehat{U}$ (see Figure~\ref{figure:knife}). 

\begin{figure}[h]
\begin{center}
\includegraphics[scale=.8]{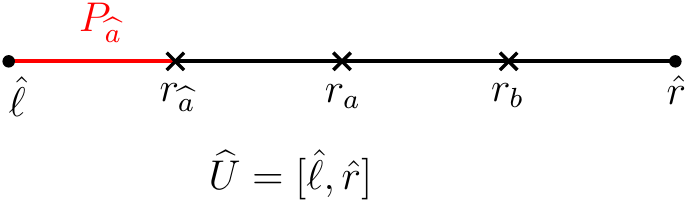}
\end{center}
\vspace*{-17pt}
\caption{An illustration of Step~\ref{step:knife} in $\ALG$}
\label{figure:knife}
\end{figure}

At the end, when the values of the unassigned intervals are not much larger than the value of the assigned ones, $\ALG$ merges each unassigned interval in $\mathcal{U}_\mathcal{P}$ with an adjacent interval in the (final) partial allocation $\mathcal{P}$ to obtain an approximately envy-free allocation (see Figure~\ref{figure:merge}). The algorithm is detailed below and we prove in Theorem~\ref{theorem:2ef} that it efficiently finds a $\left(3 + o(1) \right)$-$\EF$ allocation.
  
{
	\begin{algorithm}[h]
		{ 
	%	\small
			%\RaggedRight
			%\noindent
				{\bf Input:} A cake-division instance $\langle [n], \{v_a\}_a \rangle$ with piecewise-constant valuations and a fixed constant $\varepsilon \in (0,1/3]$. \\ 
				%\RaggedRight{
				%\noindent
				{\bf Output:} A $\left(3+ \frac{9 \varepsilon}{n} \right)$-approximately envy-free allocation.
		 \caption{$\ALG$}	
			\label{alg:2ef}
			\begin{algorithmic}[1]
				\STATE Initialize partial allocation $\mathcal{P}=\{P_1, \ldots, P_n \}$ with empty intervals, i.e., $P_a = \emptyset$ for all $a \in [n]$. 
				\COMMENT{Recall that $\mathcal{U}_{\mathcal{P}}$ denotes the set of unassigned intervals induced by any partial allocation $\mathcal{P}$.}	
				\WHILE{there exists an agent $a \in [n]$ and an unassigned interval $\widehat{U} =[\hat{\ell}, \hat{r}] \in \mathcal{U}_{\mathcal{P}}$ such that $v_a(P_a) < v_a(\widehat{U}) - \frac{\varepsilon}{n^2}$} \label{step:while-loop}
				\STATE \label{step:contenders} Let $C \coloneqq \left\{ b \in [n] \ : \ v_b(P_b) < v_b(\widehat{U}) - \frac{\varepsilon}{n^2} \right\}$ and, for every agent $b \in C$, set $r_b \in [\hat{\ell}, \hat{r}] $ to be the leftmost point such that $v_b([\hat{\ell}, r_b]) = v_b(P_b) + \frac{\varepsilon}{n^2}$.
				\STATE \label{step:knife} Select agent $\widehat{a} \in \argmin_{b \in C} \  r_b$. %\\ \COMMENT{This selection can be viewed as a \emph{moving-knife procedure} applied within $\widehat{U}$}	
				%\IF{$P_{\widehat{a}}$ shares its right endpoint with $\widehat{U}$, i.e., $P_{\widehat{a}} = [x, \hat{\ell}]$ for some $x$} \label{step:if}
				%\STATE Let $y$ to be the leftmost point such that $v_{\widehat{a}}([x, y]) = v_{\widehat{a}}(P_{\widehat{a}}) + \frac{1}{n^2}$\COMMENT{Note that $y \in [\hat{\ell}, r_{\widehat{a}}]$}
				%\ENDIF				 
				 \STATE \label{step:update} Update the partial allocation $\mathcal{P}$ by assigning $P_{\widehat{a}} \leftarrow [\hat{\ell}, r_{\widehat{a}}]$ and keeping the interval assignment of all other agents unchanged.
				\STATE Update $\mathcal{U}_{\mathcal{P}}$ to be the set of unassigned intervals induced by the current partial allocation $\mathcal{P}$.
				\ENDWHILE 
				%\COMMENT{In the proof of Theorem~\ref{theorem:2ef}, we show that the number of unassigned intervals is at most $n$} 
				%\STATE \label{step:match} Relying on the bound $|\mathcal{U}_{\mathcal{P}}| \leq n$, match each unassigned interval $U \in \mathcal{U}_{\mathcal{P}}$ with a \emph{unique} interval $P_a \in \mathcal{P}$ which shares an endpoint with $U$. %Can be done via a inductive argument: start from the leftmost interval, pair an assigned and an unassigned interval, induct over (n-1). 	
				
				%\STATE For every $a \in [n]$ with the property that $P_a \in \mathcal{P}$ is matched with an unassigned interval, say $U \in \mathcal{U}_{\mathcal{P}}$, in Step~\ref{step:match}, write $I_a \coloneqq P_a \cup U$ 
				 %\COMMENT{Since $P_a$ and $U$ share an endpoint, $I_a$ is indeed an interval}
				 %\STATE For the remaining agents $a \in [n]$, for whom $P_a \in \mathcal{P}$ is not associated with any unassigned interval, write $I_a \coloneqq P_a$
				\STATE \label{step:merge1} Associate each unassigned interval $U \in \mathcal{U}_{\mathcal{P}}$ with an assigned interval $P_a \in \mathcal{P}$ which is adjacent (either on the left or on the right) to $U$. \\ \COMMENT{Note that any $P_a \in \mathcal{P}$ gets associated with at most two unassigned intervals, say $U$ and $U'$, and $U \cup P_a \cup U'$ is itself an interval} 
				\STATE \label{step:merge2} For all $a \in [n]$, let interval $I_a$ be the union of $P_a$ and the unassigned intervals (if any) associated with it. 
				\RETURN allocation $\mathcal{I} = \{I_1, \ldots, I_n \}$				
			\end{algorithmic}
		}
	\end{algorithm}
}		

\begin{figure}[H]
\begin{center}
\includegraphics[scale=.65]{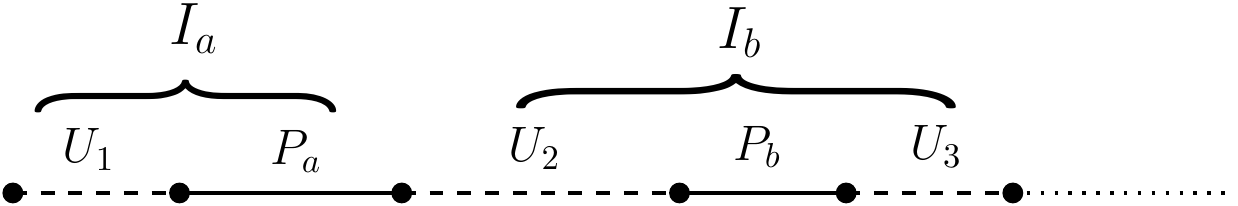}
\end{center}
\vspace*{-10pt}
\caption{An illustration of Steps~\ref{step:merge1} and~\ref{step:merge2} in $\ALG$}
\label{figure:merge} 
\end{figure}

The following lemma shows that the final partial allocation considered by $\ALG$ (in Step~\ref{step:merge1}) satisfies the additive relaxation of envy-freeness considered in the equation \eqref{eq:invariant}, not only between the assigned intervals, but also against the unassigned ones. 

\begin{lemma}
For a given a cake-division instance $\left\langle [n], \{v_a\}_{a \in [n]} \right\rangle$, with piecewise-constant valuations, and given parameter $\varepsilon \in (0, 1]$, let $\mathcal{P}=\{P_1, \ldots, P_n\}$ be the final partial allocation considered by $\ALG$ (i.e., $\mathcal{P}$ is the partial allocation with which the while-loop terminates) and let $\mathcal{U}_{\mathcal{P}}$ be the set of unassigned intervals induced by $\mathcal{P}$. Then, for each agent $a \in [n]$ we have 
\begin{align}
v_a(P_a) \geq v_a(Q) - \frac{\varepsilon}{n^2} \qquad \text{for all } Q \in \mathcal{P} \cup \mathcal{U}_{\mathcal{P}} \label{ineq:prop}
\end{align}
\label{lemma:partial-ef}
\end{lemma}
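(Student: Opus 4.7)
The proof splits naturally into two parts, corresponding to the two kinds of intervals $Q$ can be: a currently assigned piece in $\mathcal{P}$ or a currently unassigned piece in $\mathcal{U}_{\mathcal{P}}$. The easy part is the unassigned case: it follows immediately from the termination condition of the while-loop. If at termination there were some agent $a$ and some $U \in \mathcal{U}_{\mathcal{P}}$ with $v_a(P_a) < v_a(U) - \varepsilon/n^2$, the loop would not have exited. So $v_a(P_a) \geq v_a(U) - \varepsilon/n^2$ for every $U \in \mathcal{U}_{\mathcal{P}}$, taking care of the unassigned part of~\eqref{ineq:prop}.

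The substantive part is the assigned case, namely showing that the invariant~\eqref{eq:invariant}, $v_a(P_a) \geq v_a(P_b) - \varepsilon/n^2$ for all $a,b \in [n]$, is preserved by each iteration of the while-loop. I would prove this by induction on the number of iterations. The base case is trivial, since the initial partial allocation is $P_a = \emptyset$ for every $a$ and the valuations are non-negative. For the inductive step, suppose the invariant holds before an iteration that selects agent $\widehat{a}$ and unassigned interval $\widehat{U} = [\hat{\ell},\hat{r}]$, resulting in the updated piece $P'_{\widehat{a}} = [\hat{\ell}, r_{\widehat{a}}]$ while $P'_b = P_b$ for $b \neq \widehat{a}$. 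Three sub-cases need to be checked. First, for $a,b \neq \widehat{a}$ nothing changes, so the invariant carries over. Second, comparing $\widehat{a}$ to some $b \neq \widehat{a}$, Step~\ref{step:contenders} guarantees $v_{\widehat{a}}(P'_{\widehat{a}}) = v_{\widehat{a}}(P_{\widehat{a}}) + \varepsilon/n^2$, so the previous inequality $v_{\widehat{a}}(P_{\widehat{a}}) \geq v_{\widehat{a}}(P_b) - \varepsilon/n^2$ improves to $v_{\widehat{a}}(P'_{\widehat{a}}) \geq v_{\widehat{a}}(P_b)$, which is even stronger than required.

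The main obstacle is the third sub-case: showing $v_a(P_a) \geq v_a(P'_{\widehat{a}}) - \varepsilon/n^2$ for every $a \neq \widehat{a}$. This is exactly where the moving-knife selection rule in Step~\ref{step:knife} is needed. I would split on whether $a$ belongs to the contender set $C$. If $a \in C$, then $r_a$ is defined and the minimality $r_{\widehat{a}} \leq r_a$ gives $[\hat{\ell}, r_{\widehat{a}}] \subseteq [\hat{\ell}, r_a]$; sigma-additivity and the definition of $r_a$ then yield
\[
v_a(P'_{\widehat{a}}) \;\leq\; v_a([\hat{\ell},r_a]) \;=\; v_a(P_a) + \frac{\varepsilon}{n^2},
\]
which rearranges to the desired bound. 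If $a \notin C$, then by definition of $C$ we have $v_a(P_a) \geq v_a(\widehat{U}) - \varepsilon/n^2$, and since $P'_{\widehat{a}} \subseteq \widehat{U}$ (again using non-negativity and sigma-additivity), we get $v_a(P'_{\widehat{a}}) \leq v_a(\widehat{U}) \leq v_a(P_a) + \varepsilon/n^2$. In both branches the invariant is preserved, completing the induction. Combining the induction with the termination-condition argument for $\mathcal{U}_{\mathcal{P}}$ yields~\eqref{ineq:prop} for all $Q \in \mathcal{P} \cup \mathcal{U}_{\mathcal{P}}$, as required.
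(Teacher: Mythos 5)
Your proof is correct and takes essentially the same route as the paper's: an induction establishing the invariant~\eqref{eq:invariant} on assigned pieces, split into the same sub-cases ($a \in C$ using the minimality of $r_{\widehat{a}}$, $a \notin C$ using the defining condition of $C$ plus the containment $P'_{\widehat{a}} \subseteq \widehat{U}$), combined with the while-loop termination condition to handle $Q \in \mathcal{U}_{\mathcal{P}}$. The only cosmetic difference is the order of presentation and that the paper also pauses to note $\mathcal{P}$ remains a valid partial allocation, which is not strictly needed for the inequality itself.
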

\begin{proof} 
First, we observe that the collection of intervals $\mathcal{P} =\{P_1, \ldots, P_n \}$ considered in any iteration of $\ALG$ is indeed a partial allocation, i.e., the intervals $P_a$s are pairwise disjoint. This simply follows from the fact that whenever an agent $\widehat{a}$ is assigned a new interval $P_{\widehat{a}}$, it is contained in the unassigned part of the cake, i.e., $P_{\widehat{a}}$ continues to not intersect with the intervals assigned to other agents. 

Now, we will show, via an inductive argument, that the partial allocations $\mathcal{P} =\{P_1, \ldots, P_n \}$ computed by $\ALG$ satisfy the above-mentioned invariant \eqref{eq:invariant}. The initial (empty) partial allocation is envy free and, hence, trivially satisfies this additive relaxation of envy freeness. In an iteration, only the interval assigned to the selected agent $\widehat{a}$ gets updated. Since $\widehat{a}$'s valuation increases (by $\frac{\varepsilon}{n^2}$), equation \eqref{eq:invariant} continues to hold for $\widehat{a}$, i.e., $v_{\widehat{a}}(P_{\widehat{a}}) \geq v_{\widehat{a}} ( P_b )  - \frac{\varepsilon}{n^2}$ for all $b \in [n]$. Also, for all agents $a, b \in [n] \setminus \{ \widehat{a} \}$, the assigned intervals remain unchanged, i.e., equation \eqref{eq:invariant} is satisfied for these agents. 

It only remains to verify that $v_a(P_a) \geq v_a(P_{\widehat{a}}) - \frac{\varepsilon}{n^2}$ for all $a \in [n]$. For agents $a \not\in C$ (see Step~\ref{step:contenders} for the definition of $C$), we have $v_a(P_a) \geq v_a(\widehat{U}) - \frac{\varepsilon}{n^2} \geq v_a([\hat{\ell}, r_{\widehat{a}} ]) - \frac{\varepsilon}{n^2}$; the last inequality follows from the containment $[\hat{\ell}, r_{\widehat{a}} ]  \subseteq \widehat{U}$. Since $P_{\widehat{a}}$ is set to be $[\hat{\ell}, r_{\widehat{a}} ] $, the desired inequality holds for all $a \not\in C$. For the remaining agents $a \in C$, the selection criterion applied in Step~\ref{step:knife} gives us this inequality. 
Indeed, $r_{\widehat{a}} \leq r_a$ for all $a \in C$. Hence, for agents $a \in C$, we have $v_a(P_a) = v_a([\hat{\ell}, r_a]) - \frac{\varepsilon}{n^2} \geq v_a([\hat{\ell}, r_{\widehat{a}}]) - \frac{\varepsilon}{n^2} = v_a(P_{\widehat{a}}) - \frac{\varepsilon}{n^2}$. 

Therefore, invariant \eqref{eq:invariant} holds, in particular, for the partial allocation $\mathcal{P}$ obtained at the termination of the while-loop. Also, the fact that the while-loop terminates with $\mathcal{P}$ as the final partial allocation implies that, for all $a \in [n]$ and all $U \in \mathcal{U}_{\mathcal{P}}$, we have $v_a(P_a) \geq v_a(U) - \frac{\varepsilon}{n^2}$. That is, for each $a \in [n]$, the final partial allocation $\mathcal{P}= \{P_1, \ldots, P_n\}$ satisfies the stated inequalities: $v_a(P_a) \geq v_a(Q) - \frac{\varepsilon}{n^2}$ for all  $Q \in \mathcal{P} \cup \mathcal{U}_{\mathcal{P}}$. 
\end{proof}

Using this lemma, we will now show that the allocation computed by $\ALG$ is $\left(3+o(1) \right)$-$\EF$.

\begin{restatable}{theorem}{TheoremEF}
\label{theorem:2ef}
Given a cake-division instance $\left\langle [n], \{v_a\}_{a \in [n]} \right\rangle$ with piecewise-constant valuations, and constant $\varepsilon \in (0, 1/3]$, $\ALG$ computes a $(3+ \frac{9 \varepsilon}{n})$-approximately envy-free allocation in polynomial time.
\end{restatable}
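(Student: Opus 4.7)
The plan is to combine the additive envy bound of Lemma~\ref{lemma:partial-ef} with an accounting of the merging step (Steps~\ref{step:merge1}--\ref{step:merge2}), and to convert the additive slack into a multiplicative guarantee by bounding $v_a(P_a)$ away from zero.

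Let $\mathcal{P} = \{P_1, \dots, P_n\}$ be the partial allocation at which the while-loop terminates and let $\mathcal{I} = \{I_1, \dots, I_n\}$ be the allocation returned after merging. By the description of Steps~\ref{step:merge1}--\ref{step:merge2}, each $I_b$ equals $P_b$ together with the at most two intervals of $\mathcal{U}_{\mathcal{P}}$ sitting immediately to the left and right of $P_b$. Invoking Lemma~\ref{lemma:partial-ef} on each of these (at most three) pieces gives, for every $a, b \in [n]$,
\[
v_a(I_b) \;\leq\; 3\bigl(v_a(P_a) + \varepsilon/n^2\bigr),
\]
while $I_a \supseteq P_a$ implies $v_a(I_a) \geq v_a(P_a)$. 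Consequently,
\[
\frac{v_a(I_b)}{v_a(I_a)} \;\leq\; 3 + \frac{3\varepsilon/n^2}{v_a(P_a)}.
\]

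To turn this into the claimed $3 + 9\varepsilon/n$ bound, I would lower bound $v_a(P_a)$. The collection $\mathcal{P} \cup \mathcal{U}_{\mathcal{P}}$ partitions $[0,1]$ into at most $2n+1$ pieces (the $n$ assigned intervals plus at most $n+1$ gaps), and by Lemma~\ref{lemma:partial-ef} each such piece has value at most $v_a(P_a) + \varepsilon/n^2$ to agent $a$. Summing and using $v_a([0,1]) = 1$ yields
\[
v_a(P_a) \;\geq\; \frac{1}{2n+1} - \frac{\varepsilon}{n^2} \;\geq\; \frac{1}{3n},
\]
where the last inequality follows by a routine algebraic check when $\varepsilon \leq 1/3$ (the residual small-$n$ cases, including the trivially envy-free $n = 1$, can be verified directly). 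Substituting $v_a(P_a) \geq 1/(3n)$ into the displayed ratio gives $v_a(I_b)/v_a(I_a) \leq 3 + 9\varepsilon/n$, which is the desired $\alpha$-EF guarantee.

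For the running time, I would observe that every iteration of the while-loop increments $v_{\widehat{a}}(P_{\widehat{a}})$ by exactly $\varepsilon/n^2$ for the selected agent $\widehat{a}$, while $v_a(P_a)$ is always at most $v_a([0,1]) = 1$; hence each agent is selected at most $n^2/\varepsilon$ times, the loop runs for at most $n^3/\varepsilon$ iterations, and each iteration is executed via $O(n)$ evaluate and cut queries on the piecewise-constant valuations. The main obstacle in the whole argument is precisely this additive-to-multiplicative conversion: a small $\varepsilon/n^2$ additive slack would blow up into an arbitrarily bad multiplicative factor if $v_a(P_a)$ were negligible, so the crucial move is the counting step that extracts $v_a(P_a) = \Omega(1/n)$ from the fact that at most $2n+1$ pieces of value bounded by $v_a(P_a) + \varepsilon/n^2$ cover a unit-value cake.
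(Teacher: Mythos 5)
Your proof follows essentially the same strategy as the paper: use Lemma~\ref{lemma:partial-ef} to bound the value of each of the at most $2n+1$ pieces in $\mathcal{P} \cup \mathcal{U}_{\mathcal{P}}$ by $v_a(P_a) + \varepsilon/n^2$, sum over all pieces to get a lower bound on $v_a(P_a)$ of order $1/n$, observe that each returned piece $I_b$ is the union of $P_b$ with at most two adjacent gaps so that $v_a(I_b) \leq 3 v_a(P_a) + 3\varepsilon/n^2$, and then convert the additive slack into the multiplicative $3 + 9\varepsilon/n$ factor. The iteration count and running-time argument also match. So the route is the same; the one place you should tighten things is the lower bound on $v_a(P_a)$.

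Specifically, you claim $v_a(P_a) \geq \frac{1}{2n+1} - \frac{\varepsilon}{n^2} \geq \frac{1}{3n}$ for $\varepsilon \leq 1/3$, attributing the second inequality to a routine check. But that inequality is equivalent to $\frac{n(n-1)}{3(2n+1)} \geq \varepsilon$, and for $n = 3$ the left side is $2/7 < 1/3$, so the chain breaks exactly at $n = 3$ for $\varepsilon \in (2/7, 1/3]$. The paper avoids this by noting that the $Q = P_a$ term in the sum contributes no slack, so only $2n$ (not $2n+1$) copies of $\varepsilon/n^2$ need to be subtracted, giving the slightly stronger $v_a(P_a) \geq \frac{1 - 2\varepsilon/n}{2n+1}$, which does clear $\frac{1}{3n}$ for all $n \geq 3$ and $\varepsilon \leq 1/3$ (while $n = 2$ is dispatched by cut-and-choose). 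Your caveat about verifying small-$n$ cases directly would technically let you patch this, but it is cleaner to use the sharper slack accounting so that only $n = 2$ needs special treatment.
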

	
\begin{proof}
To bound the algorithm's time complexity note that in every iteration the selected agent's valuation (for the interval assigned to it) additively goes up by $\frac{\varepsilon}{n^2}$: in Steps \ref{step:contenders} and \ref{step:knife}, for the selected agent $\widehat{a}$, we have $v_{\widehat{a}} ( [\hat{\ell}, r_{\widehat{a}}]) = v_{\widehat{a}} (P_{\widehat{a}}) + \frac{\varepsilon}{n^2}$. Since the total value of the cake for every agent is equal to one, $\ALG$ will iterate at most $\varepsilon^{-1}\ n^3$ times. Also, note that every step of the algorithm can be implemented efficiently and $\varepsilon$ is set to be a constant. Hence, $\ALG$ runs in polynomial time.

As observed in the proof of Lemma~\ref{lemma:partial-ef}, the collection of intervals, $\mathcal{P} =\{ P_1, \ldots, P_n \}$, considered by $\ALG$ in Step~\ref{step:merge1} is indeed a partial allocation, i.e., the intervals $P_a$s with which the while-loop terminates are pairwise disjoint. Let $\mathcal{U}_{\mathcal{P}}$ be the set of unassigned intervals induced by the final partial allocation $\mathcal{P}$. Also, write $\mathcal{I}=\{I_1, \ldots, I_n\}$ to denote the allocation returned by $\ALG$; note that $P_a \subseteq I_a$ for all agents $a \in [n]$. Also, since $\mathcal{P}$ contains $n$ intervals,  $|\mathcal{U}_{\mathcal{P}}| \leq n+1$. 

Summing inequality \eqref{ineq:prop} (see Lemma~\ref{lemma:partial-ef}) across all intervals $Q \in \mathcal{P} \cup \mathcal{U}_{\mathcal{P}}$ gives us  
\begin{align}
(2n +1) \ v_a(P_a)  \geq  \sum_{Q \in \mathcal{P} \cup \ \mathcal{U}_{\mathcal{P}} } v_a(Q) - 2n \frac{\varepsilon}{n^2} = 1 - \frac{2\varepsilon}{n} \label{ineq:half-prop}
\end{align}  
The last equality holds since $\bigcup_{Q \in \mathcal{P} \cup \ \mathcal{U}_{\mathcal{P}}} \ Q  = [0,1]$.

This inequality provides the following lower bound on the value attained by any agent $a \in [n]$ in the returned allocation $\mathcal{I} = \{I_1, \ldots, I_n\}$: $v_a(I_a) \geq v_a(P_a) \geq \frac{1}{2n+1} - \frac{2 \varepsilon}{n(2n+1)}$. Therefore, with $n \geq 3$ and $\varepsilon \leq 1/3$,\footnote{For the $n= 2$ case one can efficiently find an envy-free allocation (i.e., a $1$-$\EF$ allocation) by the cut-and-choose protocol~\cite{procaccia2015cake}.} we have the following bound: 
\begin{align}
\frac{3\varepsilon}{n} v_a(I_a) \geq \frac{\varepsilon}{n^2} \label{ineq:half-prop-use}
\end{align}

By construction, for each agent $b \in [n]$, the returned interval $I_b$ is composed of $P_b$ and at most two other unassigned intervals from $\mathcal{U}_{\mathcal{P}}$. Therefore, instantiating inequality \eqref{ineq:prop} with $P_b \in \mathcal{P} \cup \mathcal{U}_{\mathcal{P}}$ and the (at most two) unassigned intervals associated with it, we get $3 v_a(P_a) \geq v_a(I_b) - \frac{3\varepsilon}{n^2}$. That is, $3 v_a(I_a) \geq v_a(I_b) - \frac{3\varepsilon}{n^2}$.

Using this inequality and the bound  \eqref{ineq:half-prop-use}, we obtain the desired approximate envy-freeness guarantee  $\left(3 + \frac{9\varepsilon}{n} \right) v_a(I_a) \geq v_a(I_b)$ for all $a, b \in [n]$.
\end{proof}

\begin{remark2}
Theorem~\ref{theorem:2ef} provides a proof of existence of approximately envy-free cake divisions. Indeed, this existential guarantee also follows from the (stronger) result of Su~\cite{edward1999rental}. However, in contrast to~\cite{edward1999rental}, the current proof renders an efficient algorithm and relies on a potential argument--the proof in~\cite{edward1999rental} invokes Sperner's Lemma and, hence, achieves totality through a parity argument. 
\end{remark2}

\subsection{Finding $\left(2+o(1) \right)$-$\EF$ Allocations:} \label{section:2factorEF}
In this section, we develop a new algorithm that builds on $\ALG1$ and improves the approximation guarantee obtained in Theorem~\ref{theorem:2ef}. Note that if the partial allocation $\mathcal{P} = \{P_1, P_2, \dots, P_n\}$ considered in Step~\ref{step:merge1} of $\ALG1$ induces at most $n$ unassigned intervals, we achieve the desired $\left(2+o(1) \right)$-approximation guarantee and prove Theorem~\ref{theorem:mod2ef}. We therefore begin by identifying various properties of a partial allocation that induces no more than $n$ unassgined intervals, and detail required additional steps in $\ALG1$ that ensures at least one of these properties are maintained throughout its execution.

Write $\mathcal{U}(\mathcal{P})$ to denote the set of unassigned intervals induced by the partial allocation $\mathcal{P}$. Since $\mathcal{P}$ contains $n$ intervals,  $|\mathcal{U}_{\mathcal{P}}| \leq n+1$. Note that $\mathcal{P}$ can induce $n+1$ unassigned intervals iff there are unassigned intervals at both ends of the cake $[0,1]$ \emph{and} no two assigned intervals are adjacent to each other. Therefore, $|\mathcal{U}_{\mathcal{P}}| \leq n$ iff the partial allocation $\mathcal{P}$ satisfies either of the following two properties: 
\begin{enumerate}
	\item There exists some assigned interval $P_a \in \mathcal{P}$, for $a\in[n]$, that is adjacent to either ends, $0$ or $1$, of the cake. Here, we can denote $\mathcal{P}$  as  $|P_a \ U_a \ \cdots|$ or $| \cdots \ U_a \ P_a|$.
	\item There exists at least two assigned intervals $P_b, P_c \in \mathcal{P}$, for $b,c \in[n]$, that are adjacent to each other. Here, we can denote $\mathcal{P}$  as $|\cdots \ U_b \ P_b \ P_c \ U_d \ P_d \ \cdots|$.
	%$|U_a \ P_a \ \dots P_d \ U_d \dots U_b \ P_b \ P_c \ U_c|$. 
\end{enumerate}

The key idea is to maintain a partial allocation that satisfies at least one of the above-mentioned two properties throughout the execution of $\ALG1$.
Towards this, we detail the additional steps required in $\ALG1$ that enables us to achieve this goal. During the execution of the while-loop in $\ALG1$, we identify cases where an application of Steps~\ref{step:contenders} and \ref{step:knife} on the unassigned interval $\hat{U}$ (selected in Step~\ref{step:while-loop}) would  violate properties (i) or (ii), and in such settings we execute a \emph{moving-knife procedure} from the 
right end of $\hat{U}$; specifically, we execute the following steps instead:

\noindent
\textit{Step 3':} Let $C \coloneqq \left\{ b \in [n] \ : \ v_b(P_b) < v_b(\widehat{U}) - \frac{\varepsilon}{n^2} \right\}$ and, for every agent $b \in C$, set $\ell_b \in [\hat{\ell}, \hat{r}] $ to be the right-most point such that $v_b([\ell_b, \hat{r}]) = v_b(P_b) + \frac{\varepsilon}{n^2}$. 

\noindent
\textit{Step 4':} Select agent $\widehat{a} \in \argmax_{b \in C} \  \ell_b$.

%Whenever the unassgined interval $\widehat{U}$, considered during the next iteration of while-loop in $\ALG1$ (Step~\ref{step:while-loop}) is located at certain positions,\footnote{These certain positions of $\widehat{U}$  are formally described in the proof of Claim~\ref{claim:unassigned-intervals}.} we modify Steps~\ref{step:contenders} and \ref{step:knife} of $\ALG1$  as follows.
The next claim proves that addition of the above steps to $\ALG1$ ensures that no more than $n$ unassigned intervals can be induced by any partial allocation considered throughout the execution of $\ALG1$.

\begin{claim} \label{claim:unassigned-intervals}
	For a given cake-division instance $\left\langle [n], \{v_a\}_{a \in [n]} \right\rangle$, with piecewise-constant valuations, and a parameter $\varepsilon \in (0, 1]$, the above-mentioned additional steps to the while-loop in $\ALG1$ ensures that the final partial allocation $\mathcal{P}=\{P_1, \ldots, P_n\}$ considered in its Step~\ref{step:merge1} induces at most $n$ unassigned intervals.
	%by $\ALG1$ (i.e., $\mathcal{P}$ is the partial allocation with which the while-loop terminates) and let $\mathcal{U}_{\mathcal{P}}$ be the set of unassigned intervals induced by $\mathcal{P}$.
	%Consider a cake-division instance $\left\langle [n], \{v_a\}_{a \in [n]} \right\rangle$ with piecewise-constant valuations  and constant $\varepsilon \in (0, 1/3]$. $\ALG1$ (Algorithm~\ref{alg:2ef}) can be modified to ensure that the partial allocation $\mathcal{P}$, considered in its Step~\ref{step:merge1} induces at most $n$ unassigned intervals.
	%, i.e., $|\mathcal{U}_{\mathcal{P}}| \leq n$. 
\end{claim}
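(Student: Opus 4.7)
The plan is to prove, by induction on the iterations of the modified while-loop, the invariant $|\mathcal{U}_\mathcal{P}| \leq n$ holds after every iteration. The claim then follows because the partial allocation passed to Step~\ref{step:merge1} is the one produced upon loop termination. The base case is immediate: the initial partial allocation consists of empty intervals, so $\mathcal{U}_\mathcal{P} = \{[0,1]\}$ and $|\mathcal{U}_\mathcal{P}| = 1 \leq n$.

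For the inductive step, I will rely on the following structural fact (a sharpening of the preamble to the claim): if $\mathcal{P}$ has $k$ non-empty assigned intervals, then $|\mathcal{U}_\mathcal{P}| \leq k+1$, with equality exactly when neither property (i) nor property (ii) is satisfied. When $k < n$ after an iteration, the invariant $|\mathcal{U}_\mathcal{P}| \leq n$ holds automatically. The substantive case is $k = n$ after the iteration, where I must verify that one of (i) and (ii) holds.

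In this regime, I argue that the algorithm's choice between Steps~\ref{step:contenders}--\ref{step:knife} and the modified Steps 3', 4' always yields a newly-assigned $P_{\hat{a}}$ that itself witnesses (i) or (ii). Under the original Steps, $P_{\hat{a}} = [\hat{\ell}, r_{\hat{a}}]$ is flush with the left endpoint of $\hat{U}$, so it witnesses (i) when $\hat{\ell} = 0$ and witnesses (ii) when the left neighbor of $\hat{U}$ in the pre-iteration partial allocation is an assigned interval different from $P_{\hat{a}}^{\mathrm{old}}$. The only failure mode is $\hat{\ell} > 0$ combined with this left neighbor being precisely $P_{\hat{a}}^{\mathrm{old}}$---which is released upon the update, leaving the new $P_{\hat{a}}$ with no assigned neighbor on the left. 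Symmetrically, Steps 3', 4' place $P_{\hat{a}} = [\ell_{\hat{a}}, \hat{r}]$ flush with the right endpoint of $\hat{U}$, and their only failure mode is $\hat{r} < 1$ together with the right neighbor of $\hat{U}$ coinciding with $P_{\hat{a}}^{\mathrm{old}}$.

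The key observation, and the main obstacle the argument must overcome, is that $P_{\hat{a}}^{\mathrm{old}}$ is a single contiguous (possibly empty) interval while $\hat{U}$ is non-degenerate; hence $P_{\hat{a}}^{\mathrm{old}}$ can be adjacent to $\hat{U}$ on at most one side. The two failure modes above are therefore mutually exclusive, so at least one of Steps 3--4 and Steps 3'--4' produces a $P_{\hat{a}}$ that witnesses (i) or (ii). The algorithm's switching rule, which falls back to Steps 3'--4' exactly when Steps 3--4 would violate (i) and (ii), selects a non-failing variant. Consequently, property (i) or property (ii) is established after the iteration, the invariant $|\mathcal{U}_\mathcal{P}| \leq n$ is preserved, and the induction goes through.
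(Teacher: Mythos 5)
Your inductive scaffolding (base case, the sharpening $|\mathcal{U}_\mathcal{P}| \leq k+1$ for $k$ non-empty assigned intervals, the reduction to the case $k=n$) is sound and cleaner than what the paper states explicitly. However, the central step---the mutual-exclusivity argument---has a genuine gap that makes the proof not go through as written.

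Your claim that ``at least one of Steps 3--4 and Steps 3'--4' produces a $P_{\hat{a}}$ that itself witnesses (i) or (ii)'' is false. The mutual-exclusivity reasoning relies on the fact that a fixed $P_{\hat{a}}^{\mathrm{old}}$ can neighbor $\hat{U}$ on at most one side, but Steps 3--4 (moving knife from the left) and Steps 3'--4' (moving knife from the right) can select \emph{different} agents $\hat{a}_1$ and $\hat{a}_2$. When $\hat{\ell}>0$ and $\hat{r}<1$, it is entirely possible that $P_{\hat{a}_1}^{\mathrm{old}}$ is the left neighbor of $\hat{U}$ while $P_{\hat{a}_2}^{\mathrm{old}}$ is its right neighbor; in that situation \emph{both} variants produce an isolated new interval, so neither one's $P_{\hat{a}}$ directly witnesses (i) or (ii). The algorithm is nevertheless correct in such cases---but for a different reason your argument does not capture: when the new $P_{\hat{a}}$ is isolated, the \emph{old} witness (the interval at an end in property (1), or the adjacent pair in property (2)) survives, because one can check that $\hat{a}$ then fails to be a witness agent. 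The paper's proof is, in effect, tracking exactly this: its case analysis is organized around whether $\hat{U}$ coincides with the unassigned interval abutting the witness structure ($U_a$ or $U_d$), and it argues separately that in every sub-case either the old witness is preserved or the relocated interval creates a new one.

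A secondary discrepancy: you describe the switching rule as ``falls back to Steps 3'--4' exactly when Steps 3--4 would violate (i) and (ii),'' which requires simulating the outcome of Steps 3--4. The paper instead switches based on a syntactic condition on $\hat{U}$ (whether it equals $U_a$ or $U_d$). Your rule could in principle also yield a correct algorithm, but you would still need the ``old witness survives'' observation above to show that the fallback succeeds; the mutual-exclusivity claim alone does not suffice.
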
 
\begin{proof}
	%Consider a partial allocation $\mathcal{P} = \{P_1, P_2, \dots, P_n\}$ considered in an arbitrary iteration of the while-loop of $\ALG1$, and write $\mathcal{U}(\mathcal{P})$ to denote the set of unassigned intervals induced by $\mathcal{P}$. 
	During the execution of the while-loop in $\ALG1$, let us consider the first iteration where the partial allocation $\mathcal{P}$ induces exactly $n$ unassigned intervals or $|\mathcal{U}_{\mathcal{P}}| = n$, i.e., it satisfies at least one of the above-mentioned properties. Our goal then is to maintain $|\mathcal{U}_{\mathcal{P}}| \leq n$ till the completion of the while-loop. Write $\widehat{U}$ to denote the unassigned interval considered during the next iteration of while-loop in $\ALG1$ (Step~\ref{step:while-loop}).
	We inspect the two properties described in the beginning of Section~\ref{section:2factorEF} and identify the settings where an application of Steps~3' and 4' (instead of Steps~\ref{step:contenders} and \ref{step:knife}) in the while-loop maintain $|\mathcal{U}_{\mathcal{P}}| \leq n$.\\
	
	\noindent
	\textbf{Case(i)}: The partial allocation $\mathcal{P}$ in the current execution of the while-loop of $\ALG1$ satisfies property~(1).

	\noindent
	If the partial allocation, $\mathcal{P}$ is of the form $| \cdots \ U_a \ P_a|$ with $P_a = [x,1]$ for some $x\in[0,1]$ and $a \in [n]$, then  Steps~\ref{step:contenders} and \ref{step:knife}
	ensures $|\mathcal{U}_{\mathcal{P}}| \leq n$ in the next iteration. This is due to the fact that the partial allocation continue to satisfy either property (i) or (ii). Therefore, we consider the other case when $\mathcal{P}$ is of the form $|P_a \ U_a \ \cdots|$ with $P_a = [0,x]$ for some $x\in[0,1]$ and $a \in [n]$.
	
	\noindent
	\textit{Sub-case 1:} The unassigned interval, $\widehat{U}$ considered during the next iteration of while-loop (Step~\ref{step:while-loop} of $\ALG1$) is \emph{not} $U_a$. 
	
	Here, an application of Steps~\ref{step:contenders} and \ref{step:knife} on $\widehat{U}$
	ensures that the partial allocation (after this iteration) acquires property (ii), and hence maintains $|\mathcal{U}_{\mathcal{P}}| \leq n$.
	
	\noindent
	\textit{Sub-case 2:} The unassigned interval, $\widehat{U}$ considered during the next iteration of while-loop (Step~\ref{step:while-loop} of $\ALG1$) is $U_a$. 
	
	In this case, we apply Steps 3' and 4' in the while-loop (instead of Steps~\ref{step:contenders} and \ref{step:knife})  and perform a \emph{moving-knife procedure} within $\widehat{U}=U_a$ from the right end. If agent $\widehat{a}$ selected in Step 4' is such that $\widehat{a} \neq a$, then 
	the partial allocation (after this iteration) retains property (i). Otherwise, it acquires property (ii).
	In other words, if $\widehat{a}=a$, $P_a$ becomes adjacent to some other assigned interval, and the partial allocation acquires property (ii). Otherwise, $P_a$ remains unchanged and then property (i) is retained. Overall, 
	$|\mathcal{U}_{\mathcal{P}}| \leq n$ is maintained through the next iteration.\\
	
	\noindent
	\textbf{Case(ii)}: The partial allocation $\mathcal{P}$ in the current execution of the while-loop of $\ALG1$ satisfies property~(2).
	
	\noindent
	In this case, we denote $\mathcal{P}$ as $|\cdots \ U_b \ P_b \ P_c \ U_d \ P_d \ \cdots|$ where $P_b$ and $P_c$ are two assigned intervals that are adjacent to each other.
	
	\noindent
	\textit{Sub-case 1:} The unassigned interval, $\widehat{U}$ considered during the next iteration of while-loop (Step~\ref{step:while-loop} of $\ALG1$) is \emph{not} $U_d$. 
	
	Here, an application of Steps~\ref{step:contenders} and \ref{step:knife} on $\widehat{U}$
	ensures that the partial allocation (after this iteration) either retains property (ii), or acquires property (i) if $\widehat{U}$ is adjacent to the left end of the cake $[0,1]$. Overall, 
	$|\mathcal{U}_{\mathcal{P}}| \leq n$ is maintained through the next iteration.

	\noindent
	\textit{Sub-case 2:} The unassigned interval, $\widehat{U}$ considered during the next iteration of while-loop (Step~\ref{step:while-loop} of $\ALG1$) is $U_d$. 
	
	In this case, we apply Steps 3' and 4' in the while-loop (instead of Steps~\ref{step:contenders} and \ref{step:knife}) and perform a \emph{moving-knife procedure} within $\widehat{U}=U_d$ from the right end. If agent $\widehat{a}$ selected in Step~4' is agent $b$ (or agent $c$) then it ensures that $P_b$ (or $P_c$) becomes adjacent to $P_d$. Otherwise, both $P_b$ and $P_c$ remain unchanged and continue to be adjacent to each other. That is, the above additional steps guarantee that the partial allocation (after this iteration) retains property (i).

	Overall, the addition of Steps~3' and 4' (in the while-loop) to $\ALG1$ ensures that the partial allocation considered in its Step~\ref{step:merge1} induces at most $n$ unassigned intervals, and hence completes the proof.
\end{proof}

Claim~\ref{claim:unassigned-intervals} therefore ensures that $\ALG1$ with the above-mentioned additions (Steps~3' and 4') maintains a stronger bound: each partial allocation $\mathcal{P}$ (and in particular, the one considered in its Step~\ref{step:merge1}) computed by the algorithm induces at most $n$ unassigned intervals, i.e., $|\mathcal{U}_\mathcal{P}|  \leq n$. Claim~\ref{claim:unassigned-intervals} combined with the proof of Theorem~\ref{theorem:2ef} therefore proves the following result.

\begin{restatable}{theorem}{TheoremEFm} \label{theorem:mod2ef}
	Given a cake-division instance $\left\langle [n], \{v_a\}_{a \in [n]} \right\rangle$ with piecewise-constant valuations, and constant $\varepsilon \in (0, 1/3]$, there exists a polynomial-time algorithm that computes a $(2+ \frac{9 \varepsilon}{n})$-approximately envy-free allocation in polynomial time.
\end{restatable}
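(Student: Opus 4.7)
The plan is to re-run the proof of Theorem~\ref{theorem:2ef} verbatim, but substitute the strengthened structural bound of Claim~\ref{claim:unassigned-intervals} (namely $|\mathcal{U}_{\mathcal{P}}| \leq n$) in place of the weaker bound $|\mathcal{U}_{\mathcal{P}}| \leq n+1$ used there. The factor-$3$ loss in Theorem~\ref{theorem:2ef} arose because each returned interval $I_b$ could be the union of $P_b$ with up to two adjacent unassigned pieces; I will show that the new bound lets us reduce this to at most one unassigned piece per $I_b$, immediately yielding the factor-$2$ guarantee.

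First, I would carry Lemma~\ref{lemma:partial-ef} through unchanged: its proof only uses the moving-knife invariant~\eqref{eq:invariant}, which is preserved by Steps~$3'$ and $4'$ (they perform the same moving-knife procedure, just sweeping leftward from $\hat r$ instead of rightward from $\hat \ell$, with the analogous extremal-selection criterion). Hence the final partial allocation $\mathcal{P}$ still satisfies $v_a(P_a) \geq v_a(Q) - \varepsilon/n^2$ for all $Q \in \mathcal{P} \cup \mathcal{U}_{\mathcal{P}}$ and every $a \in [n]$. Summing over $Q$ and using $|\mathcal{P} \cup \mathcal{U}_{\mathcal{P}}| \leq 2n$ now gives $v_a(P_a) \geq \tfrac{1}{2n} - O(\varepsilon/n^2)$, from which the analogue of inequality~\eqref{ineq:half-prop-use}, namely $\tfrac{3\varepsilon}{n} v_a(I_a) \geq \tfrac{\varepsilon}{n^2}$, follows directly.

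The crux is showing that the association performed in Step~\ref{step:merge1} can be made so that each $P_a$ is paired with at most one adjacent unassigned interval. Given Claim~\ref{claim:unassigned-intervals}, this reduces to the following combinatorial fact: in a linear arrangement of $n$ assigned intervals and $k \leq n$ maximal unassigned intervals (with no two unassigned intervals adjacent, as they would otherwise be merged), one can injectively map each unassigned interval to an adjacent assigned interval. This follows from Hall's condition—any subset $S$ of unassigned intervals has at least $|S|$ distinct adjacent assigned neighbors, since violating this would force a configuration of the form $U\,P\,U\,P\cdots U$ bounded by $U$'s at both ends with all $P$-blocks being singletons, in which $|\mathcal{U}_{\mathcal{P}}| = n+1 > n$, contradicting the claim. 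Equivalently, a simple greedy scan (assign each $U$ left-to-right to its left $P$ neighbor if free, otherwise to its right) can be verified to succeed.

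With this refined association, each returned $I_a$ is a union of $P_a$ and at most one $U \in \mathcal{U}_{\mathcal{P}}$. Applying inequality~\eqref{ineq:prop} to the (at most two) pieces comprising $I_b$ yields $2\, v_a(I_a) \geq 2\, v_a(P_a) \geq v_a(I_b) - \tfrac{2\varepsilon}{n^2}$, and combining with $\tfrac{9\varepsilon}{n} v_a(I_a) \geq \tfrac{3\varepsilon}{n^2} \geq \tfrac{2\varepsilon}{n^2}$ delivers $(2 + \tfrac{9\varepsilon}{n}) v_a(I_a) \geq v_a(I_b)$ for all $a,b \in [n]$. Polynomial running time is inherited from $\ALG1$, since Steps~$3'$ and $4'$ are no costlier than Steps~\ref{step:contenders} and~\ref{step:knife} and the same potential argument caps the number of iterations by $\varepsilon^{-1} n^3$. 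The main delicate point I anticipate is pinning down the matching argument cleanly across the case split of Claim~\ref{claim:unassigned-intervals} (boundary-adjacent $P_a$ vs.\ two adjacent $P_b, P_c$); the rest is a mechanical recomputation of the Theorem~\ref{theorem:2ef} estimates with the improved combinatorial input.
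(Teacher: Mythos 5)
Your proposal is correct and follows the paper's intended route: combine Claim~\ref{claim:unassigned-intervals}'s bound $|\mathcal{U}_{\mathcal{P}}| \leq n$ with a re-run of the Theorem~\ref{theorem:2ef} estimates. The paper's one-sentence proof implicitly assumes that, once $|\mathcal{U}_{\mathcal{P}}| \leq n$ holds, the association in Step~\ref{step:merge1} can always be made injective (each $P_a$ absorbs at most one unassigned piece); your Hall-condition/greedy-scan argument supplies exactly this step, which is the one non-trivial point the paper elides.
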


%Theorem~\ref{theorem:ef-nsw} (part (i)) and Theorem~\ref{theorem:2ef} directly provide the following approximation guarantee for the $\NSW$ objective. 

\subsection{Finding Nash Optimal Allocations}
Next, we will show that the allocations computed by $\ALG$ are not only $\left(3 + o(1) \right)$-$\EF$, but they also provide a $\left(3 + o(1) \right)$-approximation to Nash social welfare. 

The following theorem shows that an approximation ratio close to $3$ can be obtained for the Nash social welfare maximization problem when the number of agents, $n$, is appropriately large. Such an approximation guarantee can also be achieved for constant values of $n$. This follows from the observation that, for the Nash social welfare maximization problem, one can compute an $\alpha$-approximate solution (with $\alpha >1$) in time $\left(\frac{n}{\log \alpha} \right)^{\mathcal{O} \left( n \right)}$; see Appendix~\ref{appendix:const-num-agents-nsw} for details. Therefore, for any number of agents, maximizing Nash social welfare admits a polynomial-time $\left( 3 + o(1) \right)$-approximation algorithm.  

%A proportional allocation, among $n$ agents, achieves an $n$-approximation for the Nash social welfare maximization problem. Hence, when the number of agents is at most three, one can efficiently compute a $3$-approximate solution to maximizing Nash social welfare. 

\begin{restatable}{theorem}{TheoremNSW}
\label{theorem:nsw}
In cake-division instances with piecewise-constant valuations, the problem of maximizing Nash social welfare (with connected pieces) admits a polynomial-time $\left(3+ \frac{5}{n}\right)$-approximation algorithm; here $n$ is the number of agents participating in the cake-cutting exercise.  	
\end{restatable}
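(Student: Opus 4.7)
The plan is to run $\ALG$ (from Theorem~\ref{theorem:2ef}) with a suitably small constant $\varepsilon$ and compare its output allocation $\mathcal{I} = \{I_1, \dots, I_n\}$ against a Nash-optimal allocation $\mathcal{O} = \{O_1, \dots, O_n\}$. Polynomial runtime is inherited from Theorem~\ref{theorem:2ef}, so the entire effort lies in the approximation ratio. The main ingredient is Lemma~\ref{lemma:partial-ef}: for the final partial allocation $\mathcal{P}$ and its induced unassigned set $\mathcal{U}_{\mathcal{P}}$, every agent $a$ satisfies $v_a(Q) \leq v_a(P_a) + \varepsilon/n^2 \leq v_a(I_a) + \varepsilon/n^2$ for all $Q \in \mathcal{P} \cup \mathcal{U}_{\mathcal{P}}$ (the last step uses $P_a \subseteq I_a$).

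The heart of the argument is a common-refinement count on two partitions of the cake. Since $\mathcal{P} \cup \mathcal{U}_{\mathcal{P}}$ partitions $[0,1]$ into at most $2n+1$ pieces and $\mathcal{O}$ into $n$ pieces, letting $k_a$ denote the number of pieces of $\mathcal{P} \cup \mathcal{U}_{\mathcal{P}}$ intersecting $O_a$ gives $\sum_{a=1}^n k_a \leq n + (2n+1) - 1 = 3n$, so AM-GM yields $\prod_a k_a \leq 3^n$. Splitting $O_a$ across its $k_a$ covering pieces and applying the per-piece bound from Lemma~\ref{lemma:partial-ef},
\[
v_a(O_a) \;\leq\; k_a \cdot \max_{Q \in \mathcal{P} \cup \mathcal{U}_{\mathcal{P}},\, Q \cap O_a \neq \emptyset} v_a(Q) \;\leq\; k_a \Bigl( v_a(I_a) + \tfrac{\varepsilon}{n^2} \Bigr).
\]
Multiplying across $a$ then produces the central inequality
\[
\NSW(\mathcal{O})^n \;=\; \prod_{a=1}^n v_a(O_a) \;\leq\; 3^n \prod_{a=1}^n \Bigl( v_a(I_a) + \tfrac{\varepsilon}{n^2} \Bigr).
\]

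Finally, to absorb the additive slack $\varepsilon/n^2$, I would invoke the lower bound $v_a(I_a) \geq v_a(P_a) \geq \tfrac{1}{2n+1} - O(\varepsilon/n^2) = \Omega(1/n)$ that already appears in the proof of Theorem~\ref{theorem:2ef}. Each factor $(v_a(I_a) + \varepsilon/n^2)/v_a(I_a) = 1 + \varepsilon/(n^2 v_a(I_a))$ is therefore at most $1 + O(\varepsilon/n)$, so the product inflates by at most $(1 + O(\varepsilon/n))^n \leq e^{O(\varepsilon)}$. Taking $n$-th roots gives $\NSW(\mathcal{O})/\NSW(\mathcal{I}) \leq 3 \cdot e^{O(\varepsilon)/n}$, which is $3 + O(1/n)$ for constant $\varepsilon$. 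The main obstacle is the careful tuning of $\varepsilon$ and the bookkeeping of constants needed to deliver the advertised $5/n$ correction rather than an unspecified $O(1/n)$. For small (constant) $n$, where $5/n$ is too loose a target for $\ALG$ alone, I would complement $\ALG$ with the brute-force procedure of Appendix~\ref{appendix:const-num-agents-nsw}, which produces an $\alpha$-approximate NSW allocation in time $(n/\log\alpha)^{O(n)}$ -- polynomial for fixed $n$ -- and output the better of the two allocations, giving a polynomial-time $(3 + 5/n)$-approximation uniformly in $n$.
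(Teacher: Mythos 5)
Your proposal is correct and follows essentially the same route as the paper's proof: it rests on Lemma~\ref{lemma:partial-ef} to bound $v_a(Q)$ for each $Q\in\mathcal{P}\cup\mathcal{U}_{\mathcal{P}}$, covers the Nash-optimal pieces $O_a$ by the intersecting $Q$'s and sums to get $v_a(O_a)\le k_a\bigl(v_a(I_a)+\varepsilon/n^2\bigr)$, bounds $\sum_a k_a$ by a counting argument, applies AM--GM, and absorbs the additive $\varepsilon/n^2$ slack via the $\Omega(1/n)$ lower bound on $v_a(I_a)$ inherited from the proof of Theorem~\ref{theorem:2ef}, with the Appendix~\ref{appendix:const-num-agents-nsw} brute force covering small $n$. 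The one cosmetic difference is the counting step: you bound $\sum_a k_a\le n+(2n+1)-1=3n$ via a common-refinement argument while the paper counts right endpoints to obtain $\sum_a k_a\le 3n+1$, so your version is marginally sharper and comfortably delivers the advertised $3+5/n$ once the constants are traced through.
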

\begin{proof}
Let $\mathcal{I}=\{I_1, \ldots, I_n\}$ be the allocation returned by $\ALG$ and  write $\mathcal{P}=\{P_1, \ldots, P_n\}$ to denote the final partial allocation considered by $\ALG$ (i.e., $\mathcal{P}$ is the partial allocation with which the while-loop terminates). As before, $\mathcal{U}_{\mathcal{P}}$ denotes the set of unassigned intervals induced by $\mathcal{P}$. Note that, for each agent $a \in [n]$, the following containment holds $P_a \subset I_a$; hence, $v_a(I_a) \geq v_a (P_a)$. 

Let $I^*_a$ denote the interval assigned to agent $a \in [n]$ in the Nash optimal allocation $\mathcal{I}^* = \{I^*_1, \ldots, I^*_n\}$. Also, write $K_a$ to denote the set of intervals in the collection $\mathcal{P} \cup \mathcal{U}_{\mathcal{P}}$ that intersect with $I^*_a$ (see Figure~\ref{figure:overlap} for an illustration), $K_a \coloneqq \{ Q  \in \mathcal{P}  \cup \mathcal{U}_{\mathcal{P}} \mid Q \cap I^*_a \neq \emptyset \}$.\footnote{Here, we follow the above-mentioned convention that mandates two intervals to be disjoint, if they intersect exactly at an endpoint.} Let $k_a$ denote the cardinality of this set, $k_a \coloneqq |K_a|$.   
\begin{figure}[h]
\begin{center}
\includegraphics[scale=.62]{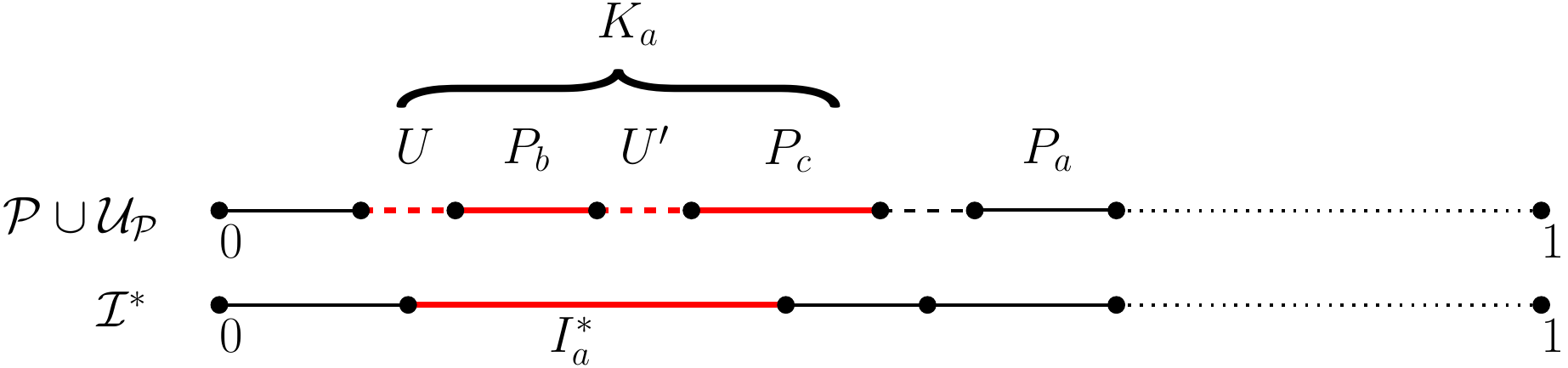}
\end{center}
\caption{In this example, $I^*_a$ intersects with four intervals from $\mathcal{P} \cup \mathcal{U}_{\mathcal{P}}$; in particular, $K_a = \{U, P_b, U', P_c \}$ with $k_a= 4$, and the right endpoints of three of them (namely, $U$, $P_b$, and $U'$) are contained in $I^*_a$, i.e., $\ell_a = 3$.}
\label{figure:overlap}
\end{figure}
Since $ \cup_{Q \in \mathcal{P} \cup \mathcal{U}_{\mathcal{P}}} Q = [0,1]$, interval $I^*_a$ is covered by the union of intervals in $K_a$; $I^*_a \subseteq \cup_{Q \in K_a} Q$. 

Lemma~\ref{lemma:partial-ef} gives us $v_a(P_a) \geq v_a(Q) - \varepsilon/n^2$ for all $Q \in K_a$; as before, $\varepsilon \in (0, 1/3]$ is the constant used in $\ALG$. In addition, recall that $v_a(I_a) \geq v_a(P_a)$ and $\frac{3\varepsilon}{n} v_a(I_a) \geq \varepsilon/n^2$ (see equation \eqref{ineq:half-prop-use}). These observations imply that  
\begin{align*}
\left( 1 + \frac{3\varepsilon}{n} \right) v_a(I_a) \geq v_a(Q) \qquad \text{for all } Q \in K_a
\end{align*}

Summing these inequalities and using the containment $I^*_a \subseteq \cup_{Q \in K_a} Q$ we get\footnote{Recall that the valuations are sigma additive.} 
\begin{align*}
\left(1 + \frac{3 \varepsilon}{n} \right) k_a \ v_a(I_a) & \geq v_a(I^*_a) \qquad \text{ for all } a\in [n].
\end{align*} 

Therefore, $\left(  \prod_{a=1}^n \left[ \left(1 + \frac{3\varepsilon}{n} \right) k_a \  v_a(I_a) \right]  \right)^{1/n}  \geq \left( \prod_a v_a(I^*_a) \right)^{1/n} = \NSW(\mathcal{I}^*)$. Simplifying this equation gives us   
\begin{align}
\NSW(\mathcal{I}^*) \leq \left(1 + \frac{3\varepsilon}{n} \right) \left( \prod_{a=1}^n  k_a \right)^{1/n} \ \left(  \prod_{a=1}^n  v_a(I_a) \right)^{1/n} = \left(1 + \frac{3\varepsilon}{n} \right) \left( \prod_{a=1}^n  k_a \right)^{1/n} \ \NSW(\mathcal{I}) \label{ineq:comp-nsw}
\end{align}
Next, we will show, via a simple counting argument, that $\sum_{a=1}^n k_a \leq 3n + 1$. For an interval $I^*_a$ in the Nash optimal allocation, write $\ell_a$ to denote the number of intervals in the collection $\mathcal{P} \cup \mathcal{U}_{\mathcal{P}}$ whose \emph{right} endpoint in contained in $I^*_a$. Note that $k_a \leq \ell_a + 1$ for all $a \in [n]$; at most one interval in $K_a$ ends after $I^*_a$. Therefore, $\sum_{a=1}^n k_a \leq n + \sum_{a=1}^n \ell_a \leq 3n+1$. The last inequality follows from the observation that the total number of right endpoints across intervals in $\mathcal{P} \cup \mathcal{U}_{\mathcal{P}}$ (i.e., $\sum_{a=1}^n \ell_a$) is at most $2n+1$, since each right endpoint is associated with exactly one interval from $\mathcal{P} \cup \mathcal{U}_{\mathcal{P}}$ (recall that $|\mathcal{P}| = n$ and $|\mathcal{U}_{\mathcal{P}}| \leq n+1$).   

The AM-GM inequality implies $\left( \prod_{a=1}^n k_a \right)^{1/n} \leq \frac{1}{n} \sum_{a=1}^n k_a \leq 3 + \frac{1}{n}$. This bound, along with inequality \eqref{ineq:comp-nsw},\footnote{Recall that $\varepsilon \leq 1/3$.} establishes the stated approximation guarantee: $$\left( 3 + \frac{5}{n} \right) \NSW ( \mathcal{I}) \geq \left( 3 + \frac{1}{n} \right) \left( 1 +  \frac{3\varepsilon}{n} \right) \NSW ( \mathcal{I}) \geq  \NSW(\mathcal{I}^*)$$. 
\end{proof}

We conclude this section by proving that the approximately envy-free allocations always have near-optimal Nash social welfare. Note that, directly invoking the following theorem for the $\left(3+o(1) \right)$-$\EF$ allocations computed by $\ALG$, one would essentially obtain an approximation ratio of six for the Nash social welfare maximization problem. Indeed, this guarantee is weaker than the one shown (via a tailored analysis) in Theorem~\ref{theorem:nsw}.   
 
\begin{restatable}{theorem}{TheoremEFNSW}
\label{theorem:ef-nsw}
In a cake-division instance, let $\widetilde{\mathcal{I}}$ be an $\alpha$-approximately envy-free allocation and $\mathcal{I}^*$ be a Nash optimal allocation. Then, 
\begin{itemize}
\item[(i)] $\widetilde{\mathcal{I}}$ provides a $2\alpha$-approximation to Nash social welfare, i.e., $\NSW(\widetilde{\mathcal{I}}) \geq \frac{1}{2\alpha} \NSW(\mathcal{I}^*)$.   
\item[(ii)] $\mathcal{I}^*$ is $4$-approximately envy-free. 
\end{itemize}
\end{restatable}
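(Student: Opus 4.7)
The plan is to handle the two parts separately: for (i) I will use a covering argument combined with AM-GM, and for (ii) a local swap argument against Nash-optimality. Both parts lean on the sigma-additivity and divisibility of the valuations.

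For part (i), I plan to mimic the counting strategy already used in the proof of Theorem~\ref{theorem:nsw}. For each agent $a$, let $K_a$ denote the set of pieces in $\widetilde{\mathcal{I}}$ that intersect the Nash-optimal interval $I^*_a$. Since $\widetilde{\mathcal{I}}$ covers $[0,1]$, we have $I^*_a \subseteq \bigcup_{\widetilde{I}_b \in K_a} \widetilde{I}_b$, and the $\alpha$-$\EF$ guarantee on $\widetilde{\mathcal{I}}$ gives $v_a(\widetilde{I}_b) \leq \alpha \, v_a(\widetilde{I}_a)$ for every $\widetilde{I}_b \in K_a$. Sigma-additivity then yields $v_a(I^*_a) \leq |K_a| \cdot \alpha \cdot v_a(\widetilde{I}_a)$. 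Taking geometric means across agents and applying AM-GM reduces the problem to bounding $\sum_a |K_a|$, which counts the non-empty intersections $I^*_a \cap \widetilde{I}_b$. Since both $\mathcal{I}^*$ and $\widetilde{\mathcal{I}}$ partition $[0,1]$ into exactly $n$ intervals, their common refinement has at most $2n - 1$ pieces, giving $\sum_a |K_a| \leq 2n - 1$ and hence $\left(\prod_a |K_a|\right)^{1/n} \leq 2 - 1/n < 2$. The stated bound $\NSW(\widetilde{\mathcal{I}}) \geq \tfrac{1}{2\alpha} \NSW(\mathcal{I}^*)$ then follows immediately.

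For part (ii), I will argue by contradiction: suppose $v_a(I^*_a) < \tfrac{1}{4} v_a(I^*_b)$ for some agents $a, b$. Using divisibility, I split $I^*_b$ at its $v_b$-midpoint into a left sub-interval $L$ and a right sub-interval $R$, so that $v_b(L) = v_b(R) = \tfrac{1}{2} v_b(I^*_b)$. Since $v_a(L) + v_a(R) = v_a(I^*_b) > 4\, v_a(I^*_a)$, at least one half $H \in \{L, R\}$ satisfies $v_a(H) > 2\, v_a(I^*_a)$. I then construct a new allocation by handing $H$ to $a$, the other half to $b$, and merging the now-orphaned interval $I^*_a$ into an adjacent piece (keeping that neighbor as its owner). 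A direct product comparison shows the three altered factors multiply to strictly more than $v_a(I^*_a)\, v_b(I^*_b)\, v_c(I^*_c)$, because $v_a(H) \cdot v_b(\text{other half}) > 2\, v_a(I^*_a) \cdot \tfrac{1}{2} v_b(I^*_b)$, and the merged piece can only grow $c$'s valuation. This strictly increases NSW, contradicting Nash-optimality of $\mathcal{I}^*$.

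The main obstacle I anticipate is ensuring that the reallocation in part (ii) is a genuine partition into \emph{connected} pieces across all configurations, not just the generic one. The delicate case occurs when $I^*_a$ sits at a boundary of $[0,1]$ and its only neighbor is $I^*_b$ itself (which can happen already when $n = 2$): no ``third'' interval is available into which $I^*_a$ can be absorbed. I plan to handle this by absorbing $I^*_a$ directly into the adjacent half of $I^*_b$ (for instance into $L$ if $I^*_a$ lies to the left of $I^*_b$), and then deciding which of $a$ or $b$ receives the merged interval depending on whether $H = L$ or $H = R$; a short case analysis shows that the strict NSW increase survives in either sub-case. Once this connectivity check is in place, the remainder of the argument is routine.
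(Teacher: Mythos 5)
Your proposal is correct and follows essentially the same approach as the paper: a covering-plus-AM-GM argument for part (i) (your common-refinement bound $\sum_a |K_a|\le 2n-1$ is a slightly tighter variant of the paper's right-endpoint count $\sum_a \widetilde k_a\le 2n$, and both give the factor $2$), and a local swap against Nash-optimality for part (ii). In part (ii) the paper sidesteps your connectivity worries more cleanly: it simply leaves $I^*_a$ unassigned, obtaining a \emph{partial} allocation, and invokes the fact that any partial allocation can be extended to a full allocation without decreasing NSW, which handles your $n=2$ edge case automatically.
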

\begin{proof}
Let $I^*_a$ denote the interval assigned to agent $a \in [n]$ in the Nash optimal allocation $\mathcal{I}^* = \{I^*_1, \ldots, I^*_n\}$. Also, write $\widetilde{K}_a$ to denote the set of intervals in the $\alpha$-$\EF$ allocation $\widetilde{\mathcal{I}} = \{\widetilde{I}_1, \ldots, \widetilde{I}_n\}$ that intersect with $I^*_a$, i.e., $\widetilde{K}_a \coloneqq \{ \widetilde{I}_b \in \widetilde{\mathcal{I}} \mid \widetilde{I}_b \cap I^*_a \neq \emptyset \}$.\footnote{Here, we follow the above-mentioned convention that mandates two intervals to be disjoint, if they intersect exactly at an endpoint.} Let $\widetilde{k}_a$ denote the cardinality of this set, $\widetilde{k}_a \coloneqq |\widetilde{K}_a|$.   
Since $\widetilde{\mathcal{I}}$ is an allocation, we have $\cup_a \widetilde{I}_a = [0,1]$. Therefore, for all agents $a \in [n]$, interval $I^*_a$ is covered by the union of intervals in $\widetilde{K}_a$; $I^*_a \subseteq \cup_{\widetilde{I} \in \widetilde{K}_a} \widetilde{I}$. Also, the fact that $\widetilde{\mathcal{I}}$ is $\alpha$-$\EF$ implies $\alpha v_a( \widetilde{I}_a) \geq v_a( \widetilde{I}_b)$ for all $\widetilde{I}_b \in \widetilde{K}_a$. Summing these inequalities and using the containment $I^*_a \subseteq \cup_{\widetilde{I} \in \widetilde{K}_a} \widetilde{I}$ gives us\footnote{Recall that the valuations are sigma additive.} 
\begin{align*}
\alpha \widetilde{k}_a \ v_a( \widetilde{I}_a) \geq v_a(I^*_a) \qquad \text{ for all } a \in [n].
\end{align*} 

Therefore, $\left(  \prod_{a=1}^n \left( \alpha \widetilde{k}_a \  v_a( \widetilde{I}_a) \right)  \right)^{1/n}  \geq \left( \prod_a v_a(I^*_a) \right)^{1/n} = \NSW(\mathcal{I}^*)$. Simplifying we get   
\begin{align}
\NSW(\mathcal{I}^*) \leq \alpha \left( \prod_{a=1}^n  \widetilde{k}_a \right)^{1/n} \ \left(  \prod_{a=1}^n  v_a( \widetilde{I}_a) \right)^{1/n} = \alpha \left( \prod_{a=1}^n  \widetilde{k}_a \right)^{1/n} \ \NSW(\widetilde{\mathcal{I}}) \label{ineq:comp}
\end{align}
Next, we will show, via a simple counting argument, that $\sum_{a=1}^n \widetilde{k}_a \leq 2n$. For an interval $I^*_a$ in the Nash optimal allocation, write $\widetilde{\ell}_a$ to denote the number of intervals in allocation $\widetilde{\mathcal{I}}=\{\widetilde{I}_1, \ldots, \widetilde{I}_n\}$ whose \emph{right} endpoint in contained in $I^*_a$. Note that $\widetilde{k}_a \leq \widetilde{\ell}_a + 1$ for all $a \in [n]$; at most one interval in $\widetilde{K}_a$ ends after $I^*_a$. Therefore, $\sum_{a=1}^n \widetilde{k}_a \leq n + \sum_{a=1}^n \widetilde{\ell}_a = 2n$. The last inequality follows from the observation that the total number of right endpoints across intervals in $\widetilde{\mathcal{I}}$ (i.e., $\sum_{a=1}^n \widetilde{\ell}_a$) is exactly equal to $n$, since each right endpoint is associated with exactly one interval.   

The AM-GM inequality implies $\left( \prod_{a=1}^n \widetilde{k}_a \right)^{1/n} \leq \frac{1}{n} \sum_{a=1}^n \widetilde{k}_a \leq 2$. This bound, along with inequality \eqref{ineq:comp}, establishes the stated approximation guarantee: $2 \alpha \ \NSW ( \widetilde{\mathcal{I}}) \geq \NSW(\mathcal{I}^*)$. 

To prove the complementary part of the Theorem (part (ii)) assume, towards a contradiction, that $\mathcal{I}^*=\{I^*_1, \ldots, I^*_n\}$ is not $4$-$\EF$. That is, there exist agents $a$ and $b$ such that $v_a(I^*_a) < \frac{1}{4} v_a (I^*_b)$. The divisibility of valuations ensures that the interval $I^*_b$ can be  partitioned into two disjoint intervals $I'_b$ and $I''_b$ with the property that $v_b (I'_b) = v_b(I''_b) = \frac{1}{2} v_b(I^*_b)$.  

Furthermore, since the valuations are sigma additive, for agent $a$ either interval $I'_b$ or $I''_b$ is of value strictly greater than $2 v_a(I^*_a)$. Say, $v_a(I''_b) > 2 v_a (I^*_a)$. Now, consider a partial allocation $\mathcal{J}= \{J_1, \ldots, J_n\}$ obtained by setting $J_c = I^*_c$ for all $c \in [n] \setminus \{ a, b \}$, $J_a  = I''_b$ and $J_b = I'_b$.  Note that $ v_a(J_a) v_b(J_b) > v_a(I^*_a) v_b(I^*_b)$, hence we have $\NSW(\mathcal{J}) > \NSW(\mathcal{I}^*)$. Given that any partial allocation can be extended to an allocation without decreasing the Nash social welfare, the previous inequality contradicts the optimality of $\mathcal{I}^*$. This establishes part (ii) of the theorem and completes the proof. 
\end{proof}

\section{Approximation Algorithm for $\rho$-Mean Welfare Maximization}
\label{section:interval-scheduling}
This section addresses cake-division with the objective of maximizing the $\rho$-mean welfare. We obtain an approximation algorithm for this problem via a simple reduction to the weighted job interval selection problem (\JISP) \cite{erlebach2003interval, bar2001approximating}.  

A problem instance of $\JISP$ consists of a tuple $\langle [n], \{\mathcal{J}_i,w_i \}_{i \in [n] } \rangle$, where $n$ denotes the number of \emph{jobs},\footnote{In the developed reduction, the number of jobs will be set equal to the number of agents present in the cake-division instance, hence we overload $n$ to denote both of these quantities.} and for each job $i \in [n]$ we have $\mathcal{J}_i$, a collection of intervals in $[0,1]$. Here, every interval in $\mathcal{J}_i$ is endowed with a weight $w_i: \mathcal{J}_i \mapsto \mathbb{R}_+$. The goal of $\JISP$ is to select a collection of non-intersecting intervals such that at most one interval is selected from each $\mathcal{J}_i$ and the total weight of the collection is as large as possible.  Formally, a feasible solution to a $\JISP$ problem instance consists of a set of intervals $\mathcal{F} = \{ F_1, F_2,\ldots,F_n\}$ such that: (i)  all the selected intervals are pairwise disjoint, $F_i \cap F_j = \emptyset$ for all $i \neq j$. (ii) at most one interval is selected from each job: $F_i \in \mathcal{J}_i$ or $F_i$ is the empty interval, for all $i \in [n]$.

The objective of $\JISP$ is to find a feasible solution $\mathcal{F} =\{ F_1, \ldots, F_n\}$ that maximizes $\sum_{i \in [n]} w_i(F_i)$.\footnote{We follow the convention that the weight of the empty interval is equal to zero.} For a solution $\mathcal{F}$, we denote the value of this weight objective by $w(\mathcal{F})$.
         
The following lemma shows that, given a cake-division instance, we can construct a $\JISP$ instance such that the $\rho$-mean welfare is approximated by the weight objective. That is, the lemma presents an approximation-preserving reduction from cake division to interval scheduling. 
    \begin{restatable}{lemma}{LemmaDiscretization} \label{lemma:discretization}
   	Given a cake-division instance $\langle [n], \{v_a\}_a \rangle$ with piecewise-constant valuations along with parameters $\rho \in (0,1]$ and $\varepsilon \in (0,1]$, one can construct a $\JISP$ instance $\langle [n], \{\mathcal{J}_i,w_i \}_{i \in [n] } \rangle$,  in time $\left(\frac{n}{\varepsilon}\right)^{\mathcal{O} (1/ \rho)}$, such that  
   	\begin{enumerate}
   		\item If $\widehat{\mathcal{I}}=\{\widehat{I}_1, \ldots, \widehat{I}_n \}$ is an allocation that maximizes the $\rho$-mean welfare in the given cake-division instance, then there exists a feasible solution $\mathcal{F} = \{F_1, F_2, \ldots F_n\}$ of the $\JISP$ instance such that  		
   		\begin{align*}
   		\left( \sum_{i \in [n]} w_i (F_i) \right)^{1/\rho}  & \geq  \left(1 - \frac{\varepsilon}{n} \right)^{1 + \frac{1}{\rho}} \left( \sum_{a \in [n]} \left[v_a (\widehat{I}_a)\right]^{\rho} \right)^{1/\rho} 
   		\end{align*} 		
   		\item For every feasible $\JISP$ solution $\mathcal{F} = \{F_1, F_2, \ldots F_n\}$, there exists a partial allocation $\I=\{I_1, \ldots, I_n\}$ in the cake-division instance with the property that $ \left( \sum_{a} \left[ v_a (I_a) \right]^{\rho} \right)^{1/\rho}  = \left( \sum_{i} w (F_i) \right)^{1/\rho}  $. Furthermore, partial allocation $\I$ can be computed from the given JISP solution $\mathcal{F}$ in polynomial time. % polynomial in $n$.
   	\end{enumerate}
   \end{restatable}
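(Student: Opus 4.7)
The plan is to encode cake-division as a $\JISP$ instance by turning each agent $a$'s infinite space of possible pieces into a finite candidate family $\mathcal{J}_a$ of intervals, with weights $w_a(I) \vcentcolon= [v_a(I)]^\rho$. With this choice of weights, Part~(2) is immediate: a feasible $\JISP$ solution $\mathcal{F} = \{F_1, \dots, F_n\}$ has pairwise-disjoint $F_i$'s by feasibility and each $F_i \in \mathcal{J}_i$, so defining $I_a \vcentcolon= F_a$ yields a partial allocation with $\sum_a [v_a(I_a)]^\rho = \sum_i w_i(F_i)$ identically. The substance of the proof is therefore twofold: (a) keep $|\mathcal{J}_a| \leq (n/\varepsilon)^{\mathcal{O}(1/\rho)}$ to control construction time, and (b) show that any optimal cake allocation $\widehat{\mathcal{I}}$ can be \emph{snapped} to a feasible $\JISP$ solution losing at most a $(1-\varepsilon/n)^{1+1/\rho}$ factor in the $\rho$-th power sum.

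For the construction, I would fix a discretization step $\delta = \Theta(\varepsilon/n)$ and a value threshold $\tau$ on the order of $(\varepsilon/n^2)^{1/\rho}$. For each agent $a$, the candidate family $\mathcal{J}_a$ would consist of the empty interval together with every interval $[x,y]$ for which $v_a([0,x])$ is a multiple of $\delta\tau$ and $v_a([x,y])$ lies on the geometric value-ladder $\{\tau(1+\delta)^k : 0 \leq k \leq \lceil \log_{1+\delta}(1/\tau)\rceil\}$. Piecewise-constancy of $\nu_a$ makes these candidates computable directly via cut queries, and a straightforward count gives $|\mathcal{J}_a| \leq \bigl(1/(\delta\tau)\bigr) \cdot \lceil \log_{1+\delta}(1/\tau)\rceil = (n/\varepsilon)^{\mathcal{O}(1/\rho)}$, which also bounds the construction time as claimed.

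To prove Part~(1), I would snap $\widehat{\mathcal{I}} = \{\widehat{I}_a = [\widehat{x}_a, \widehat{y}_a]\}_{a \in [n]}$ to a feasible $\mathcal{F}$ as follows. For each agent $a$ with $v_a(\widehat{I}_a) \geq \tau$, let $x_a$ be the smallest left-endpoint grid-point of agent $a$ lying within $\widehat{I}_a$; by the spacing of the grid, $v_a([\widehat{x}_a, x_a]) \leq \delta\tau$. Then select the largest $k$ such that the candidate interval $F_a$ starting at $x_a$ with value $\tau(1+\delta)^k$ fits inside $\widehat{I}_a$; this yields $v_a(F_a) \geq (1-\delta)\bigl(v_a(\widehat{I}_a) - \delta\tau\bigr) \geq \bigl(1-\mathcal{O}(\varepsilon/n)\bigr)\, v_a(\widehat{I}_a)$, where the last step uses $\tau \leq v_a(\widehat{I}_a)$. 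For agents with $v_a(\widehat{I}_a) < \tau$, set $F_a = \emptyset$. Since each $F_a \subseteq \widehat{I}_a$, disjointness of the $\widehat{I}_a$'s forces disjointness of the $F_a$'s, so $\mathcal{F}$ is $\JISP$-feasible. Summing and using $n\tau^\rho \leq (\varepsilon/n) \sum_a [v_a(\widehat{I}_a)]^\rho$ (which follows from the choice of $\tau$ together with the lower bound $\sum_a [v_a(\widehat{I}_a)]^\rho \geq n^{1-\rho}$ guaranteed by the existence of a connected proportional allocation), one obtains
\begin{align*}
\sum_i w_i(F_i) \;\geq\; (1-\varepsilon/n)^\rho \sum_{a : v_a(\widehat{I}_a) \geq \tau} [v_a(\widehat{I}_a)]^\rho \;\geq\; (1-\varepsilon/n)^\rho \Bigl(\sum_a [v_a(\widehat{I}_a)]^\rho - n\tau^\rho \Bigr) \;\geq\; (1-\varepsilon/n)^{\rho+1} \sum_a [v_a(\widehat{I}_a)]^\rho,
\end{align*}
and taking $1/\rho$-th roots gives the required $(1-\varepsilon/n)^{1+1/\rho}$ factor.

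The main obstacle I anticipate is the joint calibration of $\tau$ and $\delta$: the threshold has to be small enough---measured in the $\rho$-power scale---to be absorbed into the $(1-\varepsilon/n)^{1+1/\rho}$ budget, yet not so small that either the length of the value-ladder or the resolution of the left-endpoint grid pushes $|\mathcal{J}_a|$ beyond $(n/\varepsilon)^{\mathcal{O}(1/\rho)}$. The $1/\rho$ exponent in the final bound on construction time enters precisely because $\log(1/\tau)$ scales as $1/\rho$ under this calibration, a feature that also reflects why the overall approximation ratio $(2+o(1))^{1/\rho}$ deteriorates as $\rho \to 0$.
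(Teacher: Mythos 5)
Your overall strategy---define weights $w_a(I) \coloneqq [v_a(I)]^\rho$ (making Part (2) immediate), discretize so that any optimal cake interval can be snapped to a candidate with a $(1-\mathcal{O}(\varepsilon/n))$ multiplicative loss, and absorb agents whose optimal value falls below a threshold $\tau$ using the lower bound $\sum_a [v_a(\widehat{I}_a)]^\rho \geq n^{1-\rho}$ coming from a connected proportional allocation---matches the paper's. The construction differs in detail: you build per-agent families $\mathcal{J}_a$ from a per-agent left-endpoint grid (multiples of $\delta\tau$ in $v_a$-measure) crossed with a geometric value-ladder $\{\tau(1+\delta)^k\}$, whereas the paper computes one common cut set $\mathcal{X}$ fine enough that every cell has value at most $\delta\varepsilon/(2n)$ under \emph{every} agent, and sets $\mathcal{J}_a$ to all grid-to-grid intervals. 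Both give $|\mathcal{J}_a| = (n/\varepsilon)^{\mathcal{O}(1/\rho)}$, and both ultimately partition agents into low- and high-valued sets and absorb the low side into the $(1-\varepsilon/n)$ budget. So this is a genuinely parallel route.

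There is, however, a concrete gap in your threshold calibration. For an agent with $v_a(\widehat{I}_a) \geq \tau$, you pick the smallest grid point $x_a \in \widehat{I}_a$ (losing at most $\delta\tau$ on the left) and then take the largest $k$ with $\tau(1+\delta)^k \leq v_a([x_a,\widehat{y}_a])$. But the existence of any $k \geq 0$ requires $v_a([x_a,\widehat{y}_a]) \geq \tau$, and you only have $v_a([x_a,\widehat{y}_a]) \geq v_a(\widehat{I}_a) - \delta\tau$. For the band $v_a(\widehat{I}_a) \in [\tau, (1+\delta)\tau)$ this can drop below $\tau$, so the ladder-snapping produces no candidate and the estimate $v_a(F_a) \geq (1-\delta)\bigl(v_a(\widehat{I}_a) - \delta\tau\bigr)$ has no interval behind it. The fix is easy (raise the removal threshold to $(1+\delta)\tau$, or extend the ladder one rung below $\tau$, and rescale $\tau$ so that $n\tau^\rho$ is still $\leq (\varepsilon/n)\sum_a [v_a(\widehat{I}_a)]^\rho$), but as written the proof has a hole for that thin band of agents. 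Note that the paper's construction sidesteps this issue cleanly: snapping $\widehat{I}_a$ to the largest grid-to-grid interval \emph{inside} it can only yield an empty interval when $v_a(\widehat{I}_a) \leq \delta\varepsilon/(2n) \leq \delta$, in which case $a$ is automatically low-valued. A smaller nit: with piecewise-constant densities, $x \mapsto v_a([0,x])$ can be locally constant, so ``the $x$ with $v_a([0,x])$ a multiple of $\delta\tau$'' is not unique; you should fix a canonical choice such as the leftmost such $x$.
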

\begin{proof}
First, we describe the construction of a $\JISP$ instance from the given cake-division instance. The reduction is based on discretizing the cake. 
   
Let $\delta := \left(\frac{\varepsilon}{n^2} \right)^{1/\rho}$, where $\varepsilon \in (0,1]$ is a given parameter. We find a set of $\mathcal{O}\left( \left( \varepsilon^{-1} \ {n^2} \right)^{1 + \frac{1}{\rho}} \right)$ points (cake cuts) $0= x_0 < x_1 < x_2 < \ldots, x_k < x_{k+1}=1$ such that for every index $\ell \in \{0,1, \ldots, k \}$ and each agent $a \in [n]$ we have $v_a([x_\ell, x_{\ell+1}]) \le \frac{\delta \varepsilon}{2n}$. The procedure to find these points starts by initializing $x_0 = 0$. Then, iteratively, for each $\ell \geq 0$, we find $x_{\ell + 1}$ using $x_\ell$: for each agent $a$, find the smallest value $c_a \in (x_\ell, 1]$ such that $v_a([x_\ell, c_a]) = \frac{\delta \varepsilon}{2n}$.\footnote{This step can be performed efficiently in the Robertson-Webb model, as well as for piecewise-constant valuations.} If for an agent $a$ we have $v_a([x_\ell, 1]) < \frac{\delta \varepsilon}{2n}$, then assign $c_a =1$. We set $x_{\ell+  1} \coloneqq \min_{a \in [n]} c_a$ and continue as long as $x_{\ell + 1} <1$. Write $\mathcal{X}\coloneqq \{x_0, x_1, \ldots, x_{k+1} \} $ to denote this set of points and note that the normalization of the agents' valuations ensures that the cardinality of $\mathcal{O}\left( \left( \varepsilon^{-1} \ {n^2} \right)^{1 + \frac{1}{\rho}} \right)$.

The reduction to $\JISP$ is as follows. For each agent $a \in [n]$, we associate a job with the same index in the $\JISP$ instance $\langle [n], \{\mathcal{J}_a,w_a \}_{a \in [n] } \rangle$. Furthermore, for each $a \in [n]$, the set $\mathcal{J}_a$ is defined to be collection of all intervals with endpoints in the computed set of cuts $\mathcal{X}$, i.e., $\mathcal{J}_a \coloneqq \{[x_\ell,x_r]  \mid  0 \le \ell < r \le k+1 \text{ and } \ell,r \in \mathbb{Z} \}$. Note that the set of intervals is the same for all jobs. Each weight function $w_a$ is defined as $ w_a([x_\ell,x_r])  \coloneqq \left( v_a([x_\ell,x_r]) \right)^\rho$  for all intervals $[x_\ell, x_r] \in \mathcal{J}_a$.
      
Now, we will prove the first part of the Lemma. Consider $\widehat{I} = \{\widehat{I}_1,\widehat{I}_2,\cdots,\widehat{I}_n \}$, an allocation that maximizes the $\rho$-mean welfare. From $\widehat{I}$, we obtain a feasible solution $\mathcal{F} = \{F_1, \ldots, F_n \}$ by setting $F_a$, for each $a \in [n]$, to be the largest interval of the form $[x_\ell,x_r] \in \mathcal{J}_a$ that is contained within $\widehat{I}_a$. Specifically, if $\widehat{I}_a = [y,z]$, then $F_a = [x_\ell,x_r] \in \mathcal{J}_a$ where $\ell \coloneqq \min_{x_i \geq y} i$ and $r \coloneqq \max_{x_j \leq z} j$.\footnote{If $\widehat{I}_a$ is the empty interval, then so is $F_a$.}  Excluding the (trivial) corner cases, we have $x_{\ell - 1} < y \leq x_\ell$ and $x_{r} \leq z < x_{r+1}$. Note that $F_a$ is obtained by removing subintervals at the two ends of $\widehat{I}_a$. These removed subintervals are of value at most $v_a([x_{\ell-1}, x_{\ell}]) \leq \frac{\delta \varepsilon}{2n}$ and $v_a([x_{r}, x_{r+1}]) \leq \frac{\delta \varepsilon}{2n}$, respectively.  Hence, using the fact that the valuations are sigma additive, we obtain the following bound on the value lost in discretizing $\widehat{I}_a$ to obtain $F_a$:
       \begin{align}
        \label{ineq:valueloss}
      v_a(\widehat{I}_a) - v_a(F_a) \le 2 \times \frac{\delta \varepsilon}{2n} = \frac{\delta \varepsilon}{n}
    \end{align}
Since the intervals in the optimal allocation $\widehat{I}$ are pairwise disjoint and each $F_a \in \mathcal{J}_a$ (or $F_a$ is the empty interval), the collection $\mathcal{F} = \{F_1, F_2, \cdots F_n\}$ is a feasible solution of the $\JISP$ instance. 
 
For analysis, we partition the set of agents (and correspondingly the jobs) into low-valued and high-valued agents. In particular, define $\mathcal{S}_1 \coloneqq \{a \in [n] \mid v_a(\widehat{I}_a) \le \delta\}$ and $\mathcal{S}_2 =[n] \setminus \mathcal{S}_1$ to be the remaining agents. 

For all agents $a \in \mathcal{S}_1$, by definition, we have 
    \begin{align}
    \label{ineq:small}
        \left[v_a(\widehat{I}_a)\right]^\rho \le \left(\Big(\frac{\varepsilon}{n^2}\Big)^{1/\rho}\right)^\rho = \frac{\varepsilon}{n^2}
    \end{align}
    \noindent
    For all jobs $a \in \mathcal{S}_2$ 
   \begin{align}
        w_a(F_a) &= \left[v_a(F_a)\right]^\rho \nonumber \\
                &\geq \left[v_a(\widehat{I}_a) - \frac{\delta \varepsilon}{n} \right]^\rho  \tag{via \eqref{ineq:valueloss}} \nonumber\\
                &\geq \left[ v_a(\widehat{I}_a) \left( 1 -\frac{\varepsilon}{n} \right) \right]^\rho \tag{since $v_a(\widehat{I}_a) \ge \delta$} \nonumber\\     
                &\geq \left(1-\frac{\varepsilon}{n}\right)^\rho \left[v_a (\widehat{I}_a)\right]^\rho    \label{ineq:large}
    \end{align} 
    Let $T$ be the optimal $\rho$-mean welfare in the given cake-division instance, i.e., $T= \mathrm{M}_\rho(\widehat{I}) = \left( \frac{1}{n} \sum_{a=1}^n \left[v_a(\widehat{I}_a)\right]^\rho \right)^{1/\rho}$. Note that $T \ge 1/n$, since the $\rho$-mean welfare of a proportional division of the cake is at least $1/n$. Given that $\mathcal{S}_1$ and $\mathcal{S}_2$ partition $[n]$, we have  
    %\begin{align*}
       $ \frac{1}{n} \sum_{a \in \mathcal{S}_1} \left[v_a(\widehat{I}_a)\right]^\rho +  \frac{1}{n} \sum_{a \in \mathcal{S}_2} \left[v_a(\widehat{I}_a)\right]^\rho    = T^\rho$.
   % \end{align*}
Using inequality \eqref{ineq:small}, we obtain a lower bound on the contribution of the agents in $\mathcal{S}_2$ to the $\rho$-mean welfare
    \begin{align*}
        \frac{1}{n} \sum_{a \in \mathcal{S}_2} \left[v_a(\widehat{I}_a)\right]^\rho  &\ge T^\rho - \frac{1}{n} \  n \ \frac{\varepsilon}{n^2} 
         \ge \left(1- \frac{\varepsilon}{n}\right) T^\rho \tag{since $T \ge 1/n$ and $\rho \in (0,1]$}
    \end{align*}
    Equation \eqref{ineq:large} connects this lower bound to the weight of the feasible solution $\mathcal{F}$ 
        \begin{align*}
        \frac{1}{n} \sum_{a \in \mathcal{S}_2} w_a(F_a) &\ge \left(1 -\frac{\varepsilon}{n}\right)  \left(1- \frac{\varepsilon}{n}\right)^\rho T^\rho 
        =  \left(1 -\frac{\varepsilon}{n}\right)  \left(1- \frac{\varepsilon}{n}\right)^\rho \frac{1}{n} \left(\sum_{a=1}^n \left[v_a(\widehat{I}_a)\right]^\rho \right) 
        \end{align*}
 Multiplying both sides of this inequality by $n$, we get\footnote{Recall that the weights of all the intervals in the $\JISP$ instance are nonnegative.}
    \begin{align*}
        \left(\sum_{a \in [n]} w_a(F_a)\right) &\ge \left(1 -\frac{\varepsilon}{n}\right)  \left(1- \frac{\varepsilon}{n}\right)^\rho \left(\sum_{a=1}^n \left[v_a(\widehat{I}_a)\right]^\rho \right)
    \end{align*}    
    Finally, we obtain the desired inequality by exponentiating both sides of the previous inequality to the power $\frac{1}{\rho}$
        \begin{align*}
        \left(\sum_{a \in [n]} w_a(F_a)\right)^{1/\rho} &\ge \left(1 -\frac{\varepsilon}{n}\right)  \left(1- \frac{\varepsilon}{n}\right)^\frac{1}{\rho} \left(\sum_{a=1}^n \left[v_a(\widehat{I}_a)\right]^\rho \right)^\frac{1}{\rho} 
        %  \ge \left(1 -\frac{1}{n}\right)  \left(1- \frac{1}{\rho n}\right) \left(\sum_{a=1}^n \left[v_a(\widehat{I}_a)\right]^\rho \right)^\frac{1}{\rho}
    \end{align*}
%The last inequality follows from the fact that $(1-x)^a \ge 1-ax$ for all $x \in [0,1]$ and $a\geq 1$. 
This concludes the proof of the first part of the Lemma.
    
    For the second part of the Lemma, note that every feasible solution $\mathcal{F} = \{F_1,F_2, \cdots , F_n\}$ of the $\JISP$ instance can also be mapped directly to an (partial) allocation $\mathcal{I} =\{I_1, \ldots, I_n \}$ of the cake: each agent $a$ receives the interval $I_a = F_a \subseteq [0,1]$, which (by feasibility of $\mathcal{F}$) does not intersect with any other agent's interval. Also, by definition of the weights, $w_a(F_a) = \left[v_a(F_a)\right]^\rho = \left[v_a(I_a)\right]^\rho$, which gives us the desired equality.
    
\end{proof}

   This lemma allows us to directly invoke the result of Bar-Noy et al. \cite{bar2001approximating} (which provides a polynomial-time $2$-approximation algorithm for $\JISP$) to obtain an approximation algorithm for the $\rho$-mean maximization problem. %Specifically, we can show the existence of an algorithm that finds a ${2^{1/\rho}}$-approximation to the $\rho$-mean welfare maximization problem and runs in time that is polynomial in $n^{ \frac{2}{\rho} + 2 }$. The result is stated and proved in appendix ~\cite{}.
   The main result of this section is stated in the following theorem. % the proof of which appears in Appendix~\ref{appendix:proof-of-rho-apx}.
   
   \begin{restatable}{theorem}{IntervalApprox}
   \label{theorem:interval-approx}
   	For $\rho \in (0,1]$, $\varepsilon \in (0,1)$ and cake-division instances $\C = \left\langle [n], \{v_a\}_{a \in [n]} \right\rangle$ with piecewise-constant valuations, there exists an algorithm that---in time $\left(\frac{n}{\varepsilon}\right)^{\mathcal{O} (1/\rho)}$---finds a  $\left(2 + \frac{4 \varepsilon e}{n} \right)^{\frac{1}{\rho}}$-approximation to the $\rho$-mean welfare maximization problem.% and runs in time $n^{\mathcal{O} \left( \frac{1}{\rho} \right)}$.
	\end{restatable}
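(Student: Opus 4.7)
The plan is to use Lemma~\ref{lemma:discretization} to reduce the $\rho$-mean welfare maximization problem to a weighted $\JISP$ instance, invoke the polynomial-time $2$-approximation algorithm of Bar-Noy et al.~\cite{bar2001approximating} on it, and then translate the resulting $\JISP$ solution back into a cake allocation. In more detail, given the instance $\C$ and parameters $\rho, \varepsilon$, the algorithm first builds the $\JISP$ instance $\langle [n], \{\mathcal{J}_a, w_a\}_{a\in[n]}\rangle$ guaranteed by Lemma~\ref{lemma:discretization}, which takes time $(n/\varepsilon)^{\mathcal{O}(1/\rho)}$. It then applies the $2$-approximation of~\cite{bar2001approximating} to obtain a feasible solution $\mathcal{F} = \{F_1,\ldots,F_n\}$ with $\sum_{a} w_a(F_a) \geq \tfrac{1}{2} W^*$, where $W^*$ denotes the optimal $\JISP$ weight. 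Finally, using part~(2) of Lemma~\ref{lemma:discretization}, the algorithm translates $\mathcal{F}$ into a (partial) allocation $\mathcal{I}$ satisfying $\sum_{a} [v_a(I_a)]^\rho = \sum_{a} w_a(F_a)$; any leftover cake is assigned arbitrarily to complete $\mathcal{I}$ into a full allocation, which (since $v \mapsto v^\rho$ is increasing for $\rho > 0$) only increases $\mathrm{M}_\rho(\mathcal{I})$.

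For the approximation analysis, let $\widehat{\mathcal{I}} = \{\widehat{I}_1,\ldots,\widehat{I}_n\}$ be a $\rho$-mean optimal allocation. Raising the inequality in part~(1) of Lemma~\ref{lemma:discretization} to the $\rho$-th power, we obtain a feasible $\JISP$ solution whose weight is at least $(1-\varepsilon/n)^{\rho+1} \sum_a [v_a(\widehat{I}_a)]^\rho$, so in particular
\begin{align*}
W^* \ \geq \ \left(1-\tfrac{\varepsilon}{n}\right)^{\rho+1} \sum_{a}[v_a(\widehat{I}_a)]^\rho.
\end{align*}
Combining this with the $2$-approximation guarantee and applying part~(2) of Lemma~\ref{lemma:discretization} yields a returned allocation $\mathcal{I}$ with $\sum_a [v_a(I_a)]^\rho \geq \tfrac{1}{2}(1-\varepsilon/n)^{\rho+1} \sum_a [v_a(\widehat{I}_a)]^\rho$. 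Dividing by $n$ and raising to the power $1/\rho$ gives
\begin{align*}
\mathrm{M}_\rho(\mathcal{I}) \ \geq \ \frac{(1-\varepsilon/n)^{1+1/\rho}}{2^{1/\rho}} \ \mathrm{M}_\rho(\widehat{\mathcal{I}}).
\end{align*}

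It remains to verify $2^{1/\rho}/(1-\varepsilon/n)^{1+1/\rho} \leq (2 + 4 \varepsilon e/n)^{1/\rho}$ which, after raising both sides to the $\rho$-th power, reduces to the purely scalar inequality $2/(1-\varepsilon/n)^{\rho+1} \leq 2 + 4\varepsilon e/n$. This is the main quantitative step and the only real obstacle I anticipate. The natural route is to first apply Bernoulli's inequality $(1-\varepsilon/n)^{\rho+1} \geq 1 - (\rho+1)\varepsilon/n \geq 1 - 2\varepsilon/n$ (valid because $\rho+1 \leq 2$), and then the elementary estimate $1/(1-2\varepsilon/n) \leq 1 + 2e\varepsilon/n$, which holds whenever $\varepsilon/n \leq (e-1)/(2e)$---a condition easily met for all but the smallest $n$ (small-$n$ cases can be handled separately, e.g., via cut-and-choose for $n = 2$). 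Putting everything together delivers the claimed $(2 + 4\varepsilon e/n)^{1/\rho}$-approximation, with total runtime dominated by the construction of the $\JISP$ instance in Lemma~\ref{lemma:discretization}, i.e., $(n/\varepsilon)^{\mathcal{O}(1/\rho)}$.
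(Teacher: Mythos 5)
Your high-level strategy is identical to the paper's: construct the $\JISP$ instance via Lemma~\ref{lemma:discretization}, run the $2$-approximation algorithm of Bar-Noy et al., translate the returned feasible solution back to a (partial) cake allocation using part~(2) of the lemma, and finish with a scalar bound relating $(1-\varepsilon/n)^{\rho+1}$ to the target constant. Where you diverge is in the final algebraic verification, and here your version is actually the more careful one. The paper's own chain passes through the step $\bigl(1-\tfrac{\varepsilon}{n}\bigr)^{\frac{1}{\rho}+1} \geq e^{-\frac{\varepsilon}{n}(\frac{1}{\rho}+1)}$, but this inequality points the wrong way: for $x\in(0,1)$ and $r>0$ one has $(1-x)^r < e^{-rx}$ (since $\ln(1-x) < -x$), so that step is incorrect as written. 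Your Bernoulli-based route $\bigl(1-\tfrac{\varepsilon}{n}\bigr)^{\rho+1} \geq 1 - 2\varepsilon/n$ followed by $\tfrac{1}{1-2\varepsilon/n}\leq 1+2e\varepsilon/n$ is sound, and it is a genuine improvement in rigor over the published argument.

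You were also right to flag the small-$n$ caveat as a real, not merely technical, issue. The scalar inequality $\tfrac{2}{(1-\varepsilon/n)^{\rho+1}}\leq 2+\tfrac{4\varepsilon e}{n}$ does in fact fail for some admissible parameters---take $n=2$, $\rho=1$, $\varepsilon=0.95$: the left side is $2/(0.525)^2\approx 7.26$ while the right side is $2+1.9e\approx 7.17$. So some side-argument for tiny $n$ (or a restriction to small $\varepsilon$, which is the regime of interest anyway since the theorem is quoted as a $(2+o(1))^{1/\rho}$ guarantee) is genuinely needed; your proposal to treat $n=2$ separately is the right instinct, though you should make that argument explicit rather than gesture at cut-and-choose, which certifies envy-freeness rather than $\rho$-mean optimality. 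With that one loose end tightened, your proof is complete and, in my view, cleaner than the paper's own.
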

\begin{proof}
For the given cake-division instance, we instantiate Lemma~\ref{lemma:discretization} to construct a $\JISP$ instance $\langle [n], \{\mathcal{J}_i,w_i \}_{i \in [n] } \rangle$ in time $\left(\frac{n}{\varepsilon}\right)^{\mathcal{O} (1/\rho)}$. 
    
    Let $\widehat{\mathcal{I}} =\{\widehat{I}_1,\ldots, \widehat{I}_n\}$ be an allocation in the cake-division instance that maximizes the $\rho$-mean welfare. The first part of Lemma~\ref{lemma:discretization} asserts that there there exists a feasible solution $\mathcal{F}$ in the constructed $\JISP$ instance such that 
    \begin{align}
        \left(w(\mathcal{F})\right)^\frac{1}{\rho}  \ge  \left(1-\frac{\varepsilon}{n}\right)^{ \frac{1}{\rho} + 1} \left( \sum_{a \in [n]} \left[v_a (\widehat{I}_a)\right]^{\rho} \right)^{1/\rho} & \geq  e^{-\frac{\varepsilon}{n} \left( \frac{1}{\rho} + 1 \right) } \left( \sum_{a \in [n]} \left[v_a (\widehat{I}_a)\right]^{\rho} \right)^{1/\rho} \nonumber \\
             & \geq  e^{-\frac{\varepsilon}{n} \frac{2}{\rho}} \left( \sum_{a \in [n]} \left[v_a (\widehat{I}_a)\right]^{\rho} \right)^{1/\rho} 
        \label{ineq:part1}
    \end{align}
    The last inequality follows from the fact that $1/\rho \geq 1$. 
    
    Let $\mathcal{F}^*$ be an optimal solution of the $\JISP$ instance, i.e., $\mathcal{F}^*$ has the maximum possible weight, $w(\mathcal{F}^*)$, among all feasible solutions.  
     
    The  algorithm of Bar-Noy et al.~\cite{bar2001approximating} achieves an approximation ratio of $2$ for $\JISP$, i.e., it efficiently computes a feasible schedule $\mathcal{G} = \{G_1,G_2,\ldots,G_n\}$ which satisfies, $w(\mathcal{G}) \geq \frac{1}{2} w(\mathcal{F}^*)$. Note that, using the second part of Lemma~\ref{lemma:discretization}, we can efficiently find an allocation $\mathcal{I}=\{I_1, \ldots,  I_n\}$ (in the underlying cake-division instance) such that $\left( \sum_{a \in [n]} \left[v_a (I_a)\right]^{\rho} \right) = w(\mathcal{G})  \geq \frac{1}{2} w(\mathcal{F}^*)$. Exponentiating both sides of the previous inequality to the power $\frac{1}{\rho}$ gives us 
    
    \begin{align*}
        \left( \sum_{a \in [n]} \left[v_a (I_a)\right]^{\rho} \right)^\frac{1}{\rho} \ge \left(\frac{1}{2}\right)^\frac{1}{\rho} \left(w(\mathcal{F}^*)\right)^\frac{1}{\rho} 
    %    &\ge \left(\frac{1}{2}\right)^\frac{1}{\rho} (w(\mathcal{F}))^\frac{1}{\rho} &\text{By optimality of $\mathcal{P}^*$}\\
      \ge  2^{-\frac{1}{\rho}}  e^{-\frac{2 \varepsilon}{n \rho}} \ \left( \sum_{a \in [n]} \left[v_a (\widehat{I}_a)\right]^{\rho} \right)^{1/\rho} 
        \end{align*}
 The last inequality follows from \eqref{ineq:part1} and the optimality of $\mathcal{F}^*$. To obtain the desired approximation guarantee we divide both sides of the previous equation by $n^\frac{1}{\rho}$       
        \begin{align*}
        \mathrm{M}_\rho(\mathcal{I}) &\ge  2^{-\frac{1}{\rho}} \ e^{-\frac{2 \varepsilon}{n \rho}} \  \mathrm{M}_\rho(\widehat{\mathcal{I}}) \\ 
        & = \left( 2 \ e^{2 \varepsilon/n} \right)^{-1/\rho}  \mathrm{M}_\rho(\widehat{\mathcal{I}}) \\
        & \geq \left( 2 + \frac{4 \varepsilon e}{n} \right)^{-1/\rho} \left(1-\frac{\varepsilon}{n} \right) \mathrm{M}_\rho(\widehat{\mathcal{I}}) \tag{since $1/\rho \geq 1$ and $2 + \frac{4 \varepsilon e}{n} \geq 2 e^{2 \varepsilon/n}$ for $n \geq 2$}
    \end{align*}
Therefore, the computed allocation $\mathcal{I}$ achieves the stated approximation ratio of $\left(2 + \frac{4 \varepsilon e}{n} \right)^{1/\rho}$. 

For establishing the time complexity of this algorithm, note that the size of the constructed $\JISP$ instance is $\left(\frac{n}{\varepsilon}\right)^{\mathcal{O} (1/\rho)}$. Hence, the algorithm of Bar-Noy et al.~\cite{bar2001approximating} would run in time that is polynomial in $\left(\frac{n}{\varepsilon}\right)^{\mathcal{O} (1/\rho)}$, i.e., the specified running-time bound follows. 
\end{proof}

Note that for any constant $\rho \in (0,1]$, Theorem~\ref{theorem:interval-approx} provides a constant-factor approximation algorithm that runs in polynomial time. In particular, for the $\rho=1$ case (i.e., for average social welfare), we obtain a polynomial-time $\left(2 + o(1) \right)$-approximation algorithm. As mentioned previously, this instantiation improves upon the $8$-approximation guarantee obtained specifically for social welfare in the work of Aumann et al.~\cite{aumann2013computing}.

\section{Hardness of Maximizing Nash Social Welfare}
\label{section:hardness}

This section establishes the $\APX$-hardness of finding cake divisions (with connected pieces) that maximize Nash social welfare. That is, we show that, for a fixed constant $ c \in (0,1)$, it is $\NP$-hard to find an allocation (i.e., a cake division with connected pieces) whose Nash social welfare is within $c$ times the optimal. % (Theorem~\ref{theorem:nswhardness}). %Note that this result rules out a PTAS for the the problem of maximizing Nash social welfare in the cake-division context. 
Appendix \ref{section:rhomeanhardness} details an analogous hardness result for the $\rho$-mean welfare objective.

 \begin{theorem} \label{theorem:nswhardness}
Given a cake-division instance with piecewise-constant valuations, the problem of computing an allocation that maximizes Nash social welfare is $\APX$-hard. 
\end{theorem}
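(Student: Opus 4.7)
The plan is to prove $\APX$-hardness via a gap-preserving reduction from an $\APX$-hard constraint-satisfaction problem. For concreteness, I would start from MAX-E3-SAT(5)---$3$-SAT in which every variable occurs in at most five clauses---which is known to be NP-hard to approximate within some fixed factor $(1-\epsilon_0)$, and for which the number of clauses $m$ is $\Theta(n)$ in the number of variables. Given a formula $\Phi$ with variables $x_1,\ldots,x_n$ and clauses $C_1,\ldots,C_m$, I would carve the cake $[0,1]$ into consecutive ``variable'' blocks $R_i$ (each split into a true half $R_i^T$ and a false half $R_i^F$) and ``clause'' blocks $S_j$, placed so that each $S_j$ is adjacent to variable half-blocks corresponding to its literals. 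Two families of agents are then introduced: a \emph{variable agent} $\alpha_i$ whose piecewise-constant density is supported uniformly on $R_i$ (so by connectivity its best piece is essentially $R_i^T$ or $R_i^F$), and a \emph{clause agent} $\beta_j$ whose density places a large bump on a small sub-interval inside $S_j$ together with positive mass on each literal sub-interval of $R_i^{T/F}$ that would satisfy $C_j$.

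For completeness, a satisfying assignment $\sigma$ of $\Phi$ yields an allocation in which $\alpha_i$ receives $R_i^{\sigma(x_i)}$ and every $\beta_j$ receives the concatenation of $S_j$ with an adjacent satisfying-literal sub-interval; all agents then attain value at least some constant $u>0$ and hence $\NSW \geq u$. For soundness, suppose every assignment falsifies at least $\epsilon_0 m$ clauses. From any allocation $\mathcal{I}$, read off an assignment $\sigma_{\mathcal{I}}$ by looking at which half of each $R_i$ is (primarily) inside $\alpha_i$'s piece; by hypothesis at least $\epsilon_0 m$ clause agents $\beta_j$ lack access to any satisfying-literal sub-interval and therefore receive only value $u' < u$ (from the $S_j$ bump alone). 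Since $m = \Theta(n)$, taking the geometric mean over the $n+m$ agents gives $\NSW(\mathcal{I}) \leq u \cdot (u'/u)^{\Omega(\epsilon_0)}$, a constant factor strictly smaller than $u$, which yields the desired $\APX$-hardness gap.

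The principal obstacle is the geometric-mean sensitivity of $\NSW$. A single agent receiving near-zero value collapses the objective, so the densities must be tuned to guarantee every agent (including every ``unsatisfied'' clause agent in the soundness case) a bounded-away-from-zero share; this drives the choice of a small but strictly positive bump inside each $S_j$. A more subtle issue is ruling out ``cheating'' allocations---for instance, giving a variable agent a piece that is neither $R_i^T$ nor $R_i^F$, thereby freeing room for clause agents in a way not corresponding to any real assignment of $\Phi$. I expect to handle this by tuning the densities so that any such deviation strictly decreases the product of agents' values (through a concavity-style argument on $\alpha_i$'s piece), ensuring the $\NSW$-optimal structure is ``assignment-like'' and that the constant multiplicative gap $u/u'$ transfers intact to inapproximability for $\NSW$ maximization.
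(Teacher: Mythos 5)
Your high-level strategy is the right one --- a gap-preserving reduction from a bounded-occurrence 3-SAT problem, with ``variable'' agents and ``clause'' agents, and a fallback mechanism to keep every agent's value bounded away from zero so the geometric mean does not collapse. The paper indeed reduces from Gap 3-SAT-5. However, the specific gadget you propose has a concrete geometric flaw and two unresolved structural gaps.

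\textbf{The adjacency requirement cannot be met.} You place clause blocks $S_j$ on the line ``so that each $S_j$ is adjacent to variable half-blocks corresponding to its literals,'' and in the completeness step you give $\beta_j$ ``the concatenation of $S_j$ with an adjacent satisfying-literal sub-interval.'' A clause in 3-SAT involves up to three distinct variables, and on the one-dimensional cake a block $S_j$ has at most two neighbours. So $S_j$ cannot be adjacent to the literal sub-intervals of all three variables, and---worse---even when $S_j$ \emph{is} adjacent to one literal's sub-interval, a satisfying assignment need not make \emph{that} particular literal true, so $\beta_j$ may have no reachable satisfying literal in the completeness case. This breaks the claimed lower bound $\NSW \geq u$. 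The paper avoids this entirely by having \emph{no} dedicated clause blocks: each clause agent's density is placed directly inside the variable blocks $H_i$ where its literals live, and in the completeness case the clause agent is given a single subinterval inside one such $H_i$, which requires no adjacency to anything clause-specific. The ``always-positive fallback'' that you implement with per-clause bumps is instead realized in the paper by a single auxiliary interval $G$ at the end of the cake (plus base/separator agents who guarantee a positive-NSW allocation always exists).

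\textbf{Nothing stops a variable agent from taking its whole block.} With $\alpha_i$'s density uniform over $R_i$, the most valuable connected piece for $\alpha_i$ is all of $R_i$, not one half. You assert ``so by connectivity its best piece is essentially $R_i^T$ or $R_i^F$,'' but this does not follow; some other mechanism is needed to force the split. The paper introduces \emph{separator agents} whose densities sit on the middle and rightmost unit subintervals of each $H_i$: since NSW vanishes if any agent gets zero, every positive-NSW allocation must give the middle subinterval to the separator, which bars any other agent from straddling the two halves. Relatedly, your handling of ``cheating'' allocations --- ``tuning the densities so that any such deviation strictly decreases the product'' via a ``concavity-style argument'' --- is precisely the hard part. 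The paper makes this rigorous with a separate structural lemma (Claim~\ref{claim:structure}) showing that WLOG a Nash-optimal allocation gives each separator its exact subinterval and each base agent one of four prescribed pieces; reproducing that reasoning would be essential to make your soundness argument go through.

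In short: right class of reduction and right awareness of the zero-value pitfall, but the 1D geometry of connected pieces defeats the clause-block design, and the structural lemma forcing assignment-like optima is missing rather than merely sketched. The paper's gadget (no clause blocks, a single auxiliary interval, separator agents, and the explicit structural claim) is not a cosmetic variant of yours --- it is what makes the connectivity constraints tractable.
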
	 

We prove this theorem by developing a (gap-preserving) reduction from the $\textrm{Gap 3-SAT-5}$ problem~\cite{arora1999polynomial, arora1998proof}. Section~\ref{section:gadget} presents the key gadget used in the reduction and the proof of Theorem \ref{theorem:nswhardness} appears in Section~\ref{section:hardnessproof}. 

The $\textrm{Gap 3-SAT-5}$ problem, with parameter $\alpha \in (0,1)$, is defined as follows\\
\noindent
\textit{Input:} A Boolean formula $\phi$ in conjunctive normal form; in particular, $\phi$ is specified as a conjunction of a set of clauses, $\textrm{C} = \{ C_1, C_2, \ldots, C_m \}$, defined over Boolean variables $\textrm{X}=\{x_1, \ldots, x_r \}$. Here, each clause $C_j$ is a disjunction of at most three literals and every variable $x_i$ appears in at most five clauses in $\phi$ (either as literal $x_i$ or as its negation $\overline{x}_i$).\\
\noindent
\textit{Objective:} Distinguish between the following two cases \\
{YES:} $\phi$ is satisfiable. \\
{NO:} No assignment of the variables satisfies more than $(1-\alpha)$ fraction of the clauses in $\phi$. \\

$\textrm{Gap 3-SAT-5}$ is a promise problem: the given Boolean formula $\phi$ is guaranteed to satisfy either the YES case or the NO case. It is shown in~\cite{arora1999polynomial, arora1998proof} that there exists a constant $\alpha \in (0,1)$ for which $\textrm{Gap 3-SAT-5}$ is $\NP$-hard, i.e., for a specific constant $\alpha \in (0,1)$, it is $\NP$-hard to distinguish whether a given instance of $\textrm{Gap 3-SAT-5}$ satisfies the YES case or the NO case. 

We develop a gap-preserving reduction from $\textrm{Gap 3-SAT-5}$ to the Nash social welfare maximization problem. That is, in our reduction, if the given $\textrm{Gap 3-SAT-5}$ instance satisfies the YES case, then the Nash social welfare in the constructed cake-division instance will be above a threshold, say $\tau$. Complementarily, in the NO case, the Nash social welfare will be below $c (\alpha) \ \tau$, for a fixed constant $c(\alpha) \in (0,1)$, which depends only on the underlying gap parameter $\alpha$. Therefore, using a $\left(1/c(\alpha)\right)$-approximation algorithm for the Nash social welfare maximization problem, one can distinguish between the YES and the NO cases of $\textrm{Gap 3-SAT-5}$. Since the latter problem is $\NP$-hard, such an approximation algorithm does not exist, unless ${\rm P} = {\rm NP}$. That is, we obtain $\left(1/c(\alpha)\right)$-inapproximability of maximizing Nash social welfare and, hence, the stated $\APX$-hardness result holds. 

\subsection{Construction of a Cake-Division Instance} \label{section:gadget}
In this section, starting with an instance $\phi$ of $\textrm{Gap 3-SAT-5}$, we will construct a cake-division instance $\C(\phi)$ which achieves the above-mentioned gap property for Nash social welfare. This construction is used in Section \ref{section:hardnessproof} to complete the reduction and prove Theorem \ref{theorem:nswhardness}. 

Let $r$ and $m$, respectively, denote the number of variables and the number of clauses in a given Boolean formula $\phi$. Recall that for $\textrm{Gap 3-SAT-5}$ instances each variable occurs in at most five clauses. Furthermore, since each clause in $\phi$ contains at most three literals, we have $r \leq 3m$.

In the reduction, for ease of presentation, the constructed cake will correspond to the segment $[0, 14r+1]$ (it can be rescaled to $[0,1]$ without affecting the arguments). The cake will be composed of $r$ pairwise-disjoint intervals,  $H_1,\ldots, H_r$, each of length $14$, along with an auxiliary interval $G$ (placed at the end of the cake) of length one. That is, the cake is obtained by concatenating $H_i$s and $G$. For all $i \in [r]$, we associate variable $x_i$  with the interval $H_i \coloneqq [14(i-1), 14i]$. The cake-division instance will have one agent for every clause $C_j$, with $j \in [m]$. In addition, the interval $H_i$ is itself partitioned into $14$ unit-length subintervals, $\{e^i_k\}_{k=1}^{14}$ (see Figure~\ref{figure:gadget}). A notable property of the reduction is that each subinterval $e^i_k$ gets valued by at most one agent. 

The cake-division instance $\mathcal{C}(\phi)$ will have four sets of agents: (1) separator agents, (2) base agents, (3) clause agents and, (4) an auxiliary agent.  We will say that an agent $a$'s value within an interval $I$ is \emph{constant} iff $a$'s density function is constant throughout $I=[\ell,r]$, i.e., we have $v_a(I) = w$ and this value is said to be constant within $I$ iff the density function, $\nu_a$, satisfies $\nu_a(x) = \frac{w}{\left(r-\ell\right)}$ for all $x \in I=[\ell,r]$.  Next, we specify the valuations of all the agents based on their type. 
\begin{itemize}[leftmargin=1.7mm]
\item \textbf{Separator Agents:} We include $2r$ separator agents--two separator agents $s_i$ and $s'_i$ for each interval $H_i$. The density function of both $s_i$ and $s'_i$ is nonzero only within $H_i$. Specifically, $v_{s_i}(e^i_7) =1$ and the value of $s_i$ is constant within this subinterval $e^i_7$. Also, we set $v_{s'_i}(e^i_{14}) =1$ and this value is constant within $e^i_{14}$.
	
The separator agent $s_i$ is introduced in the reduction to ensure that, in any Nash optimal allocation, no single agent receives both the left and right halves of $H_i$.  At a high level, the left-half of $H_i$ represents the positive occurrences of the variable $x_i$ in $\phi$ and right-half represents the negated occurrences. Similarly, the presence of the separator agent $s'_i$ ensures that, in any Nash optimal allocation, no agent receives an interval which nontrivially intersects with both $H_i$ and $H_{i+1}$.\footnote{The separator agent $s'_r$ accomplishes this goal for interval $H_r$ and the auxiliary interval $G$.}
%At a high level, the left-half of $H_i$ corresponds to the positive occurrences of the variable $x_i$ in $\phi$ and right-half corresponds to the negated occurrences. 
%would represent positive occurrences of variable $x_i$ in $\phi$ while, the right half would represent negated occurrences. As with the case of separator agent $s'_i$, it makes sure that no agent in an optimal allocation of the cake, gets an interval which intersects non-trivially with both $H_i$ and $H_{i+1}$; see Figure \ref{figure:gadget}. Note that, 
    %exists to simply separate out sub-interval $H_i$ from sub-interval $H_{i+1}$
   \begin{figure}[h]
   	\begin{center}
   		\includegraphics[scale=.75]{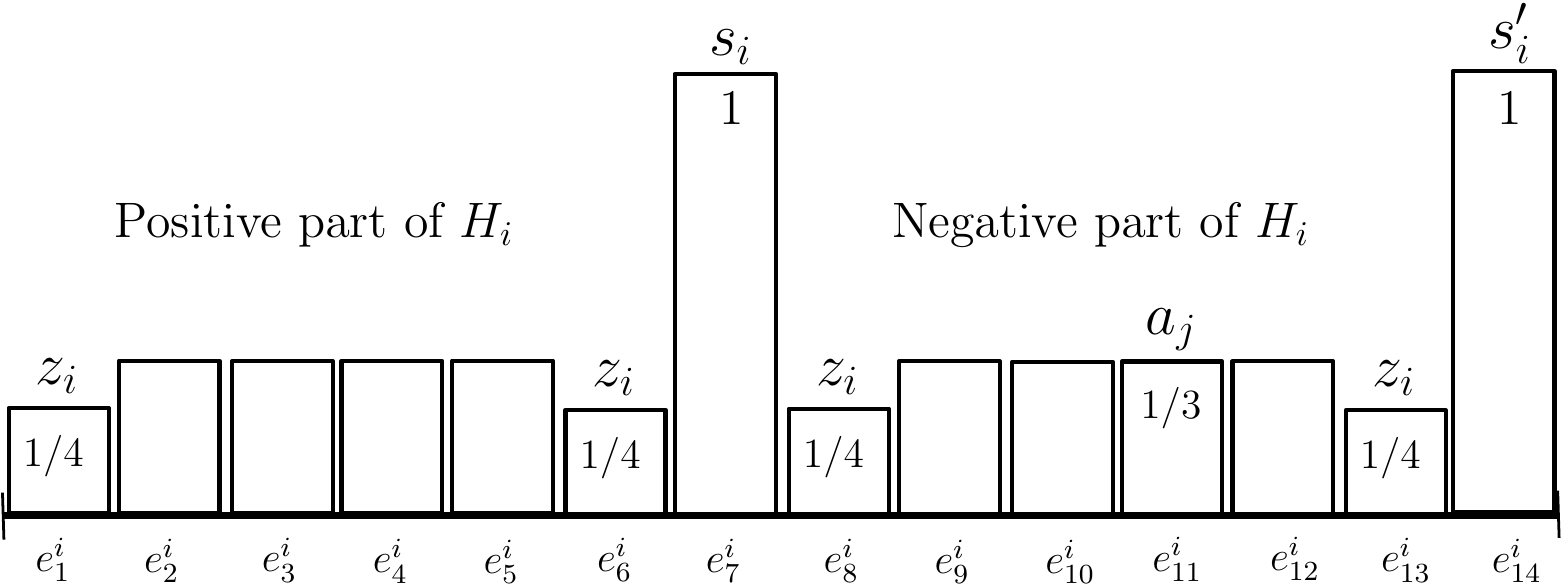}
   	\end{center}
   \caption{The figure depicts interval $H_i$ associated with variable $x_i$. The height of a vertical bar over each subinterval  $e^i_k$ depicts the total value of $e^i_k$ for the agent mentioned above the bar. As an illustration, the value of a clause agent, $a_j$, is also depicted; here, clause $C_j$ is the third negative occurrence of $x_i$ (in particular, $C_j$ contains the literal $\overline{x}_i$).}
   \label{figure:gadget}
\end{figure}
    
\item \textbf{Base Agents:} We include $r$ base agents--one base agent $z_i$ for each interval $H_i$, with $i \in [r]$. The total value of $z_i$ is spread over four subintervals in $H_i$. Specifically, $v_{z_i} ( S) = 1/4$ for each subinterval $S \in \{e^i_1, e^i_6, e^i_8, e^i_{13}\}$ and the value of $z_i$ is constant within each of these four subintervals.  
    
\item \textbf{Clause Agents:} We consider $m$ clause agents--one clause agent $a_j$ for each clause $C_j$, with $j \in [m]$. Here, agent $a_j$ values a subinterval in $H_i$ iff clause $C_j$ contains the variable $x_i$. Hence, the number of subintervals valued by $a_j$ is equal to the number of literals in clause $C_j$. For each variable $x_i$, we (arbitrarily) index the clauses (in $\phi$) that contain it as a positive literal, i.e., define $P_i \coloneqq \{C^i_1, C^i_2, \ldots, C^i_{p^*} \mid \text{literal } x_i \text{ appears in } C^i_k \text{ for all } k \in [p^*] \}$; here $p^*$ is the total number of clauses that contains $x_i$ as a positive literal. 
Analogously, index the clauses (in $\phi$) that contain the negated literal $\overline{x_i}$, $N_i \coloneqq \{C^i_1, C^i_2, \ldots, C^i_{n^*} \mid \text{literal } \overline{x}_i \text{ appears in } C^i_k \text{ for all } k \in [n^*] \}$. Given that at most five clauses contain $x_i$, we have $p^* + n^* \leq 5$. Also, since both literals $x_i$ and $\overline{x}_i$ must occur in $\phi$, we have $ p^* \geq 1 $ and $n^* \geq 1$. Equivalently, $p^* \leq 4$ and $n^* \leq 4$; this bound enables us to consider at most four subintervals $\{e^i_2, \ldots, e^i_5\}$ and $\{e^i_8, \ldots, e^i_{12}\}$ in the two halves of $H_i$, respectively. With this indexing in hand, we will refer to a clause $C_j$ as the $q$th positive occurrence of $x_i$ iff $C_j = C^i_q \in P_i$. Similarly, we will refer to a clause $C_k$ as the $q$th negative occurrence of $x_i$ iff  $C_k = C^i_q \in N_i$. 
	
As mentioned above, to set the valuation of the clause agent $a_j$ we consider the variables that occur in the clause $C_j$. For each variable $x_i$ in $C_j$, if $C_j$ is the $q$th positive occurrence of $x_{i}$ then we set $v_{a_j} (e^{i}_{1+q}) = 1/3$. Otherwise, if $C_j$ is the $q$th negative occurrence of variable $x_{i}$ then we set $v_{a_j} (e^{i}_{8+q}) = 1/3$. If $C_j$ contains less than three variables, then $a_j$ is made to positively value the auxiliary interval $G$ by an amount that ensures normalization $v_{a_j}([0, 14r+1]) = 1$. In particular, in case $C_j$ is a disjunction of $1 \leq t \leq 3$ literals, we set $v_{a_j} (G) = 1 - t/3$.

In light of this construction, we will henceforth refer to the left half of the interval $H_i$, $\{e^i_1, \ldots, e^i_6 \}$, to be the \emph{positive part}, while the right half, $\{e^i_8, \ldots, e^i_{13}\}$, to be the \textit{negative part} of $H_i$.	
	\item \textbf{Auxiliary Agent:} Finally, we have one auxiliary agent $d$ whose total valuation is confined to auxiliary interval $G$, i.e., $v_d(G) = 1$ and this value is constant within $G$.
\end{itemize}
In summary, we have constructed in polynomial-time a cake instance $\C(\phi)$ with $2r$ separator agents, $r$ base agents, $m$ clause agents and $1$ auxiliary agent; hence $n = 3r+m+1$. Also, note that in the constructed instance, the valuation of every agent $a$ is piecewise-constant and normalized, $v_a([0, 14r+1]) = 1$. 

\subsection{Proof of Theorem \ref{theorem:nswhardness}} \label{section:hardnessproof}
%such that it is $\NP$-hard to distinguish between the YES case ($\phi$ is satisfiable) and the NO case (no assignment satisfies more than $(1-\alpha)$ fraction of the clauses). 
Consider an instance of the $\textrm{Gap 3-SAT-5}$ problem, with Boolean formula $\phi$ and parameter $\alpha$. Recall that, $\textrm{Gap 3-SAT-5}$ is a promise problem wherein it is $\NP$-hard to distinguish between the {YES case} ($\phi$ is satisfiable) and the {NO case} (no assignment satisfies more than $(1-\alpha)$ fraction of the clauses in $\phi$). Denote the set of variables in $\phi$ by $\textrm{X} = \{x_1, x_2, \dots, x_r\}$ and the set of clauses by $\textrm{C} = \{ C_1, C_2, \dots, C_m \}$. Construct a cake-division instance $\mathcal{C}(\phi)$ as detailed in Section \ref{section:gadget}. Independent of the underlying case (YES or NO), the following claim ensures the existence of a Nash optimal allocation for $\mathcal{C}({\phi})$ with useful structural properties.

\begin{restatable}{claim}{ClaimStructuralProperties}
\label{claim:structure}
The cake-division instance $\mathcal{C}(\phi)$ admits a Nash optimal allocation $\mathcal{I}=\{I_a\}_a$ with the property that:\footnote{Recall that $I_a$ denotes the interval assigned to agent $a$, under allocation $\mathcal{I}=\{I_a\}_a$.} 
\begin{itemize}
\item[(i)] For each $i \in [r]$, base agent $z_i$ is assigned exactly one of the following intervals: $e^i_1$, $\bigcup_{j=1}^6 e^i_j$, $e^i_8$, $\bigcup_{j=8}^{13} e^i_j$
\item[(ii)] For each $i \in [r]$, $I_{s_i} = e^i_7$ and $I_{s'_i} = e^i_{14}$ 
\item[(iii)] $\NSW(\mathcal{I}) > 0$
\end{itemize}
\end{restatable}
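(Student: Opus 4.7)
The plan is to begin with an arbitrary Nash optimal allocation $\mathcal{I}^*$ and transform it in two local stages into an allocation satisfying (i) and (ii), with each transformation preserving $\NSW$; the transformed allocation is therefore still Nash optimal. Property (iii) is handled upfront by exhibiting an allocation with $\NSW > 0$: set $I_{s_i} = e^i_7$, $I_{s'_i} = e^i_{14}$, $I_{z_i} = e^i_1$, $I_d = G$, give each clause agent $a_j$ an interval containing one of its valued subintervals, and fold every remaining unit subinterval into an adjacent piece to form a legal partition into connected intervals. Since each variable occurs in at most five clauses and each clause has at most three literals, this allocation is feasible and every agent receives strictly positive value. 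Because some allocation achieves $\NSW > 0$, every Nash optimum does as well, giving (iii).

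Stage~1 (establishing (ii)). Fix $i$ and consider $s_i$, whose density is supported only on $e^i_7$. Because $\NSW(\mathcal{I}^*) > 0$, $I^*_{s_i}$ must intersect $e^i_7$ in positive length. Any agent other than $s_i$ whose interval lay entirely inside $e^i_7$ would receive value zero, contradicting $\NSW > 0$; consequently at most one interval encroaches on $e^i_7$ to the left of $I^*_{s_i}$ and at most one to the right, and each such neighbor's interval extends outside $e^i_7$. I would then reshape $I^*_{s_i}$ to be exactly $e^i_7$, transferring the missing portions of $e^i_7$ from these neighbors to $s_i$ and handing any surplus of $I^*_{s_i}$ that lies outside $e^i_7$ back to those neighbors. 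All reassigned mass is confined to $e^i_7$, where no agent besides $s_i$ has nonzero density; hence the neighbors' valuations are unchanged, and $v_{s_i}(I_{s_i})$ becomes $1$. The identical argument applied to each $s'_i$ with $e^i_{14}$---the modifications occur in pairwise disjoint parts of the cake and so can be made simultaneously---yields (ii).

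Stage~2 (establishing (i)). With the separators fixed, the cake partitions into consecutive regions: for each $i \in [r]$, the positive region $R^+_i := \bigcup_{j=1}^6 e^i_j$ and the negative region $R^-_i := \bigcup_{j=8}^{13} e^i_j$, together with $G$ at the right end. Each non-separator interval lies entirely within one such region. Positivity of $v_{z_i}(I_{z_i})$ forces $I_{z_i} \subseteq R^+_i$ or $I_{z_i} \subseteq R^-_i$; the two cases are symmetric, so assume $I_{z_i} \subseteq R^+_i$. If $z_i$ is the only non-separator agent in $R^+_i$, then $I_{z_i} = R^+_i = \bigcup_{j=1}^6 e^i_j$, a canonical form. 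Otherwise the other occupants of $R^+_i$ are clause agents, each valuing a distinct subinterval among $e^i_2,\ldots,e^i_5$, and for $v_{z_i}(I_{z_i}) > 0$ the interval $I_{z_i}$ must contain $e^i_1$ or $e^i_6$. Connectedness rules out containing both unless $I_{z_i}$ is all of $R^+_i$, which is again a canonical form. In the remaining subcase $z_i$ is either the leftmost or the rightmost occupant of $R^+_i$. If leftmost, I would shrink $I_{z_i}$ to exactly $e^i_1$ and cede the excess to $z_i$'s right neighbor, whose value is unaffected because the excess has zero density for that agent. If rightmost, I would explicitly rearrange $R^+_i$: set $I_{z_i} := e^i_1$ and slide each clause agent's interval one slot to the left while preserving containment of its single valued subinterval, whose position on the cake never moves. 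No agent's value changes, so $\NSW$ is preserved.

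The main obstacle is the rearrangement in the ``rightmost'' subcase of Stage~2: one must exhibit an explicit new partition of $R^+_i$ in which $z_i$ occupies $e^i_1$, each clause agent's interval still contains its fixed valued subinterval, and all pieces are pairwise disjoint connected intervals covering $R^+_i$. This reduces to listing the clause agents in the increasing order of their valued-subinterval index and assigning each a contiguous slot straddling that subinterval---routine bookkeeping---but it is the one step not of the ``zero-density trimming'' flavour that drives the rest of the argument. Every other step exploits the design feature that each unit subinterval $e^i_j$ is valued by at most one agent, so moving mass within it between agents changes only that agent's value.
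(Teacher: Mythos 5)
Your proposal is correct and rests on the same core idea as the paper's argument: establish $\NSW > 0$ via a direct construction, then transform a Nash optimal allocation into one in canonical form, exploiting the design feature that each unit subinterval $e^i_k$ is valued by at most one agent so that mass moved within it changes no one's value. The execution differs, however. The paper works with \emph{partial} allocations: it directly sets $I_{z_i}$ to one of the four canonical pieces (always choosing $e^i_1$ over $e^i_6$, i.e.\ ``teleporting'' $z_i$ in the single-intersection case), fixes $I_{s_i} = e^i_7$ and $I_{s'_i} = e^i_{14}$, and then trims every clause agent via $I_{a_j} := J_{a_j} \cap B^{\mathsf c}$; whatever is abandoned (e.g.\ $z_i$'s old spot in $e^i_6$) is simply left unassigned, and the argument relies on the fact that a partial allocation extends to a full one without lowering anyone's value. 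Your version insists on maintaining a genuine full partition at every step, which forces the explicit sliding/rebalancing of the clause agents inside $R^+_i$ in the ``$z_i$ rightmost'' subcase -- the very step you flag as not of the trimming flavor. That step is correct (the clause agents in $R^+_i$ value distinct $e^i_k$'s, so listing them in order and re-cutting $R^+_i$ works), but the paper avoids it entirely by not requiring a partition after the transformation. One small imprecision: in Stage 2 you describe ``shrinking'' $I_{z_i}$ to $e^i_1$ when $z_i$ is leftmost, but if $I^*_{z_i}$ only partially covers $e^i_1$ you must \emph{enlarge} it, taking the remainder of $e^i_1$ from the right neighbor (harmless, as that neighbor has zero density there). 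Similarly, in Stage 1 the $s_i$ and $s'_i$ modifications for the same $i$ can touch a common neighbor's interval when a single agent occupies all of $e^i_8,\dots,e^i_{13}$, so ``pairwise disjoint parts of the cake'' is slightly overstated, though the reshapings remain compatible. Neither issue is a real gap.
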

%(iii) the interval assigned to auxiliary agent $d$ is contained in the auxiliary interval, $I_d \subseteq G$
The proof of Claim \ref{claim:structure} is deferred to Appendix~\ref{section:proof-of-claim-1}. We will now proceed to prove Theorem \ref{theorem:nswhardness}. 

First, suppose that $\phi$ satisfies the YES case, i.e., there exists a Boolean assignment $f: \textrm{X} \to \{0,1\}$ to the variables that satisfies all clauses in $\phi$. Using $f$, we define a partial allocation $\mathcal{A}=\{A_a\}_a$ that achieves ``high'' Nash social welfare. In particular, for all $i \in [r]$, assign agent $s_i$ the subinterval $e^i_7$ (i.e., $A_{s_i} = e^i_7$ ) and assign agent $s'_i$ the subinterval $e^i_{14}$. In addition, allocate agent $z_i$ the interval $\bigcup_{j=1}^6 e^i_j$, if $f(x_i) = 0$. Otherwise, if $f(x_i) = 1$, assign the interval $\bigcup_{j=8}^{13} e^i_j$ to $z_i$. That is, $z_i$ is allocated the positive part of $H_i$ if $f(x_i) = 0$ and the negative part of $H_i$, if $f(x_i) = 1$. Since $f$ satisfies all the clauses, every clause $C_j$ either contains a literal $x_i$ for which $f(x_i) = 1$ or a literal $\overline{x}_k$ with $f(x_k) = 0$. Therefore, one of the following alternatives holds, for an index $q \in \{1,2,3,4\}$: $C_j$ is the $q$th positive occurrence of variable $x_i$ with $f(x_i) = 1$ or $C_j$ is the $q$th negative occurrence of variable $x_k$ with $f(x_k) = 0$. If the former condition holds, then allocate the interval $e^i_{1+q}$ to the clause agent $a_j$. Otherwise, if the latter condition holds, assign the interval $e^k_{8+q}$ to agent $a_j$. Finally, allocate interval $G$ to auxiliary agent $d$, i.e., set $A_d = G$. One can directly verify that $\mathcal{A}$ is a well-defined partial allocation in $\mathcal{C}({\phi})$. Also, since the valuations of the agents are nonnegative, $\mathcal{A}$ can be extended to a full allocation without reducing the value achieved by any agent. Let $NSW_Y$ denote the optimal Nash social welfare value obtained in $\mathcal{C}({\phi})$, under the current assumption that $\phi$ satisfies the YES case. Noting the values obtained by all the agents under $\mathcal{A}$, we get
\begin{align}
\label{equation:YEScase}
    NSW_Y \geq \left(1^{2r} \frac{1}{2^r} \frac{1}{3^m} 1\right)^{\frac{1}{3r+m+1}}
\end{align}

To complete the proof we now consider the complementary setting wherein $\phi$ falls under the NO case, i.e., under any assignment, at least $\alpha m$ clauses remain unsatisfied. In $\mathcal{C}(\phi)$, let $\mathcal{I} =\{I_a\}_a$ be a Nash optimal allocation that satisfies Claim~\ref{claim:structure}. With the NO case in hand, write $NSW_N$ to denote the optimal value of the Nash social welfare in $\mathcal{C}({\phi})$, i.e., $NSW_N = \NSW(\mathcal{I})$. By property (ii) of Claim~\ref{claim:structure}, we have that $I_{s_i} = e^i_7$ and $I_{s'_i} = e^i_{14}$, for all $i \in [r]$. This allocation to the separator agents ensures that the interval assigned to any other agent $a$, $I_a$, is contained in exactly one of the following intervals: the positive part of $H_i$, or the negative part of $H_i$, or the auxiliary interval $G$. 

Define the following  index sets for the base agents and the clause agents, respectively, $T \coloneqq \{i \in [r] \mid I_{z_i} = e^i_1 \text{ or } I_{z_i} = e^i_8\}$ and $P \coloneqq \{j \in [m] \mid I_{a_j} \subset G\}$. Write $t \coloneqq |T|$ and $p \coloneqq |P|$. By property (i) in Claim~\ref{claim:structure}, we can conclude that the base agents $z_i$s not indexed in $T$ (i.e., the remaining $r-t$ base agents) receive either $\bigcup_{j=1}^6 e^i_j$ or $\bigcup_{j=8}^{13} e^i_j$. Note that, for each $i \in T$, the base agent $z_i$ obtains a value of $1/4$ in $\mathcal{I}$ and the base agents not indexed in $T$ obtain a value of $1/2$. Indeed, in comparison to the YES case, here the base agents indexed in $T$ and the clause agents indexed in $P$ receive lower values. The following claim provides a lower bound on the number of such agents; the proof of this claim is deferred to Appendix~\ref{section:proof-of-claim-2}.
\begin{restatable}{claim}{ClaimBadAgents}
\label{claim:badAgents}
With $t$ and $p$ as defined above, we have $4t + p \geq \alpha m$.
\end{restatable}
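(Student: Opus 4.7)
The plan is to read off a Boolean assignment $f \colon \X \to \{0,1\}$ from the Nash optimal allocation $\mathcal{I}$, invoke the NO-case hypothesis to obtain at least $\alpha m$ unsatisfied clauses under $f$, and then charge each such clause either to $P$ or to a ``budget of at most $4$'' attached to some variable in $T$.

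\smallskip
\noindent\textbf{Decoding $f$.} For $i \notin T$, Claim~\ref{claim:structure}(i) forces $I_{z_i}$ to be one of the two full halves $\bigcup_{k=1}^6 e^i_k$ or $\bigcup_{k=8}^{13} e^i_k$; I would set $f(x_i) \coloneqq 1$ iff $I_{z_i}$ is the negative half and $f(x_i) \coloneqq 0$ otherwise. For $i \in T$, I would make the uniform (and essentially arbitrary) choice $f(x_i) \coloneqq 1$. The NO hypothesis then supplies a set $\mathcal{U}$ of at least $\alpha m$ clauses of $\phi$ that $f$ fails to satisfy.

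\smallskip
\noindent\textbf{Locating each unsatisfied clause agent.} The core step is to argue that every $C_j \in \mathcal{U}$ with $j \notin P$ has $I_{a_j}$ contained in the negative half of some $H_i$ with $i \in T$. Claim~\ref{claim:structure}(ii) pins the separator agents to $I_{s_i}=e^i_7$ and $I_{s'_i}=e^i_{14}$, so any $I_{a_j}$ must lie inside one of the ``cells'' $\bigcup_{k=1}^6 e^i_k$, $\bigcup_{k=8}^{13} e^i_k$, or $G$; the case $I_{a_j} \subset G$ is excluded by $j \notin P$. Claim~\ref{claim:structure}(iii) implies $a_j$ obtains strictly positive value, so $I_{a_j}$ must contain some clause-valued subinterval $e^i_{1+q}$ or $e^i_{8+q}$, which reveals the literal of $C_j$ on $x_i$. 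A short case analysis then rules out the positive half: if $i \in T$ then $f(x_i)=1$ and $x_i \in C_j$ satisfies $C_j$; if $i \notin T$ with $I_{z_i}$ the positive half then disjointness of $\mathcal{I}$ is violated; and if $i \notin T$ with $I_{z_i}$ the negative half then $f(x_i)=1$ again satisfies $C_j$ through $x_i$. Each sub-case contradicts $C_j \in \mathcal{U}$, so $I_{a_j}$ lies in a negative half. The symmetric argument eliminates $i \notin T$, leaving $i \in T$.

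\smallskip
\noindent\textbf{Counting.} Inside the negative half of $H_i$ with $i \in T$, the only clause-valued subintervals are $e^i_{8+q}$ for $q \in \{1,\ldots,n^*_i\}$, where $n^*_i$ is the number of negative occurrences of $x_i$ in $\phi$. The $\textrm{Gap 3-SAT-5}$ constraint (each variable appears in at most five clauses) together with the construction's requirement that $x_i$ also occur positively gives $n^*_i \le 4$. Since the intervals of $\mathcal{I}$ are pairwise disjoint, the clause agents in the negative half of $H_i$ occupy distinct such subintervals, so at most $n^*_i \le 4$ members of $\mathcal{U} \setminus P$ are charged to each $i \in T$. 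Summing, $|\mathcal{U}| \le p + \sum_{i \in T} n^*_i \le p + 4t$, and combining with $|\mathcal{U}| \ge \alpha m$ yields $4t + p \ge \alpha m$.

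\smallskip
The main obstacle is the location argument: converting structural allocation data (where $I_{a_j}$ sits) into a semantic statement (which literal of $C_j$ is true) requires simultaneously threading together the three parts of Claim~\ref{claim:structure} with the specific rule used to decode $f$ from $\{I_{z_i}\}_{i \notin T}$. Once the ``negative half of $H_i$ for some $i \in T$'' conclusion is secured, the decisive constant $4$ (rather than $5$) falls out immediately from the 3-SAT-5 constraint combined with the mandated positive occurrence of each variable.
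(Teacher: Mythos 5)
Your proof is correct and reaches the same bound, but by a cleaner, technically distinct route. The paper first builds an auxiliary allocation $\mathcal{L}$ by enlarging each $I_{z_i}$ with $i \in T$ to the full half of $H_i$ containing it, thereby forcing a set $S$ of clause agents into empty intervals; it then decodes $f$ from $\mathcal{L}$ (so $f(x_i)$ for $i \in T$ is dictated by whether $I_{z_i}$ equals $e^i_1$ or $e^i_8$), shows that every clause outside $S \cup P$ is satisfied, and bounds $|S| \le 4t$. You skip the auxiliary allocation entirely: decoding $f$ directly from $\mathcal{I}$ with a free uniform choice on $T$, you use the separator structure and positivity from Claim~\ref{claim:structure}(ii),(iii), together with disjointness of $\mathcal{I}$, to pin the piece of each unsatisfied, non-$P$ clause agent to the negative half of some $H_i$ with $i \in T$. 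The observation that makes your shortcut sound---and which the paper's $\mathcal{L}$-construction obscures---is that for $i \in T$ the base agent occupies only a single edge subinterval ($e^i_1$ or $e^i_8$) that blocks no clause-valued subinterval in either half, so $f(x_i)$ is genuinely a free parameter. The set of clauses you charge per $i \in T$ (always the negative half) differs from the paper's $S$ restricted to $H_i$ (the half containing $I_{z_i}$), but both are capped at $4$ for the same reason: the positive and negative occurrence counts of each variable are each at most $4$, via the 3-SAT-5 bound and the construction's standing assumption that every variable appears both positively and negatively.
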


Note that under the allocation $\mathcal{I}$, the auxiliary interval $G$ is divided amongst $p$ clause agents (indexed in $P$) and the auxiliary agent $d$.\footnote{$I_d \subseteq G$ follows from the fact that $\NSW(\mathcal{I}) >0$ (property (iii) in Claim~\ref{claim:structure}), i.e., $v_d(I_d) >0$, and the auxiliary agent $d$ has nonzero valuation (density) only within $G$.} The following claim upper bounds the contribution of these agents to $NSW(\mathcal{I})$. 

\begin{restatable}{claim}{ClaimDummyNode}
\label{claim:dummyNode}
 In the above setting, we have $v_d(I_d) \ \prod_{j \in P} v_{a_j}(I_{a_j})  \leq \left(\frac{2}{3} \ \frac{1}{p+1} \right)^p \frac{1}{p+1}$.
\end{restatable}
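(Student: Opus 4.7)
The plan is to reduce the statement to an elementary constrained-optimization problem on the lengths of intervals contained in $G$. First, I would use property~(iii) of Claim~\ref{claim:structure}, namely $\NSW(\mathcal{I}) > 0$, to conclude that $v_{a_j}(I_{a_j}) > 0$ for every $j \in P$. Since $I_{a_j} \subseteq G$ by the definition of $P$, this forces $v_{a_j}(G) > 0$. By the construction of $\mathcal{C}(\phi)$ in Section~\ref{section:gadget}, a clause agent $a_j$ values $G$ by $1 - t_j/3$, where $t_j$ is the number of literals in $C_j$. The positivity of $v_{a_j}(G)$ then forces $t_j \le 2$, so $v_{a_j}(G) \le 2/3$.

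Next, I would exploit the fact that the densities on $G$ are constant: by construction, $\nu_d$ is identically $1$ on $G$ (since $v_d(G) = 1$ and $|G|=1$), and each clause agent $a_j$ spreads its value $v_{a_j}(G)$ uniformly over $G$. Writing $x_0 \coloneqq |I_d|$ and $x_j \coloneqq |I_{a_j}|$ for $j \in P$, this translates the valuations into
\[
v_d(I_d) = x_0 \qquad \text{and} \qquad v_{a_j}(I_{a_j}) \le \tfrac{2}{3}\, x_j \quad \text{for each } j \in P.
\]
Since the intervals $I_d$ and $\{I_{a_j}\}_{j\in P}$ are pairwise disjoint subsets of $G$, their lengths satisfy $x_0 + \sum_{j \in P} x_j \le |G| = 1$. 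Therefore
\[
v_d(I_d) \prod_{j \in P} v_{a_j}(I_{a_j}) \;\le\; \left(\tfrac{2}{3}\right)^{p} x_0 \prod_{j \in P} x_j.
\]

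Finally, I would invoke the AM-GM inequality on the $p+1$ non-negative reals $x_0, \{x_j\}_{j \in P}$ subject to $x_0 + \sum_j x_j \le 1$: the product $x_0 \prod_{j\in P} x_j$ is maximized at $x_0 = x_j = 1/(p+1)$, giving $x_0 \prod_{j \in P} x_j \le 1/(p+1)^{p+1}$. Combining this with the $(2/3)^p$ factor yields the claimed bound $\left(\tfrac{2}{3}\cdot \tfrac{1}{p+1}\right)^{p} \tfrac{1}{p+1}$. The only mildly delicate step is the reduction to constant densities on $G$; this is justified directly by the gadget construction, after which the remainder is a one-line AM-GM computation, so I expect no substantive obstacle.
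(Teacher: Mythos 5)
Your proposal is correct and follows essentially the same route as the paper: bound each clause agent's density on $G$ by $2/3$, translate valuations into lengths using the constant densities, and maximize the resulting product over a length constraint. The paper argues by symmetry among the $p$ clause-agent lengths and then optimizes the single-variable function $\left(\tfrac{2}{3}x\right)^p(1-px)$, whereas you apply AM-GM directly to all $p+1$ lengths; these are interchangeable, and your version is marginally cleaner. One small slip: you infer $v_{a_j}(G) \le 2/3$ from $t_j \le 2$, but $t_j \le 2$ actually yields the \emph{lower} bound $v_{a_j}(G) = 1 - t_j/3 \ge 1/3$. The upper bound $v_{a_j}(G) \le 2/3$ follows from $t_j \ge 1$, which holds for every clause unconditionally, so the appeal to property (iii) is unnecessary for this step (it is, of course, still needed to know $I_d \subseteq G$ in the framing of the claim). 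The conclusion is right; just swap the justification.
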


The proof of Claim~\ref{claim:dummyNode} follows from basic calculations and appears in Appendix~\ref{section:proof-of-claim-3}. The above-mentioned observations and Claim \ref{claim:dummyNode} give us the following upper bound on $NSW_N$ (the optimal Nash social welfare in $\mathcal{C}(\phi)$):
\begin{align}
\label{equation:NOcase}
   NSW_N = \NSW(\mathcal{I}) \leq \left(1^{2r} \frac{1}{4^t}  \frac{1}{2^{r-t}} \frac{1}{3^{m-p}} \left( \frac{2}{3} \frac{1} {p+1}\right)^p \frac{1}{p+1} \right)^{\frac{1}{3r+m+1}}
\end{align}

Write $\tau \coloneqq \left(1^{2r} \frac{1}{2^r} \frac{1}{3^m} 1 \right)^{\frac{1}{3r+m+1}}$ and $c(\alpha) \coloneqq 2^{-\alpha/44}$. Using Claim \ref{claim:badAgents}, equation \eqref{equation:NOcase}, and the observation that $r \leq 3m$, we obtain the following inequality: $NSW_N \leq c(\alpha) \tau$; see Appendix \ref{section:worst-subsection-title-ever} for details. Also, equation \eqref{equation:YEScase} gives us $NSW_Y \geq \tau$. Therefore, using a $\left(1/c(\alpha)\right)$-approximation algorithm for the Nash social welfare maximization problem, one can distinguish between the YES and the NO cases of $\textrm{Gap 3-SAT-5}$. Since the latter problem is $\NP$-hard, such an approximation algorithm does not exist, unless ${\rm P} = {\rm NP}$. That is, we obtain $\left(1/c(\alpha)\right)$-inapproximability of maximizing Nash social welfare and, hence, Theorem~\ref{theorem:nswhardness} holds.

\section{Hardness of Maximizing $\rho$-Mean Welfare} \label{section:rhomeanhardness}

In this section we prove that, for any fixed $\rho \in (0,1]$, it is $\APX$-hard to find an allocation that maximizes the $\rho$-mean welfare. 

Recall that, in a cake-division instance $\langle [n], \{v_a \}_{a \in [n] } \rangle$, the $\rho$-mean welfare of an allocation $\mathcal{I} =\{I_1, \ldots, I_n \}$ is defined as $\mathrm{M}_\rho(\mathcal{I}) \coloneqq \left( \frac{1}{n} \sum_{a=1}^n [v_a(I_a)]^\rho \right)^{1/\rho}$. Also, as mentioned previously, the range $\rho \in (0,1]$ constitutes a family of functions that captures both average social welfare and Nash social welfare: $\rho=1$ gives us the arithmetic mean (social welfare) and, as $\rho$ tends to zero, the limit of $\mathrm{M}_\rho$ is equal to the geometric mean (the Nash social welfare).

%Here, the exponent $\rho$ takes values in the range $(0,1]$ and thereby induces an interpolation between fairness and efficiency objectives. For the $\rho = 1$ case (i.e., for average social welfare), one can achieve $\APX$-hardness with normalized valuations--this entails a direct adaptation of the construction used to establish the $\APX$-hardness of maximizing Nash social welfare (see Section \ref{section:gadget}).

%This section proves the $\APX$-hardness of finding an allocation that maximizes (for $\rho \in (0, 1)$) of a given cake-division instance. 

In contrast to the $\APX$-hardness result for maximizing Nash social welfare (Theorem \ref{theorem:nswhardness}), our reduction for $\rho$-mean welfare does not yield a cake-division instance with normalized valuations. That is, here we will consider cake-division instances wherein for some agents the total value of the cake is not necessarily equal to one. However, for the $\rho = 1$ case (i.e., for average social welfare), one can obtain $\APX$-hardness with normalized valuations--this entails a direct adaptation of the construction developed in Section~\ref{section:gadget}.

The main result of this section is as follows

\begin{theorem}\label{theorem:rhohardness}
For each fixed $\rho \in (0,1]$, it is $\APX$-hard to find allocations that maximizes the $\rho$-mean welfare in cake-division instances with piecewise-constant valuations. {Here, the agents' valuations are not necessarily normalized over the cake.} 
\end{theorem}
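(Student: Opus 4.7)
The plan is to adapt the gap-preserving reduction from $\textrm{Gap 3-SAT-5}$ developed in Section~\ref{section:hardness} (for Nash social welfare) to the $\rho$-mean setting. Given an instance $\phi$ with $r$ variables and $m$ clauses, I would reuse the gadget of Section~\ref{section:gadget}: the variable intervals $H_i$ with their fourteen subintervals $e^i_1,\ldots,e^i_{14}$, separator agents $s_i, s'_i$ (valuing only $e^i_7$ and $e^i_{14}$ respectively), base agents $z_i$ (valuing only $\{e^i_1,e^i_6,e^i_8,e^i_{13}\}$), one clause agent per clause, and an auxiliary interval $G$ with its auxiliary agent $d$. The substantive modification is to rescale the separator and base agents' total valuations by a common constant $W=W(\rho)$, while leaving the clause agents at $1/3$ per literal subinterval; this non-normalization is permitted by the statement of the theorem and is what enforces a structural claim in the $\rho$-mean setting.

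The crux is an analog of Claim~\ref{claim:structure} for $\rho$-mean optimal allocations. Unlike NSW---where any agent with zero value zeros out the geometric mean---the $\rho$-mean penalizes zero values only additively, so rigidity must be argued by a local swap. If in a candidate optimal allocation $\mathcal{I}$ the separator $s_i$ does not obtain $e^i_7$, then $e^i_7$ lies inside some other agent $b$'s contiguous interval $I_b$ (contributing nothing to $v_b$), while $v_{s_i}(I_{s_i})=0$. Rearrange by assigning $s_i$ the subinterval $e^i_7$ together with whichever side of $I_b\setminus e^i_7$ has smaller $v_b$-mass (keeping $s_i$'s new piece contiguous), leaving $b$ the other side of value at least $v_b(I_b)/2$, and absorbing $s_i$'s old interval into an adjacent agent's piece (which can only increase that agent's value). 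The net change in $\sum_a v_a(I_a)^\rho$ is at least $W^\rho - V_{\max}^\rho(1-2^{-\rho})$, where $V_{\max}$ bounds any agent's total valuation; this is strictly positive whenever $W\geq V_{\max}$, contradicting optimality. A similar swap forces each base agent $z_i$ into a ``full half'' option $\bigcup_{j=1}^6 e^i_j$ or $\bigcup_{j=8}^{13} e^i_j$ (rather than a single corner), provided $W$ is additionally chosen so that $W^\rho\cdot 2^{-\rho}(1-2^{-\rho}) > 5\cdot(1/3)^\rho$, since a deviation to one corner can free up at most five clause subintervals in $H_i$.

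With the structural claim in place, the YES/NO argument follows the template of Section~\ref{section:hardnessproof} but with sums of $\rho$-th powers replacing geometric means. In any structural allocation the non-clause contribution is a fixed quantity of order $rW^\rho$, so the $\rho$-mean is governed by the clause-agent terms. In the YES case the base agents' half-choices encode a satisfying assignment, so every clause agent obtains its unit literal piece and the clause contribution is $m(1/3)^\rho$. In the NO case the induced Boolean assignment leaves at least $\alpha m$ clauses unsatisfied (by the $\textrm{Gap 3-SAT-5}$ promise), giving clause contribution at most $(1-\alpha)m(1/3)^\rho$ plus a small auxiliary term bounded as in Claim~\ref{claim:dummyNode}. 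Since $r=O(m)$, choosing $W$ as a constant (depending on $\rho$ but not on the instance) makes the clause gap $\alpha m(1/3)^\rho$ a constant fraction of the total sum; raising the resulting $>1$ ratio to the $1/\rho$ power preserves a constant multiplicative gap in $\rho$-mean welfare, yielding APX-hardness. The main obstacle is calibrating $W$: the separator swap requires only $W\geq V_{\max}$, but the base-agent swap requires $W\gtrsim\bigl(5(2/3)^\rho/(1-2^{-\rho})\bigr)^{1/\rho}$, which blows up as $\rho\to 0$ but is finite for each fixed $\rho$---this is why non-normalization is needed for general $\rho$.
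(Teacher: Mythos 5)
Your proof is correct in its essentials, and it takes a genuinely different route from the paper's argument. The paper's construction makes two changes to the NSW gadget of Section~\ref{section:gadget}: it \emph{removes} the auxiliary interval $G$ and the auxiliary agent $d$, and it rescales every subinterval value $v_a(e^i_k)\leftarrow v_a(e^i_k)^{1/\rho}$. That particular rescaling is carefully chosen so that, in the $\rho$-mean sum $\sum_a v_a(I_a)^\rho$, each agent's contribution collapses back to the \emph{original} (un-exponentiated) subinterval values -- e.g.\ $z_i$ with a full half contributes $(2\cdot(1/4)^{1/\rho})^\rho=2^\rho/4$ and a clause agent contributes $1/3$ -- so the YES/NO arithmetic mirrors the NSW computation (with sums in place of products), and the paper then reuses the set $T$ of ``corner'' base agents and the bound $4t+p'\geq\alpha m$. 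You instead keep $G$ and $d$, scale only the separator and base agents by a constant $W=W(\rho)$, and choose $W$ large enough to \emph{force} every $z_i$ into a full half so that $T=\emptyset$; this simplifies the bookkeeping at the cost of an extra parameter and a slightly more involved auxiliary-interval analysis (the sum $\sum_{j\in P}v_{a_j}^\rho+v_d^\rho$ can be as large as $\Theta(p^{1-\rho})$, but this is dominated by the clause loss $\alpha m(1/3)^\rho$ since $\rho>0$ is fixed). Both constructions crucially exploit non-normalization, and both need a local-swap argument to replace the ``zero value kills NSW'' rigidity used in Claim~\ref{claim:structure}(iii); the paper's text is terse on this point (``with slight modifications''), while you make the swap explicit, which is an improvement in rigor. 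Two small remarks: your swap description assumes $e^i_7$ lies inside a \emph{single} other agent's interval $I_b$, but the case where $e^i_7$ is split between two agents is actually easier (neither values $e^i_7$, so truncating at its boundaries loses nothing for either); and your bound of ``at most five'' clause subintervals freed per half can be sharpened to four (each variable has at most four positive and at most four negative occurrences), though using five is harmless.
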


\begin{proof}
We prove this theorem by developing a gap-preserving reduction from $\textrm{Gap 3-SAT-5}$ to the $\rho$-mean welfare maximization problem. Here, the construction of a cake-division instance (from a $\textrm{Gap 3-SAT-5}$ instance) is quite similar to the one detailed in Section~\ref{section:gadget}.  Hence, we will primarily present the parts where the two constructions differ and omit the commonalities. 
	
We start with an instance of the $\textrm{Gap 3-SAT-5}$ problem, comprising of a Boolean formula $\phi$ (in conjunctive normal form), such that---for a fixed constant $\alpha \in (0,1)$---it is $\NP$-hard to distinguish between the YES case ($\phi$ is satisfiable) and the NO case (no assignment satisfies more than $(1-\alpha)$ fraction of the clauses). As before, $\phi$ comprises of a conjunction of a set of clauses, $\textrm{C} = \{ C_1, C_2, \ldots, C_m \}$, defined over Boolean variables $\textrm{X}=\{x_1, \ldots, x_r \}$.

From formula $\phi$, we construct a cake-division instance $\mathcal{C}(\phi)$ which achieves the necessary gap-property for $\rho$-mean welfare. 
 We follow the construction detailed in Section~\ref{section:gadget} with the following changes 
 \begin{enumerate}
\item  Remove the auxiliary interval $G$ and the auxiliary agent $d$ from the construction.\footnote{Here, we do not impose normalization, i.e., do not enforce each agent's value for the entire cake to be equal to one. Hence, in the current context, the auxiliary interval and auxiliary agent serve no purpose.} Hence, the cake $[0, 14r]$ is obtained by concatenating the $r$ intervals, $H_1, \ldots, H_r$. 
\item Modify the valuations (as defined in Section \ref{section:gadget}) for each of the remaining $3r+m$ agents as follows: for an agent $a$ and subinterval $e^i_k$ (in interval $H_i$), let $v_a(e^i_k)$ be the value considered in Section~\ref{section:gadget}, e.g., $v_{z_i} ( e^i_1) = 1/4$. Here, we modify $v_a(e^i_k) \leftarrow \left( v_a(e^i_k) \right)^{1/\rho}$, for all $a \in [n]$, $i\in [r]$, and $k \in [14]$. For instance, we now have $v_{z_i}\left( \cup_{k=1}^6 e^i_k\right) = \frac{2}{4^{1/\rho}}$. Whenever an agent has a nonzero value for a subinterval, we keep the valuation (density) to be constant, i.e., we ensure that the resulting cake-division instance consists of piecewise-constant valuations. 
\end{enumerate}
    
    In summary, we construct in polynomial-time a cake-division instance $\C(\phi)$ with $2r$ separator agents, $r$ base agents and $m$ clause agents; hence $n = 3r+m$. Observe that, with slight modifications in the arguments, Claim \ref{claim:structure} continues to hold for the current construction. 
    
To prove the theorem, we first consider the YES case, i.e., the case wherein $\phi$ is satisfiable. In this setting, there exists an assignment $f:X \to \{0,1\}$ of the variables that satisfies all clauses in $\phi$. Let $M^Y_{\rho}$ denote the optimal $\rho$-mean welfare in the cake-division instance $\mathcal{C}({\phi})$. We follow a similar procedure as discussed in Theorem~\ref{theorem:nswhardness}, to obtain the following upper bound:
    \begin{equation}
    \label{equation:YESrho}
    M^Y_{\rho} \geq  \Bigg(\frac{1}{3r+m} \Big[ 2r + r  \frac{2^{\rho}}{4} + m \frac{1}{3}  \Big] \Bigg)^{\frac{1}{\rho}}
    \end{equation}

Next, we consider the NO case, i.e., under any assignment of the variables, at least $\alpha m$ clauses in $\phi$ remain unsatisfied. Let $\mathcal{I}=\{I_a\}_a$ be a $\rho$-mean optimal allocation (of $\mathcal{C}({\phi})$) which satisfies the properties stated in Claim~\ref{claim:structure}. 
    
Furthermore, note that Claim~\ref{claim:badAgents} continues to hold in the present context with the following modification: denote by $P'$ the set of clause agents that receive a value of zero under the optimal allocation $\I$, i.e., $P' \coloneqq \{ j \in [m] \mid v_{a_j} (I_{a_j}) = 0 \}$. Write $p' \coloneqq |P'|$.\footnote{In this construction, a clause agent can potentially receive a value of $0$ even under an optimal allocation. This follows from the observation that in the current setup there does not exist an auxiliary interval $G$ that can be shared (as a fallback) among the clause agents to ensure that they receive strictly positive values.} The set $T$ remains unchanged (as defined in Section \ref{section:hardnessproof}), i.e., $T \coloneqq \{i \in [r] \mid I_{z_i} = e^i_1 \text{ or } I_{z_i} = e^i_8\}$. The arguments presented in Claim \ref{claim:badAgents} gives us the inequality $4t+p' \geq \alpha m$.
    
   Let $M_{\rho}^N$ denote the optimal $\rho$-mean welfare in the cake-division instance $\mathcal{C}({\phi})$, under the assumption that $\phi$ satisfies the NO case. We once again follow a similar procedure as discussed in Theorem \ref{theorem:nswhardness} to obtain the following upper bound:
   \begin{equation}
   \label{equation:NOrho}
   M_{\rho}^N \leq \Bigg(\frac{1}{3r+m} \Big[ 2r + (r-t)  \ \frac{2^{\rho}}{4} + t  \ \frac{1}{4} + (m-p') \  \frac{1}{3} + p' \cdot 0 \Big] \Bigg)^{\frac{1}{\rho}}
    \end{equation}
    
   Let $\tau' \coloneqq \Bigg(\frac{1}{3r+m} \Big[ 2r + r  \frac{2^{\rho}}{4} + m \frac{1}{3}  \Big] \Bigg)^{\frac{1}{\rho}}$; see equation \eqref{equation:YESrho}.
  For showing that the reduction is gap preserving, it suffices to bound the ratio $\frac{M_{\rho}^N}{\tau'}$ by a constant, say $c(\alpha, \rho)$.  Towards this end, we analyze the $\rho$th power of the ratio $\frac{M_{\rho}^N}{\tau'}$:
      %We set $c(\alpha, \rho) := \bigg(1 -  \frac{ 3\alpha (2 ^{\rho} -1)}{4 (18 + 9 \ 2^{\rho} + 4)}\bigg)^{1/\rho}$,
      
\begingroup      
\allowdisplaybreaks
     
      	\begin{align}
      \Bigg(\frac{M_{\rho}^N}{\tau'}\Bigg)^{\rho} & \leq \ \frac{2r + (r-t)  \frac{2^{\rho}}{4} + t  \frac{1}{4} + (m-p')  \frac{1}{3}}{ 2r + r  \frac{2^{\rho}}{4} + m \frac{1}{3} } \nonumber \\
      & = 1 - \bigg[  \frac{ \frac{t}{4} (2^{\rho} -1) + \frac{p'}{3}}{ 2r + r  \frac{2^{\rho}}{4} + \frac{m}{3} } \bigg]  \nonumber \\
      & \leq  1 - \bigg[  \frac{ \frac{(\alpha m - p')}{16} (2^{\rho} -1) + \frac{p'}{3}}{ 6m + 3m \frac{2^{\rho}}{4} + \frac{m}{3} } \bigg]    \tag{$4t+p' \geq \alpha m$ from Claim~\ref{claim:badAgents} and $r \leq 3m$}   \\     
      & \leq  1 - \bigg[  \frac{ \frac{\alpha m}{16} (2^{\rho} -1) + p' (\frac{1}{3} - \frac{2^{\rho} -1}{16}) }{6m + 3m  \frac{2^{\rho}}{4} + \frac{m}{3}} \bigg]  \nonumber \\
      & \leq  1 - \bigg[  \frac{ \frac{\alpha m}{16} (2^{\rho} -1) }{m (6 + 3 \frac{2^{\rho}}{4} + \frac{1}{3})} \bigg]  \tag{Since  $\frac{1}{3} - \frac{2^{\rho} -1}{16} > 0$ for all $\rho \in (0,1)$}  \\
      & \leq  1 - \bigg[  \frac{ \frac{\alpha m}{16} (2^{\rho} -1) }{m (6 + 3 \frac{2^{\rho}}{4} + \frac{1}{3})} \bigg]   \nonumber \\
      & = 1 -  \frac{ 3\alpha (2 ^{\rho} -1)}{4 (18 + 9 \ 2^{\rho} + 4)} \label{ineq:rho-mess}
      \end{align}
\endgroup

With constant $c(\alpha, \rho) := \bigg(1 -  \frac{ 3\alpha (2 ^{\rho} -1)}{4 (18 + 9 \ 2^{\rho} + 4)}\bigg)^{1/\rho}$, inequality \eqref{ineq:rho-mess} can be restated as $\frac{M_{\rho}^N}{\tau'} \leq c(\alpha, \rho) $. Using this bound and equation \eqref{equation:YESrho}, we obtain a (constant) multiplicative gap of $c(\alpha, \rho)$ between $M_{\rho}^N$ and $M_{\rho}^Y$. 

Therefore, using a $\left(1/c(\alpha, \rho)\right)$-approximation algorithm for the $\rho$-mean welfare maximization problem, one can distinguish between the YES and the NO cases of $\textrm{Gap 3-SAT-5}$. Since the latter problem is $\NP$-hard, such an approximation algorithm does not exist, unless ${\rm P} = {\rm NP}$. That is, we obtain $\left(1/c(\alpha, \rho)\right)$-inapproximability of maximizing $\rho$-mean welfare (with unnormalized valuations) and, hence, the theorem holds. %\ref{theorem:rhohardness} 
\end{proof}

\section{Approximating Nash Social Welfare with a Constant Number of Agents}
\label{appendix:const-num-agents-nsw}

This section shows that an $\alpha$-approximate solution (with $\alpha>1$) for the Nash social welfare maximization problem (among $n$ agents) can be computed in time $\left(\frac{n}{\log \alpha} \right)^{\mathcal{O} \left( n \right)}$. In particular, for a constant number of agents and constant $\alpha >1$, we obtain a polynomial-time $\alpha$-approximation algorithm.  

The high-level idea here is to first discretize the set of possible values that the agents can obtain in a Nash optimal allocation and then find a near-optimal solution via an exhaustive search over the discretized set. 

\begin{theorem} \label{thereom:constant-num-agents-nsw}
For cake-division instances $\left\langle [n], \{v_a\}_{a \in [n]} \right\rangle$, with piecewise-constant valuations, and $\alpha >1$, there exists an algorithm that---in time $\left(\frac{n}{\log \alpha} \right)^{\mathcal{O} \left( n \right)}$---finds an  $\alpha$-approximation to the Nash social welfare maximization problem. 
\end{theorem}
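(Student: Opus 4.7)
The plan is to discretize the set of possible value profiles that a Nash optimal allocation can realize, and then brute-force over all such discretized profiles. First I would establish that in any Nash optimal allocation $\mathcal{I}^* = \{I_a^*\}_{a \in [n]}$, each agent's value lies in the range $[n^{-n}, 1]$. The upper bound is immediate from normalization. For the lower bound, note that a contiguous proportional allocation exists under the assumed valuations and achieves Nash social welfare at least $1/n$; by optimality, $\NSW(\mathcal{I}^*) \geq 1/n$, so $\prod_b v_b(I_b^*) \geq n^{-n}$. Since $v_b(I_b^*) \leq 1$ for every $b$, dividing by the product over $b \neq a$ yields $v_a(I_a^*) \geq n^{-n}$.

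Next, I would set up the geometric grid $G = \{ n^{-n} \alpha^k : 0 \leq k \leq \lceil n \log n / \log \alpha \rceil \}$, which covers $[n^{-n}, 1]$ with multiplicative step $\alpha$, and enumerate all target profiles $u = (u_1, \ldots, u_n) \in G^n$. For each profile $u$, I would test whether there is an allocation in which $v_a(I_a) \geq u_a$ for every $a$, by enumerating all $n!$ left-to-right orderings $\pi$ of the agents and, for each ordering, greedily choosing the leftmost cuts $0 = c_0 < c_1 < \cdots < c_{n-1} < c_n = 1$ satisfying $v_{\pi(i)}([c_{i-1}, c_i]) = u_{\pi(i)}$ for $i = 1, \ldots, n-1$, then verifying $v_{\pi(n)}([c_{n-1}, 1]) \geq u_{\pi(n)}$. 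A simple inductive exchange argument shows that if any allocation consistent with ordering $\pi$ is feasible for $u$, then this greedy attempt succeeds; and for piecewise-constant valuations each cut $c_i$ is computable in polynomial time via a cut query.

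For correctness, consider the profile $u$ obtained by rounding each $v_a(I_a^*)$ down to its nearest element of $G$. This profile satisfies $u_a \geq v_a(I_a^*)/\alpha$ coordinate-wise and is realized by $\mathcal{I}^*$, so the enumeration will recover an allocation $\mathcal{I}$ with $v_a(I_a) \geq u_a$ for all $a$, and hence $\NSW(\mathcal{I}) \geq (\prod_a u_a)^{1/n} \geq \NSW(\mathcal{I}^*)/\alpha$. The running time is $|G|^n \cdot n! \cdot \mathrm{poly}(n) = (n \log n / \log \alpha)^n \cdot n^{O(n)}$, which collapses to $(n/\log \alpha)^{O(n)}$ once polynomial and polylogarithmic factors are absorbed into the exponent (for the degenerate range $\log \alpha \gtrsim n$, any proportional contiguous allocation already provides an $\alpha$-approximation and can be returned directly). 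The only delicate step is the per-agent lower bound $v_a(I_a^*) \geq n^{-n}$: a weaker bound would inflate $|G|$ and break the claimed runtime, whereas everything else reduces to standard enumeration and a one-line exchange argument.
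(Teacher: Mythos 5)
Your proposal is correct and follows essentially the same route as the paper: derive the per-agent lower bound $v_a(I_a^*) \geq n^{-n}$ from the fact that $\NSW(\mathcal{I}^*)\geq 1/n$, build the geometric grid $G$ of size $\mathcal{O}(n\log n/\log\alpha)$, and brute-force over all $|G|^n$ value profiles and all $n!$ left-to-right orderings, carving cuts greedily. The only cosmetic differences are that you spell out the exchange argument for the greedy cut placement and explicitly dispose of the degenerate $\log\alpha \gtrsim n$ regime, both of which the paper leaves implicit.
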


\begin{proof}
	Let $\mathcal{I}^* = \{I^*_1, \ldots, I^*_n\}$ denote a Nash optimal allocation for the given cake-division instance. Here, interval $I^*_a$ is assigned to agent $a \in [n]$. Also, let permutation $\sigma^* \in \mathbb{S}_n$ denote the order in which the intervals $I^*_a$s are assigned across the cake $[0,1]$, i.e., under allocation $\mathcal{I}^*$, the interval assigned to agent $\sigma^*(1)$ is at the left end of the cake and, continuing on, agent $\sigma^*(n)$ receives the right-most interval. 
	
	To begin with, note that the Nash social welfare of a proportional allocation is at least $1/n$ and, hence, $\NSW(\mathcal{I}^*) \geq 1/n$. This implies that $v_a(I^*_a) \geq \frac{1}{n^n}$ for all agents $a \in [n]$. Therefore, for each agent $a \in [n]$, we have $v_a(I^*_a) \in \left[ \frac{1}{n^n}, 1 \right]$.\footnote{Recall that $v_a([0,1])=1$ for all $a\in [n]$.} We discretize this range of values. Specifically, with $\alpha > 1$, write $G \coloneqq \left\{ \frac{1}{n^n}, \frac{\alpha}{n^n}, \frac{\alpha^2}{n^n}, \frac{\alpha^3}{n^n}, \ldots, 1 \right\}$ and note that the cardinality of this set satisfies $|G| \leq \frac{n \log n}{\log \alpha}$. 
	
	Next we observe that that there necessarily exists an allocation $\widehat{\mathcal{I}} =\{\widehat{I}_1, \ldots, \widehat{I}_n \}$ which satisfies $\NSW(\widehat{\mathcal{I}}) \geq \frac{1}{\alpha} \NSW(\mathcal{I}^*)$ and $v_a(\widehat{I}_a) \in G$ for all $a \in [n]$. In particular, consider the valuation vector $\left( v_a(I^*_a) \right)_{a \in [n]}$ and round down each component to the closest value in set $G$. That is, for each agent $a \in [n]$, set $\widehat{v}_a \coloneqq \max \left\{  \frac{\alpha^i}{n^n} \ : \  0 \leq i \leq |G| \text{ and } \frac{\alpha^i}{n^n} \leq  v_a(I^*_a) \right\}$. This definition ensures $\frac{1}{\alpha} v_a(I^*_a) \leq \widehat{v}_a \leq v_a(I^*_a)$ for each agent $a \in [n]$.  Furthermore, note that following the order $\sigma^*(1), \sigma^*(2), \ldots, \sigma^*(n)$, we can find the desired allocation $\widehat{\mathcal{I}}$, i.e., establish the existence of an allocation wherein each agent $a$ receives an interval of value $\widehat{v}_a$:\footnote{The agent $\sigma^*(n)$ can potentially receive an interval of higher value, however this corner case does not affect the developed arguments.} for agent $\sigma^*(1)$ consider the cut point $c_1 \in [0,1]$ such that $v_{\sigma^*(1)} [ 0, c_1] = \widehat{v}_{\sigma^*(1)}$. Then, we simply repeat this procedure for the remaining agents (in order $\sigma^*(2), \ldots, \sigma^*(n)$) and over the remaining cake $[c_1, 1]$.  

The existence of $\widehat{\mathcal{I}}$ implies that, by enumerating over all possible valuation vectors with components in $G$ and all $n!$ permutations, we can find an allocation with Nash social welfare at least $\frac{1}{\alpha}$ times the optimal. The number of valuation vectors induced by values in $G$ is at most $|G|^n$, hence the time complexity of this exhaustive search is $\mathcal{O} \left( |G|^n \ n! \right)$.  Since $|G| \leq \frac{n \log n}{\log \alpha}$, we obtain, for the Nash social welfare maximization problem, an $\alpha$-approximation algorithm that runs in time $\left(\frac{n}{\log \alpha} \right)^{\mathcal{O} \left( n \right)}$.
\end{proof}	

\section{Price of Envy-Freeness}
\label{section:priceEF}
In this section we establish an upper bound on the price of envy-freeness for $\rho$-mean welfare. The price of envy-freeness with respect to a welfare objective (such as social welfare or, more generally, $\rho$-mean welfare)  is defined as the ratio of the optimal value of the welfare objective among all possible divisions and the optimal value of the welfare objective among all envy-free divisions (i.e., the optimal value achieved under the envy-freeness constraint). Theorem~\ref{theorem:priceEF} establishes an upper bound on the price of envy-freeness achieved by $\alpha$-$\EF$ allocations for $\rho$-mean welfare, with $\alpha \geq 1$ and $\rho \in (0,1]$. 

Note that, along these lines, Theorem~\ref{theorem:ef-nsw} implies an upper bound of $2\alpha$ on the price of envy-freeness obtained by $\alpha$-EF allocations for Nash social welfare (i.e., for the $\rho \to 0$ case). 

Furthermore, as a direct consequence of Theorem~\ref{theorem:priceEF} and Theorem~\ref{theorem:2ef}, we get that $\ALG$ provides an $ \mathcal{O}(2^{\frac{1}{\rho}}\  n^{\frac{\rho}{\rho+1}})$-approximate solution to maximizing $\rho$-mean welfare for $\rho \in (0,1]$. For constant $\rho \in (0,1]$, this approximation factor is weaker than the one obtained in Theorem~\ref{theorem:interval-approx}. However, the result obtained here is universal in the sense that a single allocation achieves the guarantee for all $\rho \in (0,1]$.  

%Also, we note that, for social welfare (the $\rho=1$ case) the price of envy-freeness upper bound obtained in Theorem~\ref{theorem:priceEF} is essentially tight. This follows from considering the result of Aumann and Dombb \cite{aumann2010efficiency}, which establishes a ${\Theta}(\sqrt{n})$ bound on the price of envy-freeness in the social-welfare context.

\begin{restatable}{theorem}{TheoremPriceofEF}
 \label{theorem:priceEF}
Let $\mathcal{I}$ be an $\alpha$-approximately envy-free allocation in a cake-division instance $\mathcal{C}=\{[n], \{v_a\}_a \}$, with $\alpha \geq 1$. Then, for any $\rho \in (0,1]$, $\mathcal{I}$ provides a $ \left( 2 \alpha \ 2^{\frac{1}{\rho}}\  n^{\frac{\rho}{\rho+1}} \right)$-approximate solution to maximizing $\rho$-mean welfare in $\mathcal{C}$.  
% 	For a given cake division instance $\C = \left\langle [n], \{v_a\}_{a \in [n]} \right\rangle$ in the Robertson-Webb model, let  for some $\alpha \geq 1$. Then, $\I$  
\end{restatable}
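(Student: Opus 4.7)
The plan is to adapt the counting argument from Theorem~\ref{theorem:ef-nsw} and pair it with a threshold-splitting trick that controls the contribution of agents whose optimal interval gets ``shredded'' by $\mathcal{I}$. Let $\mathcal{I}=\{\widetilde{I}_1,\ldots,\widetilde{I}_n\}$ be the given $\alpha$-EF allocation and let $\mathcal{I}^*=\{I^*_1,\ldots,I^*_n\}$ be a $\rho$-mean optimal allocation. Reusing the notation of Theorem~\ref{theorem:ef-nsw}, write $\widetilde{k}_a$ for the number of intervals in $\mathcal{I}$ that meet $I^*_a$. The same two ingredients used there are available here: the $\alpha$-EF inequality, together with the containment $I^*_a \subseteq \bigcup_{\widetilde{I} \in \widetilde{K}_a} \widetilde{I}$ and sigma-additivity, gives
\begin{align*}
v_a(I^*_a) \;\le\; \alpha\,\widetilde{k}_a\,v_a(\widetilde{I}_a) \qquad \text{for every } a \in [n],
\end{align*}
and the right-endpoint counting argument gives $\sum_{a=1}^{n}\widetilde{k}_a \le 2n$.

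Before splitting, I would record the standard consequence that an $\alpha$-EF allocation is $\alpha$-proportional: summing $\alpha\,v_a(\widetilde{I}_a) \ge v_a(\widetilde{I}_b)$ over $b$ and using normalization gives $v_a(\widetilde{I}_a) \ge 1/(n\alpha)$, so $\mathrm{M}_\rho(\mathcal{I})^\rho \ge 1/(n\alpha)^\rho$. Now fix a threshold $\lambda>0$ and partition the agents into $S=\{a:\widetilde{k}_a \le \lambda\}$ and $L=[n]\setminus S$. Raising the $\alpha$-EF inequality to the $\rho$-th power on $S$, using $v_a(I^*_a)\le 1$ together with $|L|\le 2n/\lambda$ on $L$, and summing yields
\begin{align*}
\sum_{a=1}^{n} v_a(I^*_a)^\rho \;\le\; (\alpha\lambda)^\rho \sum_{a=1}^{n} v_a(\widetilde{I}_a)^\rho \,+\, \frac{2n}{\lambda}.
\end{align*}
The proportionality bound then converts the additive term $2n/\lambda$ into a multiplicative factor of $\sum_a v_a(\widetilde{I}_a)^\rho$, which after dividing by $n$ produces
\begin{align*}
\mathrm{M}_\rho(\mathcal{I}^*)^\rho \;\le\; \alpha^\rho\Bigl[\lambda^\rho + \frac{2n^\rho}{\lambda}\Bigr]\, \mathrm{M}_\rho(\mathcal{I})^\rho.
\end{align*}

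The last step is to choose $\lambda$ optimally. Setting $\lambda^{\rho+1}=2n^\rho$, i.e.\ $\lambda=(2n^\rho)^{1/(\rho+1)}$, equalizes the two terms in the bracket and collapses the sum to $2^{(2\rho+1)/(\rho+1)}\,n^{\rho^2/(\rho+1)}$. Taking the $1/\rho$-th root, and applying the elementary inequality $(2\rho+1)/(\rho(\rho+1)) \le 1 + 1/\rho$ (which rearranges to $0\le \rho^2$), yields the claimed factor $2\alpha \cdot 2^{1/\rho} \cdot n^{\rho/(\rho+1)}$.

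The step I expect to require the most care is the choice of $\lambda$. A naive guess such as $\lambda = n^{1/(\rho+1)}$ makes the two sub-cases balance only at $\rho=1$ and yields the wrong exponent of $n$ for smaller $\rho$; both the exponent $\rho/(\rho+1)$ and the clean constant $2 \cdot 2^{1/\rho}$ emerge only after selecting the geometric-mean threshold $\lambda=(2n^\rho)^{1/(\rho+1)}$ and invoking the normalization bound $v_a(I^*_a)\le 1$ on the ``large-overlap'' agents.
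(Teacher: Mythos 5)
Your proof is correct and follows the same strategic plan as the paper's: the $\alpha$-EF covering bound $v_a(I^*_a)\le \alpha\,\widetilde{k}_a\,v_a(\widetilde{I}_a)$, the right-endpoint count $\sum_a \widetilde{k}_a\le 2n$, a threshold split on $\widetilde{k}_a$, the normalization bound $v_a(I^*_a)\le 1$ on the high-overlap side, and the $\alpha$-proportionality bound $v_a(\widetilde{I}_a)\ge 1/(\alpha n)$. The only difference is organizational: you stay in $\rho$-power space and convert the additive residual $2n/\lambda$ into a multiplicative factor \emph{before} taking the $1/\rho$-th root, which sidesteps the convexity inequality $(a+b)^{1/\rho}\le 2^{1/\rho-1}\bigl(a^{1/\rho}+b^{1/\rho}\bigr)$ that the paper applies after rooting, and which produces the same threshold exponent (the paper fixes $2n^{\rho/(\rho+1)}$ up front while you arrive at $(2n^\rho)^{1/(\rho+1)}=2^{1/(\rho+1)}n^{\rho/(\rho+1)}$ by balancing).
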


\begin{proof}
Fix $\rho \in (0,1]$ and let $\mathcal{I}^* = \{I^*_1, \ldots, I^*_n\}$ be an optimal allocation with respect to the $\rho$-mean welfare. For $a \in [n]$, write $K_a$ to denote the set of intervals in the $\alpha$-$\EF$ allocation $\mathcal{I} = \{I_1, \ldots, I_n\}$ that intersect with $I^*_a$, i.e., $K_a \coloneqq \{ I_b \in \mathcal{I} \mid I_b \cap I^*_a \neq \emptyset \}$.\footnote{Here, we follow the previously mentioned convention that mandates two intervals to be disjoint, if they intersect exactly at an endpoint.} Let $k_a$ denote the cardinality of this set, $k_a \coloneqq |K_a|$. 

Since $\mathcal{I}$ is an allocation, we have $\cup_a I_a = [0,1]$. Therefore, for all agents $a \in [n]$, interval $I^*_a$ is covered by the union of intervals in $K_a$; $I^*_a \subseteq \cup_{I \in K_a} I$. Also, the fact that $\mathcal{I}$ is $\alpha$-approximately envy-free implies $\alpha v_a(I_a) \geq v_a(I_b)$ for all $I_b \in K_a$. Summing these inequalities and using the containment $I^*_a \subseteq \cup_{I \in K_a} I$ gives us\footnote{Recall that the valuations are sigma additive.} $\alpha k_a \ v_a(I_a) \geq v_a(I^*_a)$ for all $a\in [n]$.  

Furthermore, the normalization of valuations ($v_a([0,1]) = 1$), gives us $\min\{ 1  , \alpha k_a \ v_a(I_a) \} \geq v_a(I^*_a)$ for all $a\in [n]$.
    
Define a subset of players, $\mathcal{H} \coloneqq \{a \in [n] \mid k_a \geq 2n^{\frac{\rho}{\rho +1}}\}$. Note that $2n^{\frac{\rho}{\rho +1}}|\mathcal{H}| \leq \sum_{a \in \mathcal{H}} k_a \leq \sum_{a=1}^n k_a \leq 2n$; here, the last inequality follows from a counting argument, which is detailed in the proof of Theorem~\ref{theorem:ef-nsw}. These inequalities imply the following upper bound on the size of $\mathcal{H}$: $|\mathcal{H}| \leq n^{1 - \frac{\rho}{\rho +1}}$. 

Considering the optimal $\rho$-mean welfare value
    \begin{align*}
    \mathrm{M}_\rho(\mathcal{I^*}) &= \left( \frac{1}{n} \sum_{a=1}^n [v_a(I^*_a)]^\rho \right)^{1/\rho} \\
    &= \left( \frac{1}{n} \sum_{a \in \mathcal{H}} [v_a(I^*_a)]^\rho + \frac{1}{n} \sum_{a \notin \mathcal{H}} [v_a(I^*_a)]^\rho \right)^{1/\rho} \\
    & \leq \left( \frac{1}{n} \sum_{a \in \mathcal{H}} 1 + \frac{1}{n} \sum_{a \notin \mathcal{H}} [\alpha k_a \ v_a(I_a)]^\rho \right)^{1/\rho} \tag{since $v_a(I^*_a) \leq \min\{1, \alpha k_a \ v_a(I_a)\}$} \\
    & \leq \left( \frac{1}{n}\ |\mathcal{H}| + \frac{1}{n} \sum_{a \notin \mathcal{H}} [2\alpha \ n^{\frac{\rho}{\rho +1}} \ v_a(I_a)]^\rho \right)^{1/\rho} \tag{since $k_a \leq 2n^{\frac{\rho}{\rho +1}}$ for all $a \notin \mathcal{H}$}\\
    & \leq \left( \frac{1}{n^{\frac{\rho}{\rho +1}}} + \frac{1}{n}\ \left(2\alpha \ n^{\frac{\rho}{\rho +1}} \right)^\rho \sum_{a \notin \mathcal{H}} [v_a(I_a)]^\rho \right)^{1/\rho} \tag{since $|\mathcal{H}| \leq n^{1 - \frac{\rho}{\rho +1}}$}
    \end{align*}
    Note that $\frac{1}{\rho} \geq 1$, hence $f(x) \coloneqq x^{1/\rho}$ is a convex function on $\mathbb{R}_+$. The convexity of this function gives us $\left( \frac{a + b}{2} \right)^{1/\rho} \leq \frac{1}{2} a^{1 / \rho} + \frac{1}{2} b^{1 / \rho}$. Simplifying we get $(a + b)^{1/\rho} \leq 2^{\frac{1}{\rho} - 1} (a^{1/\rho} + b^{1/\rho})$.
    
    We combine this observation with the bound obtained for the optimal $\rho$-mean welfare
    \begin{align}
    \mathrm{M}_\rho(\mathcal{I^*}) &\leq \left( \frac{1}{n^{\frac{\rho}{\rho +1}}} + \frac{1}{n}\ \left(2\alpha \  n^{\frac{\rho}{\rho +1}} \right)^\rho \sum_{a \notin \mathcal{H}} [v_a(I_a)]^\rho \right)^{1/\rho} \nonumber \\
    & \leq 2^{\frac{1}{\rho} - 1} \left( \frac{1}{n^{\frac{1}{\rho +1}}} + \left( 2\alpha n^{\frac{\rho}{\rho +1}} \right) \left( \frac{1}{n} \sum_{a \notin \mathcal{H}} [v_a(I_a)]^\rho \right)^{1/\rho} \right)  \label{ineq:opt-rho}
    \end{align}
    
   Since $\mathcal{I}$ is an $\alpha$-$\EF$ allocation, we have $ \alpha n \ v_a(I_a) \geq \sum_{b=1}^n v_a(I_b) = v_a([0,1]) = 1$. That is, $v_a(I_a) \geq \frac{1}{\alpha n}$ for all $a \in [n]$. Therefore, the $\rho$-mean welfare of $\mathcal{I}$ is at least $\frac{1}{\alpha n}$, i.e., $ \mathrm{M}_\rho(\mathcal{I}) \geq \frac{1}{\alpha n}$. We can hence write $\frac{1}{n^{\frac{1}{\rho +1}}} \leq \alpha n^{\frac{\rho}{\rho +1}} \mathrm{M}_\rho(\mathcal{I})$. 
   
   This inequality and equation \eqref{ineq:opt-rho} provide the desired upper bound on the optimal $\rho$-mean welfare
    \begin{align*}
    \mathrm{M}_\rho(\mathcal{I^*}) &\leq 2^{\frac{1}{\rho} - 1} \left( \frac{1}{n^{\frac{1}{\rho +1}}} + \left(2\alpha n^{\frac{\rho}{\rho +1}} \right) \left( \frac{1}{n} \sum_{a \notin \mathcal{H}} [v_a(I_a)]^\rho \right)^{1/\rho} \right)\\
    & \leq 2^{\frac{1}{\rho} - 1} \left( \alpha \ n^{\frac{\rho}{\rho +1}} \mathrm{M}_\rho(\mathcal{I}) + 2 \alpha \ n^{\frac{\rho}{\rho +1}} \mathrm{M}_\rho(\mathcal{I}) \right)\\
    & \leq  2 \alpha \ 2^{\frac{1}{\rho}} \ n^{\frac{\rho}{\rho+1}} \ \mathrm{M}_\rho(\mathcal{I})
    \end{align*}
    \end{proof}

\section{Conclusions and Future Work}
The current work studies cake-cutting from an algorithmic perspective and obtains approximation guarantees for multiple, well-studied notions of fairness and efficiency. In particular, we develop an efficient algorithm that computes $(2 + o(1))$-approximately envy-free allocations and, simultaneously, provides a $(3 + o(1))$-approximation to Nash social welfare. We complement this algorithmic result for Nash social welfare by proving that, in the cake-cutting context, maximizing this objective is $\APX$-hard. Developing hardness results for (approximate) envy-freeness remains an interesting open problem.\footnote{Since an envy-free cake division always exists, the hardness results here will be in terms of complexity classes contained in ${\rm TFNP}$.} Notably, the result of Deng et al.~\cite{deng2012algorithmic} shows that envy-free cake division (with connected pieces) is ${\rm PPAD}$-hard, but this negative result holds under ordinal valuations--in this setup the preferences of each agent is specified via an explicit circuit which, given an allocation, identifies the agent's most preferred piece. Therefore, in and of itself, the result of Deng et al.~\cite{deng2012algorithmic} does not imply that envy-free cake division under cardinal valuations is ${\rm PPAD}$-hard; complementarily, this result does not rule out an FPTAS for the contiguous-pieces version of envy-free cake-cutting under, say, piecewise-constant valuations.

Our approximation guarantee for $\rho$-mean welfare degrades as $\rho$ tends to zero. Indeed, it does not match the approximation ratio achieved specifically for Nash social welfare. Tightening this gap is another interesting direction for future work. Computational results for maximizing $\rho$-mean welfare, with $\rho < 0$, will also be interesting. The $\rho \to -\infty$ case is particularly relevant, since it corresponds to egalitarian welfare, i.e., to the max-min (Santa Claus) objective. The work of Aumann et al.~\cite{aumann2013computing} proves that, in the cake-division framework, it is ${\rm NP}$-hard to approximate egalitarian welfare within a factor of two. However, it remains open whether this problem admits a nontrivial approximation algorithm.

\section*{Acknowledgements}
Siddharth Barman gratefully acknowledges the support of a Ramanujan Fellowship (SERB - {SB/S2/RJN-128/2015}) and a Pratiksha Trust Young Investigator Award.

\bibliographystyle{alpha}
\bibliography{references}

\appendix

\section{Omitted Proofs from Section \ref{section:hardnessproof}}
\label{section:nswhardness_appendix}

\subsection{Proof of Claim~\ref{claim:structure}} 
\label{section:proof-of-claim-1}

\ClaimStructuralProperties*

\begin{proof} 
We will show that, independent of the underlying case for $\phi$ (YES or NO), the constructed cake-division instance $\mathcal{C}(\phi)$ admits a Nash optimal allocation, $\mathcal{I} = \{I_a\}_a$ which satisfies the properties mentioned in the claim. 

Property (iii) can be established by noting that the following partial allocation has positive Nash social welfare: for all $i \in [r]$, allocate subinterval $e^i_1$ to agent $z_i$, subinterval $e^i_7$ to agent $s_i$, and $e^i_{14}$ to agent $s'_i$. Say literal $x_i$ appears in the clause $C_j \in \textrm{C}$ and, for $q \in \{1,2,3, 4\}$, $C_j$ is the $q$th positive occurrence of $x_i$. In this case, we allocate subinterval $e^i_{1+q}$ to clause agent $a_j$. In the complementary setting, if literal $\overline{x}_i$ appears in the clause $C_j$ (with $C_j$ being the $q$th negative occurrence of variable $x_i$), then we allocate the subinterval $e^i_{8+q}$ to $a_j$. Note that, in both cases, the clause agent $a_j$ obtains a positive value under the partial allocation. Finally, assign interval $G$ to the auxiliary agent $d$. These allocations lead to a well-defined partial allocation wherein every agent receives a positive value, i.e., the partial allocation has positive Nash social welfare. Since the valuations of the agents are nonnegative, this partial allocation can be extended to a full allocation without reducing the value of any agent. Hence, every Nash optimal allocation satisfies property (iii).

Let $\mathcal{J}=\{J_a\}_a$ be a Nash optimal allocation in the cake-division instance $C(\phi)$. Using $\mathcal{J}$, we will define a different Nash optimal allocation, $\mathcal{I}=\{I_a\}_a$, that satisfies properties (i) and (ii). 

By property (iii), we have that $v_a(J_a) > 0$ for all agents $a$. Hence, for all $i \in [r]$, we have $J_{s_i} \cap e^i_7 \neq \emptyset$ and $J_{s'_i} \cap e^i_{14} \neq \emptyset$---agent $s_i$ ($s'_i$) has nonzero valuation (density) only within the subinterval $e^i_7$ ($e^i_{14}$). This observation implies that for all $i \in [r]$, the interval assigned to the base agent $z_i$ (i.e., $J_{z_i}$) satisfies one of the following cases
\begin{enumerate}
    \item  $J_{z_i}$ intersects with exactly one of $e^i_1$ and $e^i_6$. In this case, we set $I_{z_i} \coloneqq e^i_1$.
    \item  $J_{z_i}$ intersects with both $e^i_1$ and $e^i_6$. In this case, we set $I_{z_i} \coloneqq \bigcup_{k=1}^6 e^i_k$.
    \item  $J_{z_i}$ intersects with exactly one of $e^i_8$ and $e^i_{13}$. In this case, we set $I_{z_i} \coloneqq e^i_8$.
    \item  $J_{z_i}$ intersects with both $e^i_8$ and $e^i_{13}$. In this case, we set $I_{z_i} \coloneqq \bigcup_{k=8}^{13} e^i_k$.
\end{enumerate}

Hence, case-wise, we have defined the interval assigned to base agent $z_i$ (i.e., defined $I_{z_i}$) in the allocation $\mathcal{I}$. In addition, for all $i \in [r]$, set $I_{s_i} \coloneqq e^i_7$ and $I_{s'_i} \coloneqq e^i_{14}$. One can verify that, for all $i \in [r]$, we have $v_{z_i}(J_{z_i}) \leq v_{z_i}(I_{z_i})$, $v_{s_i}(J_{s_i}) \leq v_{s_i}(I_{s_i})$ and $v_{s'_i}(J_{s'_i}) \leq v_{s'_i}(I_{s'_i})$. That is, these modifications in the Nash optimal allocation $\mathcal{J}$ do not reduce the values obtained by the base agents and the separator agents. 

Let $B$ denote the subset of the cake $[0,14r+1] $ that has been already allocated to the separator and base agents under $\mathcal{I}$, i.e., $B \coloneqq \bigcup_{i=1}^{r} (I_{z_i} \cup I_{s_i} \cup I_{s'_i}$). Write $B^{\mathsf{c}}$ to denote $[0,14r+1] \setminus B$. For each clause agent $a_j$ (with $j \in [m]$)  set interval $I_{a_j} \coloneqq J_{a_j} \cap B^{\mathsf{c}}$. {Note that $J_{a_j} \cap B^{\mathsf{c}}$ is indeed an interval.} From the construction of $\mathcal{C}(\phi)$, we know that clause agents do not value any interval contained in $B$. Hence, $v_{a_j}(I_{a_j}) = v_{a_j}(J_{a_j})$ for all $j \in [m]$. Similarly, for the auxiliary agent, set $I_d \coloneqq J_d \cap G$ (for the auxiliary interval $G$, we have $ G \subset B^{\mathsf{c}}$) and note that $v_d(I_d) = v_d(J_d)$. Therefore, $NSW(\mathcal{I}) \geq NSW(\mathcal{J}$) and, hence, $\mathcal{I}$ is a Nash optimal allocation that satisfies all the properties stated in the claim. 
\end{proof}

\subsection{Proof of Claim \ref{claim:badAgents}}
\label{section:proof-of-claim-2}

We consider this claim when the given Boolean formula $\phi$ satisfies the NO case. Here, $\mathcal{I}=\{I_a\}_a$ is a Nash optimal allocation in the cake-division instance $\mathcal{C}(\phi)$ and $\mathcal{I}$ satisfies the properties in Claim \ref{claim:structure}. Also, recall that $T \coloneqq \{i \in [r] \mid I_{z_i} = e^i_1 \text{ or } I_{z_i} = e^i_8\}$ and $P \coloneqq \{j \in [m] \mid I_{a_j} \subset G\}$, with $t \coloneqq |T|$ and $p \coloneqq |P|$.

\ClaimBadAgents*
 
\begin{proof} 
 Using $\mathcal{I}$, we will define a different allocation, denoted by $\mathcal{L}=\{L_a\}_a$, which will be used to extract an assignment to the Boolean variables in $\phi$. 

For all $i \in [r]$, set $L_{s_i} \coloneqq I_{s_i}$ and $L_{s'_i} \coloneqq I_{s'_i}$. In addition, for $i \notin T$, set $L_{z_i} = I_{z_i}$. On the other hand, for $i \in T$, set $L_{z_i} \coloneqq \bigcup_{k=1}^6 e^i_k$ if $I_{z_i} = e^i_1$ and set $L_{z_i} \coloneqq \bigcup_{k=8}^{13} e^i_k$ if $I_{z_i} = e^i_8$. That is, for every base agent indexed in $T$, we enlarge the interval $I_{z_i}$ to cover the entire positive part of $H_i$ or the entire negative part of $H_i$. Also, for the auxiliary agent $d$, assign $L_d \coloneqq G$. 

Using this, we consider the set of clause agents $a_j$s whose initial allocations, $I_{a_j}$s, are now contained in the enlarged allocations of the base agents; specifically, write $S \coloneqq \{j \in [m] \mid \text{ there exists } i \in T \text{ such that } I_{a_j} \subset L_{z_i}\}$. Note that, for every $j \in S$, we have an intersection $I_{a_j} \cap L_{z_i} \neq \emptyset$, for some base agent $z_i$. Also, following the definition of $P$, we get $I_{a_j} \cap L_d \neq \emptyset$, for all $j \in P$. To resolve these intersections, for each $j \in S \cup P$, we set $L_{a_j} \coloneqq \emptyset$. Complementarily, for $j \notin S \cup P$, set $L_{a_j} \coloneqq I_{a_j}$. It can be verified that  $\mathcal{L}=\{L_a\}_a$ is a well-defined allocation and under it each base agent $z_i$ receives either the entire positive part or entire the negative part of $H_i$. 

Note that the clause agents indexed in $S$ are forced to receive an empty allocation and the number of such agents (i.e., $|S|$) is at most $4|T| = 4t$. This upper bound follows from the fact that, for any $i \in [r]$, the containment $I_{a_j} \subset \cup_{k=2}^5 e^i_k$  holds for at most four clause agents $a_j$s--recall that $v_{a_j}(I_{a_j}) > 0$ for all $j \in [m]$ (property (iii) in Claim~\ref{claim:structure}) and, hence, if this containment holds for a $j \in [m]$, then it must be the case that literal $x_i$ occurs in clause $C_j$. The number of occurrences of the literal $x_i$ is at most four. In other words, only the four possible clause  agents (with a nonzero value for one of the four subintervals $\{e^i_2, \ldots, e^i_5\}$) can be forced to receive an empty interval when $L_{z_i}$ is set to be the entire positive side of $H_i$ (i.e., when we set $L_{z_i} = \bigcup_{k=1}^6 e^i_k$) starting with $I_{z_i} = e^i_1$. An identical argument holds when $z_i$ is assigned the negative part of $H_i$ (i.e., under the assignment $L_{z_i} = \bigcup_{k=8}^{13} e^i_k$) starting with $I_{z_i} = e^i_8$. Therefore, for each $i \in T$, updating the allocation of base agent $z_i$ (from $I_{z_i}$ to $L_{z_i}$) results in inclusion of at most four clause agents in $S$. Hence, $|S| \leq 4|T| = 4t$. 

%Note that $v_{a_j}(I_{a_j}) > 0$ for all $j \in [m]$ (property (iv) in Claim~\ref{claim:structure}) and, hence, $I_{a_j} \subset \bigcup_{k=1}^6 e^i_k$ only if literal $x_i$ appears in clause $C_j$. Similarly, $I_{a_j} \subset \bigcup_{k=8}^{13} e^i_k$  only if literal $\overline{x_i}$ occurs in $C_j$. Combining this with the fact that, for all $i \in [r]$, $x_i$ belongs to at most $4$ clauses and $\overline{x_i}$ belongs to at most $4$ clauses, we get that $|S| \leq 4t$.

Based on $\mathcal{L}$, we define a Boolean assignment to the variables in $\phi$, $f: \{x_1,\ldots,x_r\} \to \{0,1\}$ as follows
  \[f(x_i) \coloneqq
    \begin{cases}
      1 & if\ L_{z_i} = \bigcup_{k=8}^{13} e^i_k\\
      0 & if\ L_{z_i} = \bigcup_{k=1}^6 e^i_k
   \end{cases} \]

Observe that, by construction of $\mathcal{L}$, $f(x_i)$ is defined for all $i \in [r]$ and, for all $j \notin S \cup P$, we have $v_{a_j}(L_{a_j}) > 0$. Furthermore,  observing that, for each such $j \in [m] \setminus (S \cup P)$, the interval $L_{a_j}$ does not intersect $L_{z_i}$ we obtain the following useful fact: for each $j \in [m] \setminus (S \cup P)$, the assigned interval $L_{a_j}$ is either contained in the positive part of $H_i$ with $f(x_i) = 1$ (and literal $x_i$ occurring in clause $C_j$) or $L_{a_j}$ is contained in the negative part of $H_i$ with $f(x_i) = 0$ (and literal $\overline{x}_i$ occurring in clause $C_j$). Hence, for every index $j \in [m] \setminus (S \cup P)$, the clause $C_j$ is satisfied by the assignment $f$.  

Given that we are in the NO case, the number of satisfied clauses is at most $(1-\alpha)m$. Therefore, the cardinality of the set $[m] \setminus (S \cup P)$ is at most $(1- \alpha)m$. That is, $|S| + |P| \geq \alpha m$. Using the above-mentioned upper bound on the size of the set $S$ (i.e., $|S| \leq 4t$), we obtain the desired inequality $4t + p \geq \alpha m$. 
\end{proof}

\subsection{Proof of Claim {\ref{claim:dummyNode}}}
\label{section:proof-of-claim-3}

Recall that this claim is considered when the given Boolean formula, $\phi$, satisfies the NO case. As before, $\mathcal{C}(\phi)$ is the constructed cake-division instance and $\mathcal{I}$ is a Nash optimal allocation (in $\mathcal{C}(\phi)$) which satisfies the properties in Claim~\ref{claim:structure}.  

Furthermore, $P$ denotes the (index) set of the clause agents that share the auxiliary interval $G$ with the auxiliary agent $d$; $P \coloneqq \{j \in [m] \mid I_{a_j} \subset G\}$ and $p \coloneqq |P|$. 

\ClaimDummyNode*

\begin{proof} For an upper bound, it suffices to consider the case in which, for all $j \in P$, the density of the clause agent $a_j$ within the auxiliary interval $G$ is as high as possible, i.e., for all $j \in P$, we have $v_{a_j}(G) = 2/3$ and, by construction, this value is constant within $G$.   

By symmetry, we get that the product $v_d(I_d) \ \prod_{j \in P} v_{a_j}(I_{a_j})$ is maximized when the $p$ clause agents (indexed in $P$) are allocated intervals of the same length, say $x$. Recall that the total length of the auxiliary interval $G$ is equal to one. Hence, under the constraint that the clause agents indexed in $P$ are assigned intervals of length $x$, we get that the length of the interval allocated to the auxiliary agent $d$ is equal to $1 - px$. Since the valuation (density) of agent $d$ is constant and equal to one within $G$, the value obtained by $d$ is equal to $1-px$. Therefore, 
\begin{equation}\label{equation:dummyNode}
    v_d(I_d) \ \prod_{j \in P} v_{a_j}(I_{a_j})  \leq \max_{x \in [0,1]} \ \left(\frac{2}{3}x\right)^p (1-px)
\end{equation}
Here, the optimal value is achieved at $x^* = \frac{1}{p+1}$. Therefore, the stated claim holds. 

\end{proof}

\subsection{Proof of the bound $NSW_N \leq c(\alpha) \tau$} 
\label{section:worst-subsection-title-ever}

Recall that $\tau \coloneqq \left(1^{2r} \frac{1}{2^r} \frac{1}{3^m} 1 \right)^{\frac{1}{3r+m+1}}$ and $c(\alpha) \coloneqq 2^{-\alpha/44}$. Also, from equation \eqref{equation:NOcase}, we have
\begin{equation} \label{}
    	   NSW_N = \NSW(\mathcal{I}) \leq \left(1^{2r} \frac{1}{4^t}  \frac{1}{2^{r-t}} \frac{1}{3^{m-p}} \left( \frac{2}{3} \frac{1} {p+1}\right)^p \frac{1}{p+1} \right)^{\frac{1}{3r+m+1}}
    	\end{equation}
Dividing by $\tau$ and considering the $(3r+m+1)$th power, we get
    	\begin{align}
    	\Bigg(\frac{\NSW_N}{\tau}\Bigg)^{3r+m+1} & \leq \ \frac{3^m \ 2^r}{4^t \ 2^{r-t} \ 3^{m-p} \ (p+1)^{p+1}} \ \Bigg(\frac{2}{3}\Bigg)^p \nonumber \\
    	& = \ \frac{2^{(p-t)}}{(p+1)^{p+1}} \nonumber \\
    	& \leq \ \frac{2^{(p+ \frac{p- \alpha m}{4})}}{(p+1)^{p+1}}   \tag{$4t+p \geq \alpha m$ from Claim \ref{claim:badAgents}} \\
    	& = \ \Bigg(\frac{2^{\frac{5p}{4}}}{(p+1)^{p+1}}\Bigg) \ 2^{\frac{- \alpha m}{4}} \nonumber \\
    	& \leq \ 2^{\frac{- \alpha m}{4}} \label{ineq:nsw-n}
    	\end{align}
Here, the last inequality follows from the fact that $g(x) \coloneqq \frac{2^{\frac{5x}{4}}}{(x+1)^{x+1}}$ is a non-increasing function of $x \geq 0$ and $g(0) =1$. Additionally, the inequality $2^{\frac{- \alpha m}{4}} \leq 1$ gives us $\frac{\NSW_N}{\tau} \leq 1$.

Recall that in any instance of the $\textrm{Gap 3-SAT-5}$, the number of variables is at most three times the number of clauses, $r \leq 3m$. Therefore,

\begin{align*}
    \left(\frac{\NSW_N}{\tau} \right)^{11m} & \leq \left(\frac{\NSW_N}{\tau} \right)^{3r+m+1} \tag{since $\frac{\NSW_N}{\tau} \leq 1$ and $r \leq 3m$} \\ 
    & \leq 2^{\frac{- \alpha m}{4}} \tag{via \eqref{ineq:nsw-n}}
\end{align*}
Exponentiating both sides of the previous inequality to the power $\frac{1}{11m}$, we obtain the desired inequality $NSW_N \leq c(\alpha) \ \tau$.

\end{document}